\newif\ifslow
\DeclareMathOperator{\topp}{top}
\DeclareMathOperator{\best}{best}
\providecommand{\pref}{\pi}
\providecommand{\nessun}{\bot} 
\providecommand{\interval}{J}
\providecommand{\MM}{\mathcal{M}}
\providecommand{\WW}{\mathcal{W}}
\DeclareMathOperator{\Bit}{bit}
\DeclareMathOperator{\Par}{parity}
\DeclareMathOperator{\dom}{dom}
\begin{document}


%


\newcommand{\bra}[1]{\left\langle{#1}\right\vert}
\newcommand{\bit}[1]{{#1}}
\newcommand{\qw}[1][-1]{\ar @{-} [0,#1]}
\newcommand{\qwx}[1][-1]{\ar @{-} [#1,0]}
\newcommand{\cw}[1][-1]{\ar @{=} [0,#1]}
\newcommand{\cwx}[1][-1]{\ar @{=} [#1,0]}
\newcommand{\gate}[1]{*{\xy *+<.6em>{#1};p\save+LU;+RU **\dir{-}\restore\save+RU;+RD **\dir{-}\restore\save+RD;+LD **\dir{-}\restore\POS+LD;+LU **\dir{-}\endxy} \qw}
\newcommand{\meter}{\gate{\xy *!<0em,1.1em>h\cir<1.1em>{ur_dr},!U-<0em,.4em>;p+<.5em,.9em> **h\dir{-} \POS <-.6em,.4em> *{},<.6em,-.4em> *{} \endxy}}
\newcommand{\measure}[1]{*+[F-:<.9em>]{#1} \qw}
\newcommand{\measuretab}[1]{*{\xy *+<.6em>{#1};p\save+LU;+RU **\dir{-}\restore\save+RU;+RD **\dir{-}\restore\save+RD;+LD **\dir{-}\restore\save+LD;+LC-<.5em,0em> **\dir{-} \restore\POS+LU;+LC-<.5em,0em> **\dir{-} \endxy} \qw}
\newcommand{\measureD}[1]{*{\xy*+=+<.5em>{\vphantom{\rule{0em}{.1em}#1}}*\cir{r_l};p\save*!R{#1} \restore\save+UC;+UC-<.5em,0em>*!R{\hphantom{#1}}+L **\dir{-} \restore\save+DC;+DC-<.5em,0em>*!R{\hphantom{#1}}+L **\dir{-} \restore\POS+UC-<.5em,0em>*!R{\hphantom{#1}}+L;+DC-<.5em,0em>*!R{\hphantom{#1}}+L **\dir{-} \endxy} \qw}
\newcommand{\multimeasure}[2]{*+<1em,.9em>{\hphantom{#2}} \qw \POS[0,0].[#1,0];p !C *{#2},p \drop\frm<.9em>{-}}
\newcommand{\multimeasureD}[2]{*+<1em,.9em>{\hphantom{#2}}\save[0,0].[#1,0];p\save !C *{#2},p+LU+<0em,0em>;+RU+<-.8em,0em> **\dir{-}\restore\save +LD;+LU **\dir{-}\restore\save +LD;+RD-<.8em,0em> **\dir{-} \restore\save +RD+<0em,.8em>;+RU-<0em,.8em> **\dir{-} \restore \POS !UR*!UR{\cir<.9em>{r_d}};!DR*!DR{\cir<.9em>{d_l}}\restore \qw}
\newcommand{\control}{*!<0em,.025em>-=-{\bullet}}
\newcommand{\controld}{*!<0em,.025em>-=-{\blacktriangledown}}
\newcommand{\controlu}{*!<0em,.025em>-=-{\blacktriangle}}
\newcommand{\controlo}{*-<.21em,.21em>{\xy *=<.59em>!<0em,-.02em>[o][F]{}\POS!C\endxy}}
\newcommand{\ctrl}[1]{\control \qwx[#1] \qw}
\newcommand{\ctrld}[1]{\controld \qwx[#1] \qw}
\newcommand{\ctrlu}[1]{\controlu \qwx[#1] \qw}
\newcommand{\ctrlo}[1]{\controlo \qwx[#1] \qw}
\newcommand{\targ}{*!<0em,.019em>=<.79em,.68em>{\xy {<0em,0em>*{} \ar @{ - } +<.4em,0em> \ar @{ - } -<.4em,0em> \ar @{ - } +<0em,.36em> \ar @{ - } -<0em,.36em>},<0em,-.019em>*+<.8em>\frm{o}\endxy} \qw}
\newcommand{\qswap}{*=<0em>{\times} \qw}
\newcommand{\multigate}[2]{*+<1em,.9em>{\hphantom{#2}} \qw \POS[0,0].[#1,0];p !C *{#2},p \save+LU;+RU **\dir{-}\restore\save+RU;+RD **\dir{-}\restore\save+RD;+LD **\dir{-}\restore\save+LD;+LU **\dir{-}\restore}
\newcommand{\ghost}[1]{*+<1em,.9em>{\hphantom{#1}} \qw}
\newcommand{\push}[1]{*{#1}}
\newcommand{\gategroup}[6]{\POS"#1,#2"."#3,#2"."#1,#4"."#3,#4"!C*+<#5>\frm{#6}}
\newcommand{\rstick}[1]{*!L!<-.5em,0em>=<0em>{#1}}
\newcommand{\lstick}[1]{*!R!<.5em,0em>=<0em>{#1}}
\newcommand{\ustick}[1]{*!D!<0em,-.5em>=<0em>{#1}}
\newcommand{\dstick}[1]{*!U!<0em,.5em>=<0em>{#1}}
\newcommand{\Qcircuit}[1][0em]{\xymatrix @*[o] @*=<#1>}
\newcommand{\node}[2][]{{\begin{array}{c} \ _{#1}\  \\ {#2} \\ \ \end{array}}\drop\frm{o} }
\newcommand{\link}[2]{\ar @{-} [#1,#2]}
\newcommand{\pureghost}[1]{*+<1em,.9em>{\hphantom{#1}}}
   \renewcommand{\Qcircuit}[1][0em]{\xymatrix @*=<#1>}
\newcommand{\defref}[1]{Definition~\ref{def:#1}}
\newcommand{\theoref}[1]{Theorem~\ref{theo:#1}}
\newcommand{\propref}[1]{Proposition~\ref{prop:#1}}
\newcommand{\corref}[1]{Corollary~\ref{cor:#1}}
\newcommand{\lemref}[1]{Lemma~\ref{lem:#1}}
\newcommand{\exref}[1]{Example~\ref{ex:#1}}
\newcommand{\reref}[1]{Remark~\ref{re:#1}}
\newcommand{\eref}[1]{(\ref{eq:#1})}
\newcommand{\aref}[1]{Appendix~\ref{a:#1}}
\newcommand{\secref}[1]{Section~\ref{sec:#1}}
\newcommand{\figref}[1]{Fig.~\ref{fig:#1}}
\newcommand{\algoref}[1]{Algorithm~\ref{algo:#1}}

\newcommand{\proves}{\vdash}
\newcommand{\HRule}{\rule{\linewidth}{0.5mm}}
\newcommand{\calf}[1]{\mathcal{#1}}
\newcommand{\set}[1]{\{#1\}}
\newcommand{\bset}[1]{\bigl\{#1\bigr\}}
\newcommand{\It}[1]{\mathsf{#1}}
\newcommand{\MIt}[1]{\mathit{#1}}
\newcommand{\seq}[1]{\langle #1\rangle}
\newcommand{\colq}[1]{\begin{array}{c}#1\end{array}}
\newcommand*\circled[1]{\tikz[baseline=(char.base)]{
            \node[shape=circle,draw,inner sep=2pt] (char) {#1};}}

\newcommand{\fxpset}{\Phi_{\mathsf{fxp}}}
\newcommand{\smset}{\Phi_{\mathsf{sm}}}

\newcommand{\lfmm}{\delta_{\It{LFMM}}}
\newcommand{\ccv}{\delta_{\It{CCV}}}
\newcommand{\dnum}{\delta_{\It{NUM}}}
\newcommand{\conn}{\delta_{\It{CONN}}}
\newcommand{\lfmp}{\delta_{\It{LFMP}}}
\newcommand{\mfv}{\delta_{\It{MFV}}}

\makeatletter
\newcommand{\rmnum}[1]{\romannumeral #1}
\newcommand{\Rmnum}[1]{\expandafter\@slowromancap\romannumeral #1@}
\makeatother

\newcommand{\mred}{\le_{\mathsf{m}}}

\newcommand{\todo}[1]{ \textcolor{red}{TODO: #1}}
\newcommand{\co}[1]{ \textcolor{red}{#1}}

\newcommand{\M}{\calf{M}}
\newcommand{\Pair}{\mathsf{Pair}}
 \newcommand{\lra}{\leftrightarrow}
\newcommand{\ra}{\rightarrow}

\newcommand{\VPV}{\textit{VPV}}
\newcommand{\PV}{\textit{PV}}

\newcommand{\NP}{\mathsf{NP}}
\newcommand{\FP}{\mathsf{FP}}
\newcommand{\PH}{\mathsf{PH}}
\newcommand{\C}{\mathsf{C}}
\newcommand{\CC}{\mathsf{CC}}
\newcommand{\SucCC}{\mathsf{SucCC}}
\newcommand{\PSPACE}{\mathsf{PSPACE}}
\newcommand{\EXPTIME}{\mathsf{EXPTIME}}
\newcommand{\CCnot}{\CC\lnot}
\newcommand{\mydash}{\text{-}}
\newcommand{\CCACMO}{\AC^0\mydash\CCV}
\newcommand{\CCNLMO}{\NL\mydash\CCV}
\newcommand{\CCACT}{(\AC^0)^{\CCV}}
\newcommand{\APC}{\mathsf{APC}}
\newcommand{\CCall}{\mathsf{CC}_{\mathsf{all}}}
\newcommand{\CCSubr}{\mathsf{CC}^{\mathsf{Subr}}}
\newcommand{\CCstar}{\mathsf{CC}^\ast}
\newcommand{\Class}{\mathsf{C}}
\newcommand{\FCC}{\mathsf{FCC}}
\newcommand{\FCCSubr}{\mathsf{FCC}^{\mathsf{Subr}}}
\newcommand{\FCCstar}{\mathsf{FCC}^\ast}
\newcommand{\FC}{\mathsf{FC}}
\newcommand{\AC}{\mathsf{AC}}
\newcommand{\FAC}{\mathsf{FAC}}
\newcommand{\DET}{\mathsf{DET}}
\newcommand{\RDET}{\mathsf{RDET}}
\newcommand{\NC}{\mathsf{NC}}
\newcommand{\FO}{\mathsf{FO}}
\newcommand{\FNC}{\mathsf{FNC}}
\newcommand{\FNL}{\mathsf{FNL}}
\newcommand{\FRNC}{\mathsf{FRNC}}
\newcommand{\NL}{\mathsf{NL}}
\renewcommand{\L}{\mathsf{L}}
\newcommand{\RsL}{\mathsf{R}\#\mathsf{L}}
\newcommand{\TC}{\mathsf{TC}}
\newcommand{\SL}{\mathsf{SL}}
\newcommand{\ZPLP}{\mathsf{ZPLP}}
\newcommand{\ZPL}{\mathsf{ZPL}}
\newcommand{\RP}{\mathsf{RP}}
\newcommand{\RL}{\mathsf{RL}}
\newcommand{\coRP}{\mathsf{coRP}}
\newcommand{\coRL}{\mathsf{coRL}}
\newcommand{\BPL}{\mathsf{BPL}}
\newcommand{\RNC}{\mathsf{RNC}}
\newcommand{\coRNC}{\mathsf{coRNC}}
\newcommand{\RAC}{\mathsf{RAC}}
\renewcommand{\P}{\mathsf{P}}
\newcommand{\Ppoly}{\mathsf{P}/\mathsf{poly}}
\newcommand{\SC}{\mathsf{SC}}
\newcommand{\LogCFL}{\mathsf{LogCFL}}

\newcommand{\VCC}{\;\mathsf{VCC}}
\newcommand{\VAC}{\mathsf{VAC}}
\newcommand{\VNC}{\mathsf{VNC}}
\newcommand{\VNL}{\mathsf{VNL}}
\newcommand{\VL}{\mathsf{VL}}
\newcommand{\VP}{\mathsf{VP}}
\newcommand{\VTC}{\mathsf{VTC}}
\newcommand{\VSL}{\mathsf{VSL}}
\newcommand{\VsL}{\mathsf{V\#L}}
\newcommand{\LFP}{\calf{L}_{\FP}}
\newcommand{\VC}{\mathsf{VC}}
\newcommand{\V}{\mathsf{V}}
\newcommand{\CH}{\it{CH}}

\newcommand{\LMAX}{\mathit{LMAX}}
\newcommand{\LMIN}{\mathit{LMIN}}
\newcommand{\LIND}{\mathit{LIND}}
\newcommand{\PIND}{\mathit{PIND}}
\newcommand{\IND}{\textit{IND}}
\newcommand{\PHP}{\mathsf{PHP}}
\newcommand{\NUMO}{\textit{NUMONES}}
\newcommand{\MFV}{\textit{MFV}}
\newcommand{\MCV}{\textit{MCV}}
\newcommand{\CONN}{\textit{CONN}}
\newcommand{\CCVA}{\textit{AXccv}}
\newcommand{\CCVB}{\It{CCV}}

\newcommand{\SMP}{ \textmd{\textsc{Sm}}}
\newcommand{\MOSM}{ \textmd{\textsc{MoSm}}}
\newcommand{\WOSM}{ \textmd{\textsc{WoSm}}}
\newcommand{\BSVP}{ \textmd{\textsc{Bsvp}}}
\newcommand{\LFMM}{ \textmd{\textsc{Lfmm}}}
\newcommand{\TLFMM}{ \textmd{\textsc{3Lfmm}}}
\newcommand{\CCV}{ \textmd{\textsc{Ccv}}}
\newcommand{\MCVP}{ \textmd{\textsc{Mcvp}}}
\newcommand{\XNS}{ \textmd{\textsc{Xns}}}
\newcommand{\TCV}{ \textmd{\textsc{Three-valued Ccv}}}

\newcommand{\CCVN}{ \textmd{\textsc{Ccv}$\neg$}}
\newcommand{\VLFMM}{ \textmd{\textsc{vLfmm}}}
\newcommand{\TVLFMM}{ \textmd{\textsc{3vLfmm}}}
\newcommand{\UNIV}{ \mathsf{UNIV}}
\newcommand{\INPUT}{ \mathsf{INPUT}}
\newcommand{\IN}{ \mathsf{IN}}
\newcommand{\INP}{ \mathsf{INP}}
\newcommand{\CIRCUIT}{\mathsf{CIRCUIT}}
\newcommand{\CIR}{\mathsf{CIR}}

\newcommand{\NN}{\mathbb{N}}
\newcommand{\ZZ}{\mathbb{Z}}
\newcommand{\EE}{\mathbb{E}}
\providecommand{\bbit}[2]{\Bit\left(#1,#2\right)}
\newcommand{\bina}[1]{\llbracket #1 \rrbracket}
\providecommand{\vc}[1]{\uppercase{#1}}
\providecommand{\vczero}{\mathbf{0}}
\providecommand{\rep}[2]{#1^{(#2 \text{ times})}}
\providecommand{\len}[1]{|#1|}
\providecommand{\ham}[1]{\|#1\|}
\providecommand{\parity}[1]{\Par(#1)}
\providecommand{\ZZ}{\mathbb{Z}}
\providecommand{\GG}{\mathbb{G}}
\providecommand{\mT}{T_{\max}}
\providecommand{\conc}{\#}

\numberwithin{equation}{section}

\title{The Complexity of the Comparator Circuit Value Problem}
\author[1]{Stephen A. Cook}
\author[1]{Yuval Filmus}
\author[1]{Dai Tri Man L\^e}
\authorrunning{S.A. Cook, Y. Filmus, and D.T.M. L\^e}
\affil[1]{Department of Computer Science, University of Toronto\\ \{sacook,yuvalf,ledt\}@cs.toronto.edu}



\maketitle
\begin{center}
\fbox{Date: \textit{\today}}  
\end{center}

\begin{abstract}
In 1990 Subramanian defined the complexity class $\CC$ as the set of
problems log-space reducible to the comparator circuit value problem
($\CCV$).  He and Mayr showed that $\NL \subseteq \CC \subseteq \P$, and
proved that in addition to $\CCV$ several other problems are complete for
$\CC$, including the stable marriage problem, and finding the
lexicographically first maximal matching in a bipartite graph.
Although the class has not received much attention since then, we are
interested in $\CC$ because we conjecture that it is incomparable with the
parallel class $\NC$ which also satisfies $\NL \subseteq \NC \subseteq \P$,
and note that this conjecture implies that none of the $\CC$-complete
problems has an efficient polylog time parallel algorithm.  We
provide evidence for our conjecture by giving oracle settings
in which relativized $\CC$ and relativized $\NC$ are incomparable.

We give several alternative definitions of $\CC$, including (among others) the class of problems computed by uniform polynomial-size families of comparator circuits supplied with copies of the input and its negation, the class of problems $\AC^0$-reducible to $\CCV$, and the class of problems computed by uniform $\AC^0$ circuits with $\CCV$ gates. We also give a machine model for $\CC$, which corresponds to its characterization as log-space uniform polynomial-size families of comparator circuits. These various characterizations show that $\CC$ is a robust class. Our techniques also show that the corresponding function class $\FCC$ is closed under composition. The main technical tool we employ is universal comparator circuits. 


Other results include a simpler proof of $\NL \subseteq \CC$, a more
careful analysis showing the lexicographically first maximal matching
problem and its variants are $\CC$-complete under $\AC^0$ many-one
reductions, and an explanation of the relation between the Gale-Shapley
algorithm and Subramanian's algorithm for stable marriage.

This paper continues the previous work of Cook, L\^e and Ye which focused
on Cook-Nguyen style uniform proof complexity, answering several open
questions raised in that paper.
\end{abstract}

\section{Introduction} \label{s:intro}

Comparator circuits are sorting networks \cite{Bat68} in
which the wires carry Boolean values. A comparator circuit is presented as a set of $m$ horizontal lines, which we call \emph{wires}. The left end of each wire is annotated by either a Boolean constant, an input variable, or a negated input variable, and one wire is designated as the output wire. In between the wires there is a sequence of comparator gates, each represented as a vertical arrow connecting some wire $w_{i}$ with some wire $w_{j}$, as shown in~\figref{f0}.  These arrows divide each wire into segments, each of which gets a Boolean value. The values of wires $w_i$ and $w_{j}$ after the arrow are the maximum and the minimum of the values of wires $w_{i}$ and $w_{j}$ right before the arrow, with the tip of the arrow pointing at the position of the maximum. Every wire is initialized (at its left end) by the annotated value, and the output of the circuit is the value at the right end of the output wire.  Thus comparator circuits are essentially Boolean
circuits in which the gates have restricted fanout.

Problems computed by uniform polynomial size families of comparator
circuits form the complexity class $\CC$, defined by
Subramanian~\cite{Sub90} in a slightly different guise. Mayr and Subramanian~\cite{MS92} showed that $\NL \subseteq \CC \subseteq \P$ (where $\NL$
is nondeterministic log space), and gave several complete problems for the class, including the stable marriage problem ($\SMP$) and lexicographically first maximal matching ($\LFMM$). Known algorithms for these problems are inherently sequential, and so they conjectured that $\CC$ is incomparable with $\NC$, the class of problems computable in polylog parallel time.  (For the other direction, it is conjectured that the $\NC^2$ problem of raising an $n\times n$ matrix to the $n$th power is not in $\CC$.) Furthermore, they proposed that $\CC$-hardness be taken as evidence that a problem is not parallelizable.

Since then, other problems have been shown to be $\CC$-complete: the stable roommate problem \cite{Sub94}, the telephone connection problem \cite{RW91}, the problem of predicting internal diffusion-limited aggregation clusters from theoretical physics \cite{MM00}, and the decision version of the hierarchical clustering problem \cite{GK08}. The maximum weighted matching problem has been shown to be $\CC$-hard~\cite{GHR95} 
The fastest known parallel algorithms for some $\CC$-complete problems are listed in~\cite[\S B.8]{GHR95}.

\begin{figure}
\begin{small}
  \begin{center}
    $\Qcircuit @C=1.4em @R=0.5em {
\push{\bit{1}}&\push{x_{0}}	& \ctrl{2}	&\push{\,0\,}\qw& \qw 	&\qw
& \ctrl{3}&\push{\,0\,}\qw 	& \qw&   \rstick{\bit{0}=f(x_0,x_1,x_2)} \qw \\
\push{\bit{1}}&\push{x_{1}}	&\qw		&\qw		&\ctrl{2} 	&\push{\,0\,}\qw
&\qw&\qw  	&\ctrlu{1}&   \rstick{\bit{1}} \qw\\
\push{\bit{1}}&\push{x_{2}}	&\qw		&\qw		&\qw  	&\qw
&\qw&\qw  	&\qw &   \rstick{\bit{1}} \qw\\
\push{\bit{0}}&\push{\lnot x_{0}}	& \ctrld{-1} &\push{1}\qw& \qw  &\qw
&\qw& \qw 	&\ctrl{-1}&    \rstick{\bit{0}} \qw \\
\push{\bit{0}}&\push{\lnot x_{1}}	& \qw 	&\qw		& \ctrld{-1} &\push{\,1\,}\qw
&\qw& \qw & \qw&    \rstick{\bit{1}} \qw \\
\push{\bit{0}}&\push{\lnot x_{2}}	& \qw 	&\qw		& \qw 	&\qw
& \ctrld{-3} &\push{\,0\,}\qw& \qw&    \rstick{\bit{0}} \qw 
}$
  \end{center}
\end{small}
  \caption{}
  \label{fig:f0}
\end{figure}

\subsection{Our results}

We have two main results. First, we give several new characterizations of the class $\CC$, thus showing that it is a robust class. Second, we give an oracle separation between $\CC$ and $\NC$, thus providing evidence for the conjecture that the two classes are incomparable.

\subsubsection{Characterizations of $\CC$} Subramanian~\cite{Sub90} defined $\CC$ as the class of all languages log-space reducible to the comparator circuit value problem ($\CCV$), which is the following problem: given a comparator circuit with specified Boolean inputs, determine the output value of a designated wire. Cook, L\^e and Ye~\cite{LCY11,CLY11} considered two other classes, one consisting of all languages $\AC^0$-reducible to $\CCV$, and the other consisting of all languages $\AC^0$-Turing-reducible to $\CCV$, and asked whether these classes are the same as $\CC$. We answer this in the affirmative, and furthermore give characterizations of $\CC$ in terms of uniform circuits. Our work gives the following equivalent characterizations of the class $\CC$:

\begin{itemize}
 \item All languages $\AC^0$-reducible to $\CCV$.
 \item All languages $\NL$-reducible to $\CCV$.
 \item All languages $\AC^0$-Turing-reducible to $\CCV$.
 \item Languages computed by uniform families of comparator circuits. (Here \emph{uniform} can be either $\AC^0$-uniform or $\NL$-uniform.)
 \item Languages computed by uniform families of comparator circuits with inverter gates.
\end{itemize}

This shows that the class $\CC$ is robust. In addition, we show that corresponding function class $\FCC$ is closed under composition. The key novel notion for all of these results is our introduction of \emph{universal comparator circuits}.

The characterization of $\CC$ as uniform families of comparator circuits also allows us to define $\CC$ via certain Turing machines with an implicit access to the input tape.

\subsubsection{Oracle separations} 
Mayr and Subramanian~\cite{MS92} conjectured that $\CC$ and $\NC$ are incomparable. We provide evidence supporting this conjecture by separating the relativized versions of these classes, in which the circuits have access to oracle gates. Our techniques also separate the relativized versions of $\CC$ and $\SC$, where $\SC$ is the class of problems which can be solved simultaneously in polynomial time and polylog space. In fact, it appears that the three classes $\NC$, $\CC$ and $\SC$ are pairwise incomparable. In particular, although $\NL$ is a subclass of both $\NC$ and $\CC$, it is unknown whether $\NL$ is a subclass of $\SC$.

\subsubsection{Other results} The classical Gale-Shapley algorithm for the stable marriage problem~\cite{GS62} cannot be implemented as a comparator circuit. Subramanian~\cite{Sub94} devised a different fixed-point algorithm which shows that the problem is in $\CC$. We prove an interpretation of his algorithm that highlights its connection to the Gale-Shapley algorithm.

Another fixed-point algorithm is Feder's algorithm for directed reachability (described in Subramanian~\cite{Sub90}), which shows that $\NL \subseteq \CC$. We interpret this algorithm as a form of depth-first search, thus simplifying its presentation and proof.

\subsection{Background}

A fundamental problem in theoretical computer science asks whether every feasible problem can be solved efficiently in parallel. Formally, is $\NC = \P$? It is widely believed that the answer is negative, that is there are some feasible problems which cannot be solved efficiently in parallel. One class of examples consists of $\P$-complete problems, such as the \emph{circuit value problem}. $\P$-complete problems play the same role as that of $\NP$-complete problems in the study of problems solvable in polynomial time: 
\begin{center}
If $\NC \neq \P$, then $\P$-complete problems are not in $\NC$.
\end{center}

With the goal of understanding which  "tractable" problems cannot be solved efficiently in parallel, during the 1980's researchers came up with a host of $\P$-complete problems. Researchers were particularly interested in what aspects of a problem make it inherently sequential. Cook~\cite{Coo85} came up with one such aspect: while the problem of finding a maximal clique in a graph is in $\NC$~\cite{KW85}, if we require the maximal clique to be the one computed by the greedy algorithm (lexicographically first maximal clique), then the problem becomes $\P$-complete. His proof uses a straightforward reduction from the monotone circuit value problem, which is a $\P$-complete restriction of the circuit value problem~\cite{Gol77}.

Anderson and Mayr~\cite{AM87} continued this line of research by showing the $\P$-completeness of other problems asking for maximal structures computed by greedy algorithms, such as lexicographically first maximal path. However, one problem resisted their analysis, lexicographically first maximal matching ($\LFMM$), which is the same as lexicographically first independent set in line graphs. While trying to adapt Cook's proof to the case of line graphs, they encountered difficulties simulating fanout. They conjectured that the problem is not $\P$-complete.

Unbeknownst to Anderson and Mayr, Cook had encountered the same problem in 1983 (while working on his paper~\cite{Coo85}), and was able to come up with a reduction from the comparator circuit value problem ($\CCV$) to $\LFMM$. The hope was that like other variants of the circuit value problem such as the monotone circuit value problem and the planar circuit value problem, $\CCV$ would turn out to be $\P$-complete. (The planar monotone circuit value problem, however, is in $\NC$~\cite{Yan91,DK95,RY96}.)

The foregoing prompted Mayr's student Ashok Subramanian to study the relative strength of variants of the circuit value problem restricted by the set of allowed gates. Specifically, Subramanian was interested in the significance of fanout. To that end, Subramanian considered circuits without fanout, but instead allowed multi-output gates such as the COPY gate which takes one input $x$ and outputs two copies of $x$.

Subramanian classified the relative strength of the circuit value problem when the given set of gates can be used to simulate COPY: there are seven different cases, and in each of them the circuit value problem is either in $\NC$ or $\P$-complete. The case when the given set of gates cannot simulate COPY is more interesting: if all gates are monotone then the circuit value problem is either in $\NC$ or $\CC$-hard~\cite[Corollary 5.22]{Sub90}; for the non-monotone case there is no complete characterization. The class $\CC$ thus emerges as a natural ``minimal'' class above $\NC$ for the monotone case.

Cook, L\^e and Ye~\cite{CLY11,LCY11} constructed a uniform proof theory $\VCC$ (in the style of Cook and Nguyen~\cite{CN10}) which corresponds to~$\CC$, and showed that Subramanian's results are formalizable in the theory. The present paper answers several open questions raised in~\cite{CLY11,LCY11}.

\subsubsection*{Paper organization}

%

In Section~\ref{s:prelim} we introduce definitions of basic concepts,
including comparator circuits and the comparator circuit value problem. We also introduce several equivalent definitions of $\CC$, in terms of uniform comparator circuits, and in terms of problems reducible (in various ways) to the comparator circuit value problem.

In Section~\ref{s:universal} we construct universal comparator circuits. These are comparator circuits which accept as input a comparator circuit $C$ and an input vector $Y$, and compute the output wires of $C$ when run on input $Y$. As an application, we prove that the various definitions of $\CC$ given at the end of the preceding section are indeed equivalent, and we provide yet another definition in terms of restricted Turing machines.

In Section \ref{s:oracleSep} we define a notion of relativized $\CC$
and prove that relativized $\CC$ is incomparable with relativized
$\NC$. This of course implies that relativized $\CC$ is strictly contained 
in relativized $\P$.  We also argue that $\CC$ and $\SC$ might be incomparable.

In Section \ref{sec:lfcom} we prove that the lexicographically
first maximal matching problem and its variants are complete
for $\CC$ under $\AC^0$ many-one reductions. (Two of the present authors
made a similar claim in \cite{LCY11}, but that proof works for
log space reductions rather than $\AC^0$ reductions.)

In Section \ref{sec:stable} we show that the stable marriage problem is complete for $\CC$, using Subramanian's algorithm~\cite{Sub90,Sub94}. We show that Subramanian's fixed-point algorithm, which uses three-valued logic, is related to the Gale-Shapley algorithm via an intermediate \emph{interval algorithm}. The latter algorithm also explains the provenance of three-valued logic: an interval partitions a person's preference list into \emph{three} parts, so we use three values $\set{0,\ast,1}$ to encode these three parts of a preference list.

Section \ref{conclusion} includes some open problems and our concluding remarks.

In Appendix \ref{NLinCC} we include a simple proof that $\CC$ contains $\NL$. 


\section{Preliminaries}\label{s:prelim}
 \subsection{Notation}
 We use lower case letters, e.g. $x,y,z,\ldots$, to denote unary arguments and upper case letters, e.g. $X,Y,Z,\ldots$, to denote binary string arguments. For a binary string $X$,  we write $|X|$ to denote the length
 of $X$.

 \subsection{Function classes and search problems} \label{s:problems}
 A complexity class consists of relations $R(X)$, where $X$ is a binary string argument.
 Given a class of relations $\Class$, we associate a class $\FC$ of
 functions $F(X)$ with $\Class$ as follows. We require
 these functions to be $p$-bounded, i.e., $|F(X)|$ is bounded by
 a polynomial in $|X|$. Then we define $\FC$
 to consist of all
 $p$-bounded string
 functions whose bit graphs are in $\Class$.  (Here the \emph{bit graph}
 of $F(X)$ is the relation $B_F(i,X)$ which holds iff the $i$th bit of
$F(X)$ is 1.)   For all classes $\Class$ of interest here the function
class $\FC$ is closed under composition.  In particular $\FNL$ is closed
under composition because $\NL$ (nondeterministic log space)
is closed under complementation.

Most of the computational problems we consider here can be nicely
expressed as decision problems (i.e. relations), but the stable
 marriage problem is an exception, because in general a given
 instance has more than one solution (i.e. there is more than one stable
 marriage).
 Thus the problem is properly described as a search problem.  A
 {\em search problem} $Q_R$ is a multivalued function with graph
 $R(X,Z)$, so  $Q_R(X) = \bset{Z\mid R(X,Z)}$.

 The search problem is {\em total} if the set $Q_R(X)$
 is non-empty for all $X$.  The search problem is a
 {\em function problem} if $|Q_R(X)|=1$  for all $X$.
 A function $F(X)$ {\em solves} $Q_R$ if  $F(X)\in Q_R(X)$ for all $X$.
 We will be concerned only with total search problems in this paper.

\subsection{Reductions} \label{sec:reductions}

 Let $\Class$ be a complexity class. A relation $R_1(X)$ is $\Class$ many-one reducible
 to a relation $R_2(Y)$ (written $R_1\mred^\Class R_2) $ if there is a function $F$ in $\FC$ such that
 $R_1(X) \lra R_2(F(X)).$

 A search problem $Q_{R_1}(X)$ is $\Class$ many-one reducible to
 a search problem $Q_{R_2}(Y)$ if there are functions
 $G,F$ in $\FC$ such that
 $G(X,Z) \in Q_{R_1}(X)$ for all $Z\in Q_{R_2}(F(X))$.

Recall that problems in $\AC^0$ are computed by uniform families of
polynomial size constant depth circuits.  (See~\cite{BIS90} for
many equivalent definitions, including `First Order definable'.)
We are interested in $\AC^0$ many-one reducibility, but also in the
more general notion of $\AC^0$ Turing reducibility.  Such a reduction
is given by a uniform family of $\AC^0$ circuits with oracle gates.
We note that standard small complexity classes including
 $\AC^{0}$, $\TC^0$, $\NC^{1}$, $\NL$ and $\P$
are closed under $\AC^0$ Turing reductions.


\subsection{Comparator circuits} \label{s:compCirc}

A \emph{comparator gate} is a function $C\colon\set{0,1}^{2}\rightarrow \set{0,1}^{2}$ that takes an input  pair $(p,q)$ and outputs a pair $(p\wedge q,p \vee q)$. Intuitively, the first output in the pair is the smaller bit among the two input bits $p,q$, and the second output is the larger bit. \smallskip\\
\begin{minipage}{.65\textwidth}
\hspace{.5cm} We will use the graphical notation on the right to denote a comparator gate, where $x$ and $y$ denote the names of the wires, and the direction of the arrow denotes the direction to which we move the larger bit as shown in the picture. \smallskip
\end{minipage}
\begin{minipage}{.3\textwidth}
\vspace{-5mm}
\begin{center}\small 
$\Qcircuit @C=2.5em @R=1em {
\push{\bit{p}}&\push{x\,} & \ctrl{1}	& \rstick{~~~~~\bit{p\wedge q}}\qw\\
\push{\bit{q}}&\push{y\,} & \ctrld{-1} & \rstick{~~~~~\bit{p\vee q}}\qw  
}$
\end{center}
\end{minipage}

A \emph{comparator circuit} consists of $m$ wires and a sequence $(i_1,j_1),\ldots,(i_n,j_n)$ of $n$ comparator gates. We allow ``dummy'' gates of the form $(i,i)$, which do nothing. A comparator circuit computes a function
$f\colon \{0,1\}^m \to \{0,1\}^m$ in the obvious way (See Section
\ref{s:intro}). A \emph{comparator circuit with negation gates} additionally has negation gates $N\colon \set{0,1} \to \set{0,1}$ which invert their input.

An \emph{annotated comparator circuit} consists of a comparator circuit with a distinguished output wire together with an annotation of the input of each wire by an input bit $x_i$, the negation of an input bit $\lnot x_i$, or a constant $0$ or~$1$. A \emph{positively annotated comparator circuit} has no annotations of the type $\lnot x_i$. An annotated comparator circuit computes a function $f\colon \{0,1\}^k \to \{0,1\}$, where $k$ is the largest index appearing in an input annotation $x_k$.

An \emph{$\AC^0$-uniform} family of annotated comparator circuits is one
whose circuits are are computable by a function in $\FAC^0$.  An \emph{$\NL$-uniform} 
family of annotated comparator circuits is one whose circuits are
computable in $\FNL$.

\smallskip

Uniform annotated comparator circuits will serve as the basis of some
definitions of the complexity class $\CC$. Other definitions will be based on a complete problem, the \emph{comparator circuit value problem}.

\begin{definition}\label{d:CC} 
The comparator circuit value problem ($\CCV$) is the decision
problem:  given a comparator circuit and an assignment of bits to the
input of each wire, decide whether a designated wire outputs  one. By default,
we often let the designated wire be the $0$th wire of a circuit.
\end{definition}

Comparator circuits can have some gates pointing up, and others pointing
down.  The next result shows that in most of our proofs there is no
harm in assuming that all gates point in the same direction.

\begin{proposition}\label{prop:p1}
$\CCV$ is $\AC^0$ many-one reducible to the special case in which
all comparator gates point down (or all point up).
\end{proposition}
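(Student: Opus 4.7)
The plan is to arrange that every wire in $C'$ is modified at most once, by writing each simulated gate's two outputs onto fresh wires placed strictly below (i.e.\ with larger index than) every wire used so far; every comparator of $C'$ will then point down, and the all-up case will follow symmetrically. Concretely, given a circuit $C$ with $m$ wires, $n$ gates $(a_1,b_1),\dots,(a_n,b_n)$, and output wire $r$, I would build $C'$ on $m+2n$ wires: the first $m$ keep the annotations of $C$, while the remaining $2n$ are annotated with the constant $0$. As bookkeeping, let $\pi_k(i)$ denote the wire of $C'$ that holds the value of $C$'s wire $i$ after $k$ of the original gates have been simulated, starting from $\pi_0(i)=i$.

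For each $k=1,\dots,n$, append to $C'$ the three down gates $(\pi_{k-1}(a_k),\, m+2(k-1))$, $(\pi_{k-1}(b_k),\, m+2(k-1)+1)$, and $(m+2(k-1),\, m+2(k-1)+1)$, and then set $\pi_k(a_k)=m+2(k-1)$, $\pi_k(b_k)=m+2(k-1)+1$, leaving $\pi_k$ unchanged elsewhere. Since the two new wires start at $0$, and $\min(x,0)=0$, $\max(x,0)=x$ for every bit $x$, the first two appended gates shift $v_{a_k}$ and $v_{b_k}$ onto the new wires while zeroing their former carriers; the third gate then places $\min(v_{a_k},v_{b_k})$ on $\pi_k(a_k)$ and $\max(v_{a_k},v_{b_k})$ on $\pi_k(b_k)$. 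A straightforward induction on $k$ confirms that wire $\pi_k(i)$ of $C'$ always carries the value of $C$'s wire $i$ after the first $k$ gates, so designating $\pi_n(r)$ as the output wire of $C'$ makes $C'$ compute the same Boolean function as $C$. All three appended gates point down because, by construction, the two fresh wires have larger index than every wire used earlier in $C'$.

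The reduction lies in $\FAC^0$ because the only nontrivial ingredient is computing $\pi_{k-1}(i)$, which amounts to finding the largest $k'<k$ with $a_{k'}=i$ or $b_{k'}=i$ (and then checking which of the two equalities holds, to decide between the two fresh wires of gate $k'$). This is a standard MAX computation, hence $\AC^0$. The all-up version of the proposition follows by applying the all-down reduction and then globally reversing the wire indices in $C'$, which is trivially $\AC^0$; dummy gates (where $a_k=b_k$) may simply be skipped.

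The main obstacle I would have to dispatch is the worry that tracking the ``current location'' of each value requires composing a long chain of transpositions on the wire set, a computation that is not $\AC^0$ in general. The single-static-assignment flavor of the construction---allocating two fresh wires per gate rather than swapping existing ones---is precisely what sidesteps this difficulty: instead of a global permutation, one only queries, per wire, the most recent gate involving it, and such queries reduce to $\AC^0$ MAX operations.
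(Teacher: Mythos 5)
Your construction is correct and gives a valid $\AC^0$ reduction, but it takes a somewhat different route than the paper's. Both proofs share the same core idea — allocate fresh wires strictly below all previously used ones, so that every new comparator automatically points down — but the implementations differ. The paper, after each original gate, spends a fresh \emph{complete} row of $m$ zero-initialized wires, copies all $m$ current values down into it, and then applies the gate (or a three-gate gadget, when the gate points up) inside the new row; the ``current location'' of a value is thus implicit in the layering and no bookkeeping is needed. You instead allocate only the two output wires per gate (SSA style), transport the two operands down with two auxiliary comparators, apply one comparator between the fresh pair, and pay for the savings by having to compute the current carrier $\pi_{k-1}(i)$ of each value; your key observation — that this reduces to a per-wire ``most recent gate touching $i$'' query, an $\AC^0$-computable MAX rather than a composition of transpositions — is exactly the point that makes this leaner variant go through. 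The tradeoff is $m+2n$ wires and $3n$ gates with a MAX lookup, versus the paper's $m(n+1)$ wires and $O(mn)$ gates with trivial bookkeeping; both yield polynomial-size $\AC^0$-computable circuits, so both establish the proposition. Your argument for the all-up case (global index reversal) and your handling of dummy gates are fine.
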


See Appendix \ref{a:pointDown} for the proof.
%
%
%

The class $\CC$ has many equivalent definitions. We adopt one of them as the ``official'' definition, and all the rest are shown to be equivalent in Section~\ref{s:universal}.
\begin{definition} \label{d:classes} $\;$
\begin{itemize}
 \item $\CC$ is the class of relations computed by an $\AC^0$-uniform family of polynomial size annotated comparator circuits.
 \item $\CCnot$ is the class of relations computed by an $\AC^0$-uniform family of polynomial size positively annotated comparator circuits with negation gates.
 \item $\CCACMO$ is the class of relations $\AC^0$-many-one reducible to $\CCV$.
 \item $\CCNLMO$ is the class of relations $\NL$-many-one reducible to $\CCV$.
 \item $\CCACT$ is the class of relations $\AC^0$-Turing reducible to $\CCV$.
\end{itemize}
\end{definition}

\section{Universal comparator circuits} \label{s:universal}

An annotated comparator circuit $\UNIV_{m,n}$ is \emph{universal} if it can simulate any comparator circuit with at most $m$ wires and $n$ gates, as precisely stated in the following theorem. Here we present the first known (polynomial size) construction of such a
universal comparator circuit family. For simplicity, our universal circuit only simulates unannotated comparator circuits, but the idea extends to cover annotated comparator circuits as well.

\begin{theorem} \label{t:universal}
 There is an $\AC^0$-uniform family of annotated comparator circuits $\UNIV_{m,n}$ which satisfy the following property: for any comparator circuit $C$ having at most $m$ wires and $n$ gates, the designated output wire of $\UNIV_{m,n}$ fed with inputs $C,Y$ equals the $0$th output wire of $C$ fed with input $Y$.
\end{theorem}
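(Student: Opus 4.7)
I build $\UNIV_{m,n}$ as a cascade of $n$ stages, one per gate of $C$. The circuit maintains $m$ designated ``state wires'' whose values always match the current values of $C$'s wires; these are initialized by annotating with the bits of $Y$, and the $0$th state wire at the end gives the output. I encode $C$'s description in unary by selector bits $c_{ij}^t$ (one per pair and time step, with $c_{ij}^t=1$ iff $(i_t,j_t)=(i,j)$); any other encoding is converted to unary once at the start of the circuit using $O(nm^2\log m)$ extra comparators on annotated literals. Stage $t$ is then a fixed-order cascade of $\binom{m}{2}$ conditional-comparator gadgets, one per pair $(i,j)$, with the $(i,j)$-gadget controlled by $c_{ij}^t$ and acting on state wires $i$ and $j$. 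Since exactly one selector bit per stage is $1$, exactly one gadget per stage applies the desired comparator to the state wires; the rest act as identity.

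The heart of the construction is the conditional-comparator gadget: given data wires holding $a,b$ (dynamic, one copy each) and an annotated control wire $c$ (with freely duplicable copies of $c,\lnot c$ and of $0,1$), produce on two designated wires the pair $O_1=a\wedge(\lnot c\vee b)$ and $O_2=b\vee(c\wedge a)$. A case check gives $(O_1,O_2)=(a,b)$ when $c=0$ and $(O_1,O_2)=(a\wedge b,a\vee b)$ when $c=1$, which is exactly ``apply the comparator iff $c=1$''. To realize these monotone formulas with only one copy each of $a$ and $b$, I exploit the duplicable $c$: first split $a$ into the disjoint pieces $a\wedge c$ and $a\wedge\lnot c$ by the two comparators $(w_a,w_c)$ and then $(w_c,w_{\lnot c})$ (leaving $a\wedge c$ on $w_a$, $a\wedge\lnot c$ on $w_c$, and $1$ on $w_{\lnot c}$), and split $b$ symmetrically using fresh copies of $c,\lnot c$. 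A few more comparators then combine the four disjoint pieces $a\wedge c,a\wedge\lnot c,b\wedge c,b\wedge\lnot c$; because $c\wedge\lnot c=0$, every unwanted cross term vanishes and the desired formulas for $O_1,O_2$ fall out on two auxiliary wires. Two final comparators move $O_1,O_2$ back onto the state wires, which the preceding comparators have cleared to $0$. In total the gadget uses $O(1)$ comparators.

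\textbf{Putting it together.} The overall circuit has $O(nm^2)$ comparators and wires, each annotated by a single input literal or a constant; the structure is $\FO$-definable in $m$ and $n$, so the family is $\AC^0$-uniform in the required sense. The main obstacle I expect is the conditional-comparator gadget itself: comparator circuits forbid fan-out of dynamic values, yet the formulas for $O_1$ and $O_2$ each appear to demand two copies of $a$ and of $b$. The splitting trick using the freely duplicable control bit is what makes both formulas computable with only one copy of each data bit, and it is the one step that is genuinely specific to comparator circuits rather than borrowed from a generic circuit-simulation argument.
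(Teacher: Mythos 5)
Your proposal is correct and follows essentially the same architecture as the paper's proof: a cascade of $n$ stages, each with $\Theta(m^2)$ conditional-comparator gadgets whose controls are one-hot selector bits for $C$'s gate sequence, so that stage $t$ applies exactly $C$'s $t$-th gate to the $m$ state wires. The only substantive difference is the gadget realization — the paper uses a tighter four-wire, four-gate construction on wires annotated $b,x,y,\overline{b}$, whereas you split each data bit into its $c$-part and $\lnot c$-part using fresh annotated control wires and recombine, exploiting $c\wedge\lnot c=0$ to kill cross-terms; this costs a few more gates and auxiliary wires per gadget but is still $O(1)$, so both constructions give an $O(nm^2)$-size $\AC^0$-uniform universal circuit.
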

\begin{proof}
The key idea is a {\em gadget} consisting of
a comparator circuit with four wires and four gates which allows a
conditional application of a gate to two of its
inputs $x,y$, depending on whether a control bit $b$ is 0 or 1.  The
other two inputs are $\overline{b}$ and $b$ (see~\figref{universal}).
The gate is applied only when $b = 1$
(see~\figref{universal:proof}).

\medskip

\begin{minipage}{0.37\textwidth}
\centering
{\small
  $\Qcircuit @C=1.5em @R=0.5em {
    \push{\bit{b\vphantom{\overline{b}y'}}\;\;} & \ctrl{2}  & \qw & \ctrl{1}  & \qw & \ctrl{2}  & \rstick{\bit{0}} \qw \\
    \push{\bit{x\vphantom{\overline{b}y'}}\;\;} & \qw       & \qw & \ctrld{0} & \qw & \qw       & \rstick{\bit{x'}} \qw \\
    \push{\bit{y\vphantom{\overline{b}y'}}\;\;} & \ctrld{0} & \qw & \ctrl{1}  & \qw & \ctrld{0} & \rstick{\bit{y'}} \qw \\
    \push{\bit{\overline{b}\vphantom{ y'}}\;\;} & \qw       & \qw & \ctrld{0} & \qw & \qw       & \rstick{\bit{1}} \qw
  }$}

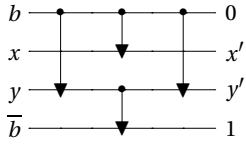
\captionof{figure}{Conditional comparator gadget}
 \label{fig:universal}
\end{minipage}
\hspace{1cm}
\begin{minipage}{0.5\textwidth}
\centering
{\small
  $\Qcircuit @C=1.2em @R=0.5em {
    \push{\bit{0\vphantom{\overline{b}y'}}\;\;} & \ctrl{2}  & \push{\bit{0}} \qw & \ctrl{1}  & \push{\bit{0}} \qw & \ctrl{2}  & \rstick{\bit{0}} \qw & \push{\qquad\quad\bit{1}\;\;} & \ctrl{2}  & \push{\bit{y}} \qw & \ctrl{1}  & \push{\bit{x\land y}} \qw & \ctrl{2}  & \rstick{\bit{0}} \qw \\
    \push{\bit{x\vphantom{\overline{b}y'}}\;\;} & \qw       & \push{\bit{x}} \qw & \ctrld{0} & \push{\bit{x}} \qw & \qw       & \rstick{\bit{x}} \qw & \push{\qquad\quad\bit{x}\;\;} & \qw       & \push{\bit{x}} \qw & \ctrld{0} & \push{\bit{x \lor y}} \qw & \qw & \rstick{\bit{x \lor y}} \qw \\
    \push{\bit{y\vphantom{\overline{b}y'}}\;\;} & \ctrld{0} & \push{\bit{y}} \qw & \ctrl{1}  & \push{\bit{y}} \qw & \ctrld{0} & \rstick{\bit{y}} \qw & \push{\qquad\quad\bit{y}\;\;} & \ctrld{0} & \push{\bit{1}} \qw & \ctrl{1}  & \push{\bit{0}} \qw & \ctrld{0} & \rstick{\bit{x\land y}} \qw \\ 
    \push{\bit{1\vphantom{\overline{b}y'}}\;\;} & \qw       & \push{\bit{1}} \qw & \ctrld{0} & \push{\bit{1}} \qw & \qw       & \rstick{\bit{1}} \qw & \push{\qquad\quad\bit{0}\;\;} & \qw       & \push{\bit{0}} \qw & \ctrld{0} & \push{\bit{1}} \qw & \qw & \rstick{\bit{1}} \qw
  }$}

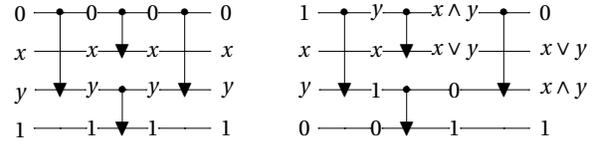
\captionof{figure}{Operation of conditional comparator gadget}
 \label{fig:universal:proof}
\end{minipage}

\smallskip

In order to simulate a single arbitrary gate in a circuit with $m$
wires we put in $m(m-1)$ gadgets in a row, for the $m(m-1)$ possible
gates. Simulating $n$ gates requires $m(m-1)n$ gadgets.
The bits of $C$ are the control bits for the gadgets.
The resulting circuit can be constructed in an $\AC^0$-uniform fashion.
\end{proof}

As a consequence, we can identify the classes $\CC$ and $\CCACMO$.

\begin{lemma} \label{l:ccacmo}
 The two complexity classes $\CC$ and $\CCACMO$ are identical.
\end{lemma}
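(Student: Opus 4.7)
The proof plan is to establish the two inclusions separately.

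For the direction $\CC \subseteq \CCACMO$, suppose $R \in \CC$ is computed by an $\AC^0$-uniform family $\{C_N\}$ of annotated comparator circuits. I define the reduction $F(X) := (D_{|X|}, Y_X)$, where $D_{|X|}$ is the underlying unannotated circuit structure of $C_{|X|}$ (computable from $|X|$ in $\FAC^0$ by uniformity) and $Y_X$ is the initial bit-vector obtained by resolving each wire's annotation (one of $x_i$, $\lnot x_i$, $0$, or $1$) against $X$; the latter is trivially in $\FAC^0$ once the annotation is known. Then $R(X) \lra \CCV(F(X))$ by the semantics of annotated comparator circuits.

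For the harder direction $\CCACMO \subseteq \CC$, suppose $R(X) \lra \CCV(F(X))$ for some $F \in \FAC^0$. I build an $\AC^0$-uniform annotated comparator circuit for $R$ by prepending a \emph{precomputation} section to the universal circuit $\UNIV_{m(N),n(N)}$ of Theorem~\ref{t:universal}, where $m$ and $n$ polynomially bound the size of the $\CCV$ instance produced by $F$ on inputs of length $N$. The precomputation is to leave the bits of $F(X)$ on designated wires, from which $\UNIV_{m(N),n(N)}$ then reads its input and computes $\CCV(F(X)) = R(X)$.

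To construct the precomputation, I exploit the fact that each output bit of $F$ is computed by a constant-depth polynomial-size circuit, which by the standard fanout-duplication argument converts to a polynomial-size Boolean formula (polynomial because the depth is constant, giving only an $s^{O(1)}$ blow-up). Any such formula in literals of $X$, using $\land$ and $\lor$, is realized by a fanout-free comparator sub-circuit: each leaf becomes a wire annotated with the corresponding literal, and each $\land$ (respectively $\lor$) gate becomes one comparator whose min (respectively max) output feeds the parent gate, with the other output discarded on a throwaway wire. Running one such sub-circuit per output bit of $F$ on disjoint wire blocks gives the precomputation in polynomial size.

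The two sections are glued by re-annotating each former input wire $u_i$ of $\UNIV_{m(N),n(N)}$ with the constant $0$ and inserting, just before the universal section begins, a single comparator between $u_i$ and the corresponding precomputation output wire $w_i$, oriented so that $u_i$ receives the value on $w_i$ as the max output. The combined circuit is polynomial size and $\AC^0$-uniform, since the $\AC^0$ circuit for $F$ together with its conversion to formulas is all $\AC^0$-computable from $|X|$. The main technical obstacle is the formula-size accounting for $\AC^0$ circuits, which is standard; the rest is bookkeeping at the wire level.
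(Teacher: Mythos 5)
Your proof is correct and follows essentially the same route as the paper: resolve annotations against $X$ for the easy inclusion, and for the converse convert the $\AC^0$ reduction $F$ into polynomial-size formulas, realize them as fanout-free comparator sub-circuits, and feed the result into $\UNIV_{m,n}$. The only detail you gloss over is that the universal circuit's gadgets need both each control bit and its complement, so the precomputation must also produce a negated copy of each bit of $F(X)$ (via the De Morgan dual formula), which is immediate but worth stating.
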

\begin{proof}
 We start with the easy direction $\CC \subseteq \CCACMO$. Suppose $R$ is a relation computed by $\AC^0$-uniform polynomial size annotated comparator circuits $C_m$. Given an input $X$ of length $m$, we can construct in uniform $\AC^0$ the circuit $C_m$ and replace each input annotation with the corresponding constant, bit of $X$ or its negation. The value of $R$ is computed by applying $\CCV$ to this data.

 We proceed to show that $\CCACMO \subseteq \CC$. Our first observation is that every $\AC^0$ circuit can be converted to a polynomial size formula, and so to a comparator circuit. Every relation $R$ in $\CCACMO$ is given by a uniform $\AC^0$ reduction $F$, such that $R(X)=1$ if and only if $\CCV(F(X)) = 1$. Suppose that the circuit computed by $F(X)$ on an input of size $k$ has $m_k$ wires and $n_k$ comparator gates. We construct an annotated comparator circuit computing $R(X)$ by first computing $F(X)$ and then feeding the result into $\UNIV_{m_k,n_k}$ (see Fig.~\ref{fig:ccc}). The resulting circuit is $\AC^0$-uniform.

\begin{figure}
\begin{small}
\begin{center}
        \tikzstyle{vertex}=
        [%
          	minimum size=5mm,%
          	rectangle,%
	        	thick,%
		text centered
        ]
      \begin{tikzpicture}[>=stealth',scale=1]	
	\draw (3,1) rectangle (12,7.85);

	\foreach \pos/\label/\name in 
		{{(13,7.4)/R(X)/},
		{(5.25,4.425)/\mathrm{Compute}\;\; F(X)/},
		{(9.75,4.425)/\UNIV_{m_k,n_k}/}}
        	\node[vertex] (\name) at \pos {$\label$};
		
	\node at(1.5,1.45){$\neg X(k-1)$};
	\node at(1.5,2.3){$\neg X(k-1)$};		
	\node at(1.5,3.15){$\neg X(0)$};
	\node at(1.5,4){$\neg X(0)$};		
	\node at(1.65,4.85){$X(k-1)$};
	\node at(1.65,5.7){$X(k-1)$};		
	\node at(1.65,6.55){$X(0)$};
	\node at(1.65,7.4){$X(0)$};
	
	\node at(1.65,3.7){$\vdots$};	
	\node at(1.65,2.85){$\vdots$};
	\node at(1.65,2){$\vdots$};
	\node at(1.65,5.4){$\vdots$};
	\node at(1.65,6.25){$\vdots$};
	\node at(1.65,7.1){$\vdots$};
		
	\node at(2.75,3.7){$\vdots$};	
	\node at(2.75,2.85){$\vdots$};
	\node at(2.75,2){$\vdots$};
	\node at(2.75,5.4){$\vdots$};
	\node at(2.75,6.25){$\vdots$};
	\node at(2.75,7.1){$\vdots$};
	\draw (12,7.4) -- (12.5,7.4);	
	\draw (7.5,1)--(7.5,7.85);
	
	\draw (2.5,1.45) -- (3,1.45);	
	\draw (2.5,2.3) -- (3,2.3);
	\draw (2.5,3.15) -- (3,3.15);	
	\draw (2.5,4) -- (3,4);	
	\draw (2.5,4.85) -- (3,4.85);	
	\draw (2.5,5.7) -- (3,5.7);		
	\draw (2.5,6.55) -- (3,6.55);	
	\draw (2.5,7.4) -- (3,7.4);
		
      \end{tikzpicture}
\end{center}
\end{small}
\vspace{-3mm}
\caption{The construction of a comparator circuit for any relation $R \in \CCACMO$ for a fixed input length $k$.}
\label{fig:ccc}
\end{figure}
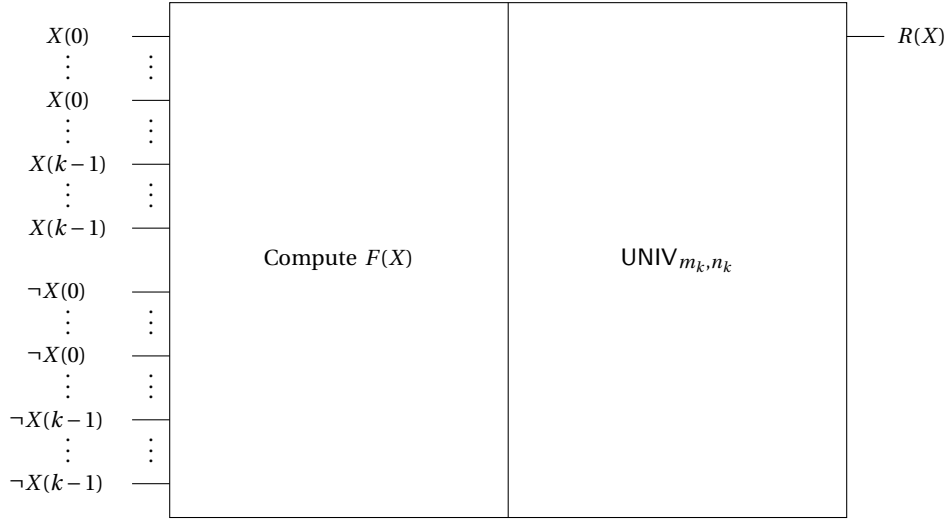
\end{proof}

The preceding construction replaces an arbitrary $\AC^0$ function $F$ with a corresponding comparator circuit. By allowing stronger functions $F$, we get additional characterizations of $\CC$. In particular, in Appendix~\ref{NLinCC} we show that $\NL \subseteq \CC$, and this allows us to replace $\AC^0$-many-one reductions with $\NL$-many-one reductions.

\begin{lemma} \label{l:ccnlmo}
 The two complexity classes $\CCACMO$ and $\CCNLMO$ are identical.
\end{lemma}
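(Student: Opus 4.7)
The plan is to adapt the proof of Lemma \ref{l:ccacmo}, replacing $\FAC^0$ by $\FNL$ throughout. The inclusion $\CCACMO \subseteq \CCNLMO$ is immediate from $\FAC^0 \subseteq \FNL$: an $\AC^0$ many-one reduction to $\CCV$ is in particular an $\NL$ many-one reduction to $\CCV$. For the reverse inclusion I would take $R \in \CCNLMO$ given by a function $F \in \FNL$ with $R(X) \Leftrightarrow \CCV(F(X))$, and show $R \in \CC$, which by Lemma \ref{l:ccacmo} coincides with $\CCACMO$.

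The key ingredient is $\NL \subseteq \CC$ from Appendix \ref{NLinCC}. Applied bitwise to $F$, this gives for each index $i$ an $\AC^0$-uniform annotated comparator circuit $D_i$ that outputs the $i$-th bit of $F(X)$ on its designated wire. I would then build a single $\AC^0$-uniform comparator circuit $D$ that produces all bits of $F(X)$ in parallel by placing the $D_i$ as disjoint blocks of wires and sharing only their input annotations (each input bit or its negation may annotate arbitrarily many wires). Finally I would feed $D$ into the universal comparator circuit $\UNIV_{m,n}$ from Theorem \ref{t:universal}, exactly as in the proof of Lemma \ref{l:ccacmo} and Figure \ref{fig:ccc}, obtaining an $\AC^0$-uniform annotated comparator circuit that computes $R(X)$.

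The step I expect to need the most care is the interface between $D$'s output wires and the input positions of $\UNIV_{m,n}$, since an annotation must be a constant or a (possibly negated) input bit and therefore cannot literally name an internal wire of $D$. My plan is to reserve fresh wires inside the composite circuit for $\UNIV$'s inputs, initialize them with constant $0$ annotations, and insert a thin layer of comparator gates between $D$'s gates and $\UNIV$'s gates that moves each bit from its output wire in $D$ onto the corresponding input wire of $\UNIV$. A comparator gate applied to a pair $(v,0)$ places $v$ on whichever of the two wires the arrow tip indicates and $0$ on the other, which is all that is required since the source wire in $D$ is not used again. This is essentially the same plumbing that is already implicit in the proof of Lemma \ref{l:ccacmo}; the only new feature is that assembling $F(X)$ from its bits now uses parallel composition of comparator sub-circuits rather than a single formula-style $\AC^0$ circuit, which poses no difficulty for $\AC^0$ uniformity.
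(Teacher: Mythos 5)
Your proposal matches the paper's proof in both structure and key ingredients: you handle the easy inclusion via $\AC^0 \subseteq \NL$, and for the converse you invoke $\NL \subseteq \CC$ (Appendix~\ref{NLinCC}, Theorem~\ref{theo:NL2CCV}) to compute $F$ by comparator circuits and then reuse the construction from Lemma~\ref{l:ccacmo} (feeding into $\UNIV_{m,n}$). The extra detail you give about the wire-to-wire ``plumbing'' with constant-$0$-annotated fresh wires and comparator gates is a correct elaboration of what the paper leaves implicit.
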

\begin{proof}
 It is clear that $\CCACMO \subseteq \CCNLMO$. For the other direction, let $R \in \CCNLMO$ be computed as $R(X) = \CCV(F(X))$, where $F \in \NL$. Theorem~\ref{theo:NL2CCV} shows that $F \in \CC$. We can therefore use the construction of Lemma~\ref{l:ccacmo}.
\end{proof}

Another characterization is via $\AC^0$-Turing reductions. We require a preliminary lemma corresponding to de Morgan's laws.

\begin{lemma} \label{l:demorgan}
 Given an annotated comparator circuit $C$ computing a function $F(X)$, we can construct an annotated comparator circuit $C'$ computing the function $\lnot F(X)$ with the same number of wires and comparator gates. Furthermore, $C'$ can be constructed from $C$ in $\AC^0$.
\end{lemma}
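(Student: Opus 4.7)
The plan is to exploit the fact that comparator gates are self-dual under bitwise complementation. Concretely, I will let $C'$ be the circuit obtained from $C$ by (i) complementing every input annotation (swapping $0 \leftrightarrow 1$ and $x_i \leftrightarrow \lnot x_i$), (ii) reversing the direction of every comparator gate (so that a gate previously sending $\max$ to wire $j$ now sends $\max$ to wire $i$), and (iii) keeping the same designated output wire. The number of wires and comparator gates is obviously unchanged, and each of the local edits only inspects a constant amount of the description of $C$, so the whole transformation is computable by a uniform $\AC^0$ circuit.

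The correctness argument is a straightforward induction on the position along the wires, using De~Morgan's laws. Let $w_1,\dots,w_m$ denote the wire values at some horizontal position in $C$ on input $X$, and let $w_1',\dots,w_m'$ denote the wire values at the corresponding position in $C'$ on the same input $X$. I will show that $w_k' = \lnot w_k$ for all $k$ and every position. At the left end this holds because the annotations were complemented. For the inductive step, suppose a comparator gate of $C$ acts on wires $i,j$, with the tip pointing at $j$, updating $(w_i,w_j)$ to $(w_i \wedge w_j,\, w_i \vee w_j)$. In $C'$ the corresponding gate has its tip pointing at $i$, so it updates $(w_i',w_j') = (\lnot w_i, \lnot w_j)$ to
\[
\bigl( (\lnot w_i) \vee (\lnot w_j),\; (\lnot w_i) \wedge (\lnot w_j) \bigr) = \bigl( \lnot(w_i \wedge w_j),\; \lnot(w_i \vee w_j) \bigr),
\]
which is exactly the complement of the updated pair in $C$. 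Thus the invariant is preserved past every gate, and in particular the designated output wire of $C'$ carries $\lnot F(X)$.

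The only mild subtlety is that the model allows comparator gates to point in either direction (this is the usage referenced in Proposition~\ref{prop:p1}), so ``reversing the direction'' is a legal primitive operation of the comparator-circuit model and does not introduce any new gate type. I do not expect any real obstacle here; the entire argument is essentially the observation that the family of comparator gates is closed under conjugation by bitwise negation.
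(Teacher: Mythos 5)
Your proposal is correct and takes essentially the same approach as the paper: negate all input annotations, reverse each comparator gate, and argue by induction using De~Morgan's laws that every wire value is complemented. The paper's proof is just a terser version of exactly this argument.
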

\begin{proof}
 The idea is to push negations to the bottom using de Morgan's laws. Given $C$, construct $C'$ by negating all inputs (switch $0$ and $1$, $x_i$ and $\lnot x_i$) and flipping all comparator gates: replace each gate $(i,j)$ with a gate $(j,i)$ which has the same inputs but the reversed outputs. An easy induction shows that $C'(X) = \lnot C(X)$.
\end{proof}

\begin{lemma} \label{l:ccact}
 The two complexity classes $\CCACMO$ and $\CCACT$ are identical.
\end{lemma}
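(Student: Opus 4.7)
The easy direction $\CCACMO \subseteq \CCACT$ is immediate, since every $\AC^0$ many-one reduction is a degenerate $\AC^0$ Turing reduction (the oracle circuit has a single oracle gate at the output).

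For the hard direction $\CCACT \subseteq \CCACMO$, the plan is: given a relation $R$ computed by a uniform family of $\AC^0$ circuits $D_n$ with $\CCV$-oracle gates, produce, in $\AC^0$ from the input $X$, a single annotated comparator circuit whose output equals $R(X)$. This supplies the desired $\AC^0$ many-one reduction to $\CCV$. First I would stratify the oracle gates of $D_n$ into constantly many layers $1, 2, \dots, d$, where the queries to the layer-$\ell$ oracle gates are $\AC^0$ functions of $X$ and of the answers to the oracle gates at layers $< \ell$. Between consecutive oracle layers the circuit is purely $\AC^0$ combinational logic.

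The three tools from the preceding part of the paper handle each ingredient: (i) the proof of Lemma~\ref{l:ccacmo} already shows that every $\AC^0$ function can be converted to a polynomial-size comparator circuit (via formulas), so each $\AC^0$ combinational block between oracle layers can be implemented as a comparator segment; (ii) each individual $\CCV$ oracle call of query size at most $\ell$ is replaced by the universal comparator circuit $\UNIV_{m,n}$ from Theorem~\ref{t:universal}, with $m, n$ polynomial bounds on the largest possible query; and (iii) whenever a downstream $\AC^0$ block needs the negation of an oracle output (or, more generally, negated bits of an intermediate result), Lemma~\ref{l:demorgan} lets us build a second comparator sub-circuit computing the negation on parallel wires. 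Chaining the $d$ resulting blocks yields one annotated comparator circuit of polynomial size.

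The main technical obstacle is the composition step, since annotated comparator circuits take their inputs as \emph{left-end annotations} (drawn from $\{0,1,x_i,\lnot x_i\}$) rather than from arbitrary wire values. I would handle this exactly as the universal circuit of Theorem~\ref{t:universal} is assembled from conditional-comparator gadgets: allocate a sufficiently large wire bundle once at the left, annotate it with $X$, $\lnot X$, and constants, then append the sub-circuits in order along this shared wire set, so that what plays the role of an ``input annotation'' for a later sub-circuit is in fact the value already sitting on that wire after the earlier sub-circuits have acted. Because every ingredient construction is $\AC^0$-uniform in $X$ (and in $n$), the overall comparator circuit is $\AC^0$-computable from $X$, giving the required reduction $R \mred^{\AC^0} \CCV$.
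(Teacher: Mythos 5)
The broad strategy you describe (easy direction via a degenerate reduction; hard direction via replacing oracle gates with $\UNIV$ from Theorem~\ref{t:universal}, converting $\AC^0$ blocks to comparator circuits, and handling negations via Lemma~\ref{l:demorgan}) is the same as the paper's. But you have misidentified the real technical obstacle and the mechanism you propose to deal with it would not work.

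The crucial issue in this direction is \emph{fanout}, not annotation. In a comparator circuit, once a wire's value is consumed by a gate, the old value is gone; there is no COPY gate. Yet after replacing an oracle gate by $\UNIV_{m,n}$, you need each of its input bits to appear \emph{polynomially many times} (once per control wire of the universal circuit, plus negated copies), and similarly the outputs of depth-$\ell$ oracle gates must fan out to many places in depth-$(\ell+1)$ machinery. Your ``shared wire bundle on which sub-circuits are appended in order'' does not solve this: a single wire carrying a computed bit cannot feed many downstream inputs. The paper's fix is explicit: define a recursive transformation $T(g)$ on the gates of $C_k$ that, whenever a gate needs $\ell^{O(1)}$ copies of a sub-result, builds $\ell^{O(1)}$ \emph{fresh copies} of the entire sub-circuit computing that bit (using Lemma~\ref{l:demorgan} for the negated copies). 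This causes the size to grow multiplicatively per depth level, and the proof then verifies the recurrence $B_{\Delta+1} \leq n^{O(1)}(B_\Delta + 1)$, yielding $|T(r)| \leq n^{O(d)}$ --- polynomial only because the depth $d$ is a constant. Your proposal omits both the duplication idea and this size analysis, which are precisely where the content of the lemma lies. (Your side remark that you would imitate ``how $\UNIV$ is assembled from gadgets'' is also a red herring: the universal circuit of Theorem~\ref{t:universal} is built monolithically and never has to route a computed value to many consumers, so it provides no precedent for the composition step you need.)
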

\begin{proof}
 It is clear that $\CCACMO \subseteq \CCACT$. For the other direction, let $R \in \CCACT$ be computed by a uniform $\AC^0$ family of circuits $C_k$ with $\CCV$ oracle gates.
 The basic idea is to replace each oracle gate with the corresponding universal comparator circuit (Theorem~\ref{t:universal}). However, a universal comparator circuit expects each of its inputs to appear polynomially many times, some of them negated. We will handle this by duplicating the corresponding portion of the constructed circuit enough times, using Lemma~\ref{l:demorgan} to handle negations. The resulting circuit will be polynomial size since $C_k$ has constant depth.

 We proceed with the details. For each gate $g$ in the circuit $C_k$, we define a transformation $T(g)$ mapping it to an annotated comparator circuit computing the same function, as follows. If $g = x_i$ then $T(g)$ is a circuit with one wire annotated $x_i$. If $g = \lnot g_1$ then $T(g)$ is obtained from $T(g_1)$ via Lemma~\ref{l:demorgan}. If $g = g_1 \lor \cdots \lor g_\ell$ or $g = g_1 \land \cdots \land g_\ell$ then $T(g)$ is obtained by constructing fresh copies of $T(g_1),\ldots,T(g_\ell)$ and joining them with $\ell-1$ comparator gates. If $g$ is an oracle gate with inputs $g_1,\ldots,g_\ell$ then we take the universal comparator circuit from Theorem~\ref{t:universal} and replace each input bit by the corresponding copy of $T(g_i)$ or of its negation obtained via Lemma~\ref{l:demorgan}.

 If $r$ is the root gate of $C_k$ then it is clear that $T(r)$ is an annotated comparator circuit computing the same function as $C_k$. It remains to estimate the size of $T(r)$, and for that it is enough to count the number of comparator gates $|T(r)|$. Suppose that $C_k$ has $n$ gates and depth $d$, where $n$ is polynomial in $k$ and $d$ is constant. For $0 \leq \Delta \leq d$, we compute a bound $B_\Delta$ on $|T(g)|$ for a gate $g$ at depth $d - \Delta$. If $d = 0$ then $|T(g)| = 0$, and so we can take $B_0 = 0$. For $d = \Delta + 1 \neq 0$, the costliest case is when $g$ is an oracle gate. Suppose the inputs to $g$ are $g_1,\ldots,g_\ell$ of depth at least $d-\Delta$; note that $\ell \leq n$. We construct $T(g)$ by taking $\ell^{O(1)}$ copies of $T(g_i)$ and appending to them a comparator circuit of size $\ell^{O(1)}$. Therefore $|T(g)| \leq \ell^{O(1)}(B_\Delta + 1)$ and so $B_{\Delta+1} \leq n^{O(1)} (B_\Delta + 1)$. Solving the recurrence, we deduce $B_d \leq n^{O(d)}$ and so $|T(r)| \leq B_d$ is at 
most polynomial in $k$ (since $n = k^{O(1)}$ and $d = O(1)$).

 To complete the proof, we observe that $T(r)$ can be computed in $\AC^0$, and so $\CCACT \subseteq \CC = \CCACMO$ (using Lemma~\ref{l:ccacmo}).

\end{proof}

Finally, we show that the same class is obtained if we allow negation gates, using a reduction outlined in Section~\ref{sec:cvn2c}.

\begin{lemma} \label{l:ccnot}
 The two complexity classes $\CC$ and $\CCnot$ are identical.
\end{lemma}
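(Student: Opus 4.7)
The easy direction is $\CC \subseteq \CCnot$: given an annotated comparator circuit, replace each wire annotation of the form $\lnot x_i$ by the annotation $x_i$ immediately followed by a negation gate at the start of that wire. The resulting circuit is positively annotated, computes the same function, and is constructible in $\AC^0$ from the original.

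For the reverse inclusion $\CCnot \subseteq \CC$, the plan is a \emph{dual-rail} simulation, in the spirit of the de Morgan transformation of Lemma~\ref{l:demorgan}. Given a positively annotated comparator circuit $C$ with negation gates and $m$ wires, I will construct an annotated comparator circuit $C'$ with $2m$ wires. For each wire $w$ of $C$, the circuit $C'$ carries the current value of $w$ on one of its wires and the current value of $\lnot w$ on another; I maintain during the construction a bookkeeping map $\sigma$ that records which of the $2m$ wires is currently the ``positive'' and which is currently the ``negative'' copy of each original wire $w$.

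Initially, for an input annotation $x_i$ on wire $w$ of $C$, the positive copy of $w$ in $C'$ is annotated by $x_i$ and the negative copy by $\lnot x_i$; constant annotations are handled analogously. I then process the gates of $C$ in order. For a downward comparator gate $(i,j)$ of $C$, I insert in $C'$ one downward comparator gate between the current positive copies of $w_i$ and $w_j$, and one upward comparator gate between the current negative copies of $w_i$ and $w_j$; this preserves the dual-rail invariant because $\lnot(p \land q) = \lnot p \lor \lnot q$ and $\lnot(p \lor q) = \lnot p \land \lnot q$. For a negation gate acting on wire $i$ of $C$, I simply swap the positive and negative entries of $\sigma$ for $i$, inserting no comparator gates at all. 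The designated output wire of $C'$ is the positive copy of the designated output wire of $C$ according to the final $\sigma$.

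A straightforward induction on the processed gate sequence shows that the invariant is maintained throughout, so $C'$ computes the same function as $C$. Since $C'$ has $2m$ wires, at most $2n$ comparator gates, and no negation gates, and since $\sigma$ is updated by a single local swap per negation gate, the entire transformation is computable in $\AC^0$; applied pointwise to an $\AC^0$-uniform family, it yields an $\AC^0$-uniform family of annotated comparator circuits, completing $\CCnot \subseteq \CC$. The main obstacle avoided by the construction is the fact that one cannot actually swap two wires using comparator gates alone; by pushing negation gates into the symbolic $\sigma$-bookkeeping instead of trying to realise them inside the circuit, this difficulty disappears entirely.
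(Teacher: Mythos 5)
The forward direction ($\CC \subseteq \CCnot$) is fine and matches the paper. The reverse direction, however, has a genuine gap precisely where you claim the construction is ``computable in $\AC^0$,'' and the paper explicitly warns against your approach at the end of Section~\ref{sec:cvn2c}: ``The more straightforward solution of simply switching the wires $z$ and $\overline{z}$ does not result in an $\AC^0$ many-one reduction.'' The problem is that your bookkeeping map $\sigma$ is not locally computable. To decide which of the two rails is currently the ``positive'' copy of wire $i$ when you reach gate number $k$, you must know the parity of the number of negation gates on wire $i$ among the first $k-1$ gates. That is a parity of up to polynomially many bits, and parity is famously not in $\AC^0$. So while each individual swap is trivial, the accumulated state $\sigma$ at position $k$ is not $\AC^0$-computable, and hence neither is the placement of the comparator gate simulating gate $k$, nor the identity of the output wire.

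The paper avoids this by never swapping labels at all: every wire of $C$ keeps a fixed pair of rails $(w,\overline{w})$ in $C'$ throughout, and each comparator gate is translated by a pair of gates at a fixed, locally determined location. To simulate a negation gate on $z$, the paper adds a single auxiliary wire $t$ (initialized to $0$) and inserts three comparator gates $\seq{z,t}$, $\seq{\overline{z},z}$, $\seq{t,\overline{z}}$ that \emph{physically} swap the values carried on $z$ and $\overline{z}$ using $t$ as a temporary container (restoring $t$ to $0$ afterwards). Because every inserted gate's position and endpoints depend only on the position of the original gate, not on any global count, the whole map $C \mapsto C'$ really is in $\FAC^0$, so composing with the $\FAC^0$ uniformity function of the original family yields $\AC^0$-uniformity for the transformed family. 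If you patch your proof by replacing the symbolic $\sigma$-swap with this three-gate physical swap, and drop the claim that a single local swap per negation gate suffices for uniformity, the argument goes through.
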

\begin{proof}
 Recall that $\CC$ consists of relations computed by uniform annotated comparator circuits, while $\CCnot$ consists of relations computed by uniform positively annotated comparator circuits with negation gates. It is easy to see that $\CC \subseteq \CCnot$: every negated input $\lnot x_i$ can be replaced by the corresponding positive input followed by a negation gate. For the other direction, Section~\ref{sec:cvn2c} shows how to simulate a comparator circuit with negation gates using a standard comparator circuit.
\end{proof}

In summary we have shown that all complexity classes listed at the end of Section~\ref{s:compCirc} are identical.

\begin{theorem} \label{t:cc}
 All complexity classes $\CC,\CCnot,\CCACMO,\CCNLMO,\CCACT$ are identical.
\end{theorem}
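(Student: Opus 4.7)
The plan is to assemble Theorem~\ref{t:cc} directly from the string of equivalences established in Lemmas~\ref{l:ccacmo}, \ref{l:ccnlmo}, \ref{l:ccact}, and~\ref{l:ccnot}, since each of these lemmas individually pairs two of the five classes in the statement. Concretely, I would observe that $\CC = \CCACMO$ by Lemma~\ref{l:ccacmo}, that $\CCACMO = \CCNLMO$ by Lemma~\ref{l:ccnlmo}, that $\CCACMO = \CCACT$ by Lemma~\ref{l:ccact}, and that $\CC = \CCnot$ by Lemma~\ref{l:ccnot}. Chaining these four equalities gives $\CC = \CCnot = \CCACMO = \CCNLMO = \CCACT$, which is exactly the assertion of the theorem.

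To write this cleanly I would present the equivalences as a diagram of inclusions, so the reader can see at a glance which direction of which lemma is being cited. The easy inclusions $\CC \subseteq \CCnot$, $\CCACMO \subseteq \CCNLMO$, and $\CCACMO \subseteq \CCACT$ are immediate from the definitions (add negation gates at inputs; an $\AC^0$ reduction is an $\NL$ reduction; a many-one reduction is a Turing reduction). The nontrivial closing inclusions are $\CCnot \subseteq \CC$ (via the simulation of negation gates referenced in Section~\ref{sec:cvn2c}), $\CCNLMO \subseteq \CCACMO$ (using $\NL \subseteq \CC$ from Appendix~\ref{NLinCC} and the universal comparator circuit to inline the $\NL$ reduction), and $\CCACT \subseteq \CCACMO$ (unfolding the constant-depth $\AC^0$ oracle circuit into an annotated comparator circuit using $T(\cdot)$, duplicating subcircuits across oracle inputs, and using Lemma~\ref{l:demorgan} for negations).

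There is no genuine obstacle remaining at this stage: the real content lies in the universal comparator construction of Theorem~\ref{t:universal} and in the composition/duplication arguments inside Lemmas~\ref{l:ccacmo}--\ref{l:ccnot}. The only point that deserves a sentence of care in the writeup is to make sure every equality in the chain is invoked in the correct direction, and in particular that Lemma~\ref{l:ccnot}'s reverse inclusion (the nontrivial simulation of negation gates by plain comparator circuits) is explicitly cited, so that the reader is not left wondering where negation gates ``go'' when collapsing $\CCnot$ into~$\CC$. Given this, the proof can be stated in one or two sentences simply pointing to Lemmas~\ref{l:ccacmo}, \ref{l:ccnlmo}, \ref{l:ccact}, and~\ref{l:ccnot}.
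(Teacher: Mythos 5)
Your proof is correct and matches the paper's approach exactly: Theorem~\ref{t:cc} is stated in the paper as a direct corollary of Lemmas~\ref{l:ccacmo}, \ref{l:ccnlmo}, \ref{l:ccact}, and~\ref{l:ccnot}, with no further argument needed. Your attention to which inclusion is trivial and which requires the universal-circuit or de Morgan machinery is a sound way to present the chain, but it is the same decomposition the paper uses.
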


Subramanian~\cite{Sub90} originally defined $\CC$ as the class of relations log-space reducible to $\CCV$. Theorem~\ref{t:cc} shows that this class is identical to the one we consider in the present paper.

Our methods can also be used to show that the function class $\FCC$ corresponding to $\CC$ is closed under composition.

\begin{theorem} \label{t:fcc}
 The class $\FCC$ is closed under composition.
\end{theorem}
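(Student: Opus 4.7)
The plan is to exploit the characterization $\CC = \CCACT$ from Theorem~\ref{t:cc}, which, unlike the raw comparator-circuit definition, composes transparently. Fix $F, G \in \FCC$ with polynomial length bounds $|G(X)| \leq p(|X|)$ and $|F(Y)| \leq q(|Y|)$. I would first observe that $F \circ G$ is $p$-bounded, since $|F(G(X))| \leq q(p(|X|))$, and then reduce the problem to showing that for each index $i$, the bit-graph relation $B_{F\circ G}(i, X)$ lies in $\CC$.

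By hypothesis, $B_F$ and $B_G$ are in $\CC = \CCACT$, so each is computed by an $\AC^0$-uniform family of constant-depth polynomial-size circuits with $\CCV$ oracle gates. My circuit for $B_{F\circ G}(i, X)$ would be built in two layers: take the $\AC^0$-with-$\CCV$ circuit computing $B_F(i, Y)$ on the appropriate input length, and at every input gate reading bit $j$ of $Y$, splice in a fresh copy of the $\AC^0$-with-$\CCV$ circuit computing $B_G(j, X)$. The outer and inner depths are both $O(1)$, so the composed depth is $O(1)$; the outer and inner sizes are polynomial, so the composed size is polynomial; and uniformity is inherited since both original constructions are $\AC^0$-uniform and the splicing itself is an $\AC^0$ operation. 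Hence $B_{F\circ G}(i, X) \in \CCACT = \CC$, and therefore $F \circ G \in \FCC$.

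The main obstacle, such as it is, is showing that $\AC^0$-with-$\CCV$-oracle circuits compose without blowing up either depth or size. This reduces to the classical closure of $\AC^0$ under substitution, which extends verbatim once one allows oracle gates. The real conceptual work has already been done in Section~\ref{s:universal}: composing directly at the level of comparator circuits is awkward, since comparator circuits have restricted fanout and no native mechanism for turning a value computed on one wire into an input annotation elsewhere, but Theorem~\ref{t:cc} lets us pass to an equivalent, composition-friendly model and then transport the result back.
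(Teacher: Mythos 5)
Your proof is correct, but it takes a different route than the paper. The paper proves Theorem~\ref{t:fcc} by working directly with comparator circuits: using the equivalence $\CC = \CCnot$ (Lemma~\ref{l:ccnot}), one takes uniform positively annotated comparator circuits with negation gates for $F$ and $G$, and for each wire of $F$'s circuit annotated with an input bit $y_j$, one splices in a fresh copy of $G$'s circuit (with designated output $j$). Taking a separate copy of $G$'s circuit per input wire of $F$ is precisely what circumvents the fanout restriction, and using positively annotated circuits avoids having to build separate negated copies of $G$. You instead pass through $\CC = \CCACT$ (Lemma~\ref{l:ccact}) and compose at the level of $\AC^0$-with-$\CCV$-oracle circuits. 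That is also sound, since $\AC^0$-Turing reductions compose with only a constant blow-up in depth and polynomial blow-up in size, and uniformity is preserved.

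One point worth flagging: you characterize direct composition of comparator circuits as ``awkward'' because comparator circuits lack a native way to feed a computed wire value back in as an annotation. But the paper's proof shows this is not an obstacle once one notices the copy trick --- the exact same ``splice in a fresh copy for each fanout'' idea that you apply at the $\AC^0$-oracle level works equally well at the comparator-circuit level, and in fact with less machinery: your route depends on the comparatively heavier Lemma~\ref{l:ccact}, whereas the paper's route only needs Lemma~\ref{l:ccnot}, which is a simple application of the double-rail construction. Both proofs also implicitly rely on the standard convention that $p$-bounded functions are padded so that $|G(X)|$ is determined by $|X|$, which you should state when reconstructing the outer circuit's input length from $|X|$ alone.
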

\begin{proof}
 Suppose that $F,G \in \CC$ are given by uniform positively annotated comparator circuits with negation gates. By taking enough copies of $G$, we can construct a uniform positively annotated comparator circuit with negation gates computing $F(G(X))$.
\end{proof}

Theorem~\ref{t:cc} implies that $\CC$ is the class of all relations computed by $\L$-uniform positively annotated comparator circuits with negation gates. This characterization corresponds to the following Turing machine model.

\begin{theorem} \label{t:cctm}
 Every relation in $\CC$ is computable by a Turing machine of the following type. The machine has three tapes: a read/write work tape with one head $W$, a read-only input tape with one head $I$, and a `comparator' tape with two heads $M_1,M_2$. All heads are movable as usual (one step at a time). The 
work tape is initalized by the size of the input $n$ encoded in binary,
and is limited to $O(\log n)$ cells. The input tape is initialized with
the input, and comparator tape is initially blank.  The work tape
controls the actions of heads $I, M_1, M_2$ but cannot read their scanned
symbols.  The input and comparator tapes are only accessible
via the following special operations, under control of the work tape:
\begin{itemize}
 \item Move heads $I$, $M_1$, and $M_2$ to specified locations.
 \item Write a blank cell $M_1$ (or $M_2$) with $0$ (or $1$).
 \item Copy the value at cell $I$ to a blank cell $M_1$ (or $M_2$).
 \item Negate the value at cell $M_1$ (or $M_2$).
 \item ``Sort'' the values at cells $M_1$ and $M_2$.
 \item Output the value at cell $M_1$ (or $M_2$), and halt the machine.
\end{itemize}
 Conversely, every relation computed by such a machine is in $\CC$.
\end{theorem}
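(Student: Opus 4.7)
The plan is to prove both directions by a direct translation between the Turing machine's operations and the gates/annotations of a comparator circuit with negation gates, exploiting the characterization noted just before the statement, that $\CC$ equals the class of relations computed by $\L$-uniform positively annotated comparator circuits with negation gates (a consequence of Theorem~\ref{t:cc} together with $\AC^0 \subseteq \L \subseteq \NL$).

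For the forward direction, start with $R \in \CC$ computed by such an $\L$-uniform family, and let $T_n$ be the log-space transducer producing the circuit description for size $n$. The TM runs $T_n$ step by step on its $O(\log n)$ work tape, translating each emitted element into a single operation on the comparator tape, maintaining the invariant that the $i$th wire corresponds to the $i$th comparator cell. A constant annotation $b$ on wire $i$ becomes ``move $M_1$ to cell $i$; write $b$''; an input annotation $x_j$ on wire $i$ becomes ``move $I$ to position $j$; move $M_1$ to cell $i$; copy''; a comparator gate on wires $i,j$ becomes ``move $M_1, M_2$ to cells $i, j$; sort''; and a negation gate on wire $i$ becomes ``move $M_1$ to cell $i$; negate''. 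When $T_n$ halts, position $M_1$ on the designated output wire and execute the output instruction. All head addresses emitted by $T_n$ fit in $O(\log n)$ bits and can be held on the work tape while the move commands are issued.

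For the backward direction, the crucial observation is that although the heads $I, M_1, M_2$ move over content-bearing tapes, the work tape itself cannot read the symbols they scan; hence on inputs of length $n$ the entire sequence of TM actions is a function of $n$ alone. Since the work tape has only polynomially many configurations, any terminating computation takes polynomially many steps and touches polynomially many comparator cells. From this oblivious trace we build a comparator circuit by allocating one wire per touched comparator cell: the unique initial write of the cell becomes that wire's input annotation (a constant, or $x_j$ recording the position of $I$ at the moment of the copy), each subsequent sort becomes a comparator gate on the two corresponding wires, each negate becomes a negation gate, and the cell referenced by the final output instruction designates the output wire. The result is a positively annotated comparator circuit with negation gates whose description is generated $\L$-uniformly by simulating the work tape, so by Theorem~\ref{t:cc} and Lemma~\ref{l:ccnot} the relation lies in $\CC$.

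The main obstacle is verifying the well-formedness of the construction in the backward direction: we must check that every wire receives exactly one initial annotation, which follows because \emph{write} and \emph{copy} apply only to blank cells and we allocate a wire precisely when the TM first touches its cell; and that the trace is indeed polynomially long, which follows from the oblivious, log-space-controlled nature of the computation together with the assumption that the machine computes a (total) relation. The forward direction is essentially mechanical, requiring only careful bookkeeping to convert the offline circuit-generator into an online simulation that issues the TM operations in lockstep.
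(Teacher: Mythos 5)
Your proposal is correct and follows essentially the same route as the paper's proof: both directions are handled by a direct translation between the log-space circuit generator for the $\L$-uniform family of positively annotated comparator circuits with negation gates and the sequence of special operations issued by the machine, with the comparator tape playing the role of the wire set. Your version spells out the obliviousness and polynomial-trace-length observations that the paper leaves implicit, but the underlying argument is the same.
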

\begin{proof}
 For every relation $R \in \CC$ there is a log-space machine $T$ that on input $n$ outputs a positively annotated comparator circuit with negation gates computing $R$ on inputs of length $n$. We can convert $T$ to a machine $T'$ of the type described in the theorem by thinking of each cell of the
comparator tape as one wire in the circuit.  We replace each annotation by
the second or third special operation and each gate by one of the following two special operations. The final special operation is used to single out the distinguished output wire.

 For the other direction, given a machine $T'$ of the type described we
can construct a log-space machine $T$ which on input $n$ outputs 
a suitable annotated comparator circuit for inputs of length $n$ as
follows:   $T$ on input $n$ runs $T'$ with its work tape initiated
to $n$ in binary, and uses the control signals to the input and
comparator tapes to describe the circuit.
\end{proof}

This machine model is resilient under the following changes: allowing more heads on each of the tapes, allowing the second or third special operations to write over a non-blank cell, and adding other operations such as swapping the values of two cells in the comparator tape. We leave the proof to the reader.

\section{Oracle separations} \label{s:oracleSep}

Here we support our conjecture that the complexity classes $\NC$ and $\CC$ are incomparable by defining and separating relativized versions of the
classes.  (See Section \ref{s:intro} for a discussion of the conjecture.)
Problems in relativized $\CC$ are computed by comparator circuits which
are allowed to have oracle gates, as well as comparator gates and $\neg$
gates.  (We allow $\neg$ gates to make our results more general.
Note that $\neg$ gates can be eliminated from nonrelativized comparator
circuits, as explained in Section~\ref{sec:cvn2c}.)
Each oracle gate computes some function $G\colon \{0,1\}^n \ra \{0,1\}^n$
for some $n$.    We can insert such an oracle gate anywhere
in an oracle comparator circuit with $m$ wires, as long as $m \ge n$,
by selecting a level in the circuit, selecting any $n$ wires and
using them as inputs to the gate (so each gate input gets one of the
$n$ distinct wires), and then the $n$ outputs feed
to some set of $n$ distinct output wires.  Note that this definition
preserves the limited fan-out property of comparator circuits:
each output of a gate is connected to at most one input of one other gate.

We are interested in oracles $\alpha\colon  \{0,1\}^* \ra  \{0,1\}^*$
which are length-preserving, so $|\alpha(Y)| = |Y|$.
We use the notation $\alpha_n$ to refer to the restriction of
$\alpha$ to $\{0,1\}^n$.  We define the relativized complexity
class $\CC(\alpha)$ based on the circuit-family characterization
of $\CCnot$ given in Definition~\ref{d:classes}.   Thus a
relation $R(X,\alpha)$ is in $\CC(\alpha)$ iff it is computed by
a polynomial size family of annotated comparator circuits which are allowed
comparator gates, $\neg$-gates, and $\alpha_n$ oracle gates, where
$n = |X|$.  We consider both a uniform version (in which each circuit
family satisfies a uniformity condition) and a nonuniform version of
$\CC(\alpha)$.

Analogous to the above,
we define the relativized class $\NC^k(\alpha)$ 
to be the class of relations $R(X,\alpha)$ computed by some family
of depth $O(\log^k n)$ polynomial size Boolean circuits with
$\wedge$, $\vee$, $\neg$, and $\alpha_n$-gates (where $n=|X|$)
in which $\wedge$-gates and $\vee$-gates have fan-in at most two, and oracle gates are nested at most $O(\log^{k-1} n)$ levels deep\footnote{Cook~\cite{Coo85} defines the depth of an oracle gate of fan-in $m$ to be $\log m$. Our definition follows the one by Aehlig at al.~\cite{ACN}, and has the advantage that it preserves classical results like $\NC^1(\alpha) \subseteq \L(\alpha)$. The class $\NC(\alpha)$ is the same under both definitions.}. 
As above, we consider both uniform and non-uniform versions of these
classes. Also $\NC(\alpha) = \bigcup_k \NC^k(\alpha)$.


As observed earlier, flipping one input of a comparator gate flips
exactly one output.  We can generalize this notion to oracles $\alpha$
as follows.

\begin{definition}\label{d:lipschitz}
A partial function $\alpha: \{0,1\}^* \rightharpoonup \{0,1\}^*$ which is
length-preserving on its domain is \emph{(weakly) 1-Lipschitz} if for all
strings $X,X'$ in the domain of $\alpha$, if $|X| = |X'|$ and
$X$ and $X'$ have Hamming distance 1, then $\alpha(X)$ and
$\alpha(X')$ have Hamming distance \emph{at most} 1.

A partial function $\alpha: \{0,1\}^* \rightharpoonup \{0,1\}^*$ which is
length-preserving on its domain is \emph{strictly 1-Lipschitz} if for all
strings $X,X'$ in the domain of $\alpha$, if $|X| = |X'|$ and
$X$ and $X'$ have Hamming distance 1, then $\alpha(X)$ and
$\alpha(X')$ have Hamming distance \emph{exactly} 1.
\end{definition}

Subramanian~\cite{Sub90} uses a slightly different terminology: (strictly) adjacency-preserving for (strictly) 1-Lipschitz.

Since comparator gates compute strictly 1-Lipschitz functions, 
it may seem reasonable to restrict comparator oracle circuits to
strictly 1-Lipschitz oracles $\alpha$. 
Our separation results below hold whether or not we make this restriction.

Roughly speaking, we wish to prove
$\CC(\alpha) \not\subseteq \NC(\alpha)$
and $\NC^3(\alpha) \not\subseteq \CC(\alpha)$.
More precisely, we have the following result.

\begin{theorem}\label{t:separation} $\;$
\begin{itemize}
\item[(i)] There is a relation $R_1(\alpha)$ which is computed by some
uniform polynomial size family of comparator oracle circuits, but which
cannot be computed by any $\NC(\alpha)$ circuit family (uniform or not),
even when the oracle $\alpha$ is restricted to be strictly 1-Lipschitz.
\item[(ii)]
There is a relation $R_2(\alpha)$ which is computed by some
uniform $\NC^3(\alpha)$ circuit family which cannot be computed
by any polynomial size family of comparator oracle circuits
(uniform or not), even when the oracle $\alpha$ is restricted
to be strictly 1-Lipschitz.
\end{itemize}
\end{theorem}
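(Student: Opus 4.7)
The plan is to prove both parts by oracle diagonalization, exhibiting for each part a relation whose intrinsic oracle-access structure is afforded by one model and denied to the other. The comparator model admits a polynomial-length \emph{serial} chain of oracle queries but only polynomially many oracle gates in total, whereas the $\NC^k$ model admits only $O(\log^{k-1}n)$ \emph{nested} oracle queries but enjoys unrestricted parallel fan-out of intermediate values. The two $R_i$ probe exactly one of these asymmetries at a time.

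For part~(i), I would let $R_1(X,\alpha)$ be the first bit of the iterated composition $\alpha^{(|X|)}(X)$. The comparator upper bound is immediate: cascade $n=|X|$ oracle gates, wiring each gate's $n$ outputs directly into the next gate's $n$ inputs, and read the designated wire at the end; no fan-out is needed and the family is $\AC^0$-uniform. For the $\NC$ lower bound I would argue by induction on oracle-nesting level that, for any polynomial-size $\NC^k$ oracle circuit, its output depends on $\alpha$ only through a ``query tree'' of depth $O(\log^{k-1}n)\ll n$ with polynomially many leaves. A greedy adversary then defines a strict 1-Lipschitz partial $\alpha$ by threading a chain $X_0,X_1,\ldots,X_n$ through an independent set of the Hamming cube (so the chain's own 1-Lipschitz constraints are vacuous) and routing the tail $X_n$ away from those query-tree leaves; since the $\NC^k$ circuit has no information about $X_n$, we flip its first bit to defeat the circuit. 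Standard diagonalization over all polynomial-size $\NC^k$ families for every $k$ collapses the per-circuit adversaries into a single $\alpha$ separating $R_1$ from all of $\NC$.

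For part~(ii), the roles reverse: the separating resource is $\NC^3$'s ability to combine nested oracle queries up to depth $\log^2 n$ with unrestricted parallel fan-out, contrasted with the comparator model's polynomial cap on oracle gates. I would take $R_2(X,\alpha)$ to be a relation whose computation naturally fits an $\NC^3$ tree-evaluation pattern in which $X$ dictates a $\log^2 n$-depth query-nesting schedule while fan-out broadcasts each acquired oracle value to many sibling subcomputations; a concrete candidate to pursue is an oracle tree-evaluation where internal nodes call $\alpha$ on $X$-determined inputs and are combined via a symmetric function, scaled so that the parallel oracle tree has $n^{\omega(1)}$ leaves but only $O(\log^2 n)$ nesting. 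The $\NC^3$ upper bound is the tree evaluation itself. The comparator lower bound is again an adversary: for any fixed polynomial-size comparator oracle family, I choose a strict 1-Lipschitz $\alpha$ whose polynomially many oracle points miss the witnessing structure encoded at the deeper levels of the tree, and diagonalize over all families.

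The main obstacle in both parts is respecting the \emph{strict} 1-Lipschitz restriction on $\alpha$ while retaining enough combinatorial freedom for the adversary, since total strict 1-Lipschitz maps on $\{0,1\}^n$ are essentially hypercube isometries and so far too rigid to hide anything. I would work around this by keeping $\alpha$ a \emph{partial} function defined only on the union of the adversary's hidden structure and the circuit's actual queries, and then checking that this partial assignment extends consistently: placing hidden points in an independent set of the Hamming cube trivialises their internal 1-Lipschitz constraints, and each query point imposes at most $n$ Hamming-$1$ constraints from already-defined neighbours, which a simple packing argument can accommodate for polynomially many queries. Selecting the precise $R_2$ in part~(ii)---a combinatorial structure simultaneously fitting inside $\NC^3$ and defeating every polynomial-size comparator oracle circuit---is the most delicate step, and I expect the bulk of the technical work to go into that design.
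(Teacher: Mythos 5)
Your plan for part~(i) is on the right track in spirit---iterate the oracle, cascade oracle gates for the $\CC$ upper bound, and diagonalize against nesting depth---but your specific choice $R_1 = \bbit{\alpha^{(n)}(X)}{1}$ with an unstructured oracle is exactly the ``more natural target function'' that the paper explicitly lists as an \emph{open problem} (see the remark following Theorem~\ref{thm:main}). The obstacle is precisely the one you identify but then dismiss: a 1-Lipschitz constraint propagates from every query point to its Hamming-ball neighborhood, and your ``simple packing argument'' will not accommodate this. What the paper actually does is work with a structured oracle whose inputs are padded (each control bit replicated $2n+1$ times) so that the input space decomposes into \emph{blobs}; one then needs a carefully chosen \emph{decreasing} sequence of radii $d_0 > d_1 > \cdots > d_{\mT}$ satisfying a volume ratio condition (equation~\eqref{eq:condition}), and Lemma~\ref{lem:point} extending a partial 1-Lipschitz function given a distance-$d$ margin. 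Without that quantitative radius schedule your adversary cannot route $\vc{x}_{k+1}$ clear of all previously fixed points while keeping the extension 1-Lipschitz. Also, your plan to achieve \emph{strict} 1-Lipschitz by an independent-set construction is unnecessarily complicated; the paper reduces strict to weak with a one-line parity-bit padding (Lemma~\ref{lem:parity}).

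Part~(ii) is where your approach goes genuinely wrong. You propose an adversary that picks a 1-Lipschitz $\alpha$ so that the comparator circuit's polynomially many oracle query points ``miss the witnessing structure.'' But this kind of query-avoidance argument cannot separate $\CC$ from $\NC^3$, because a polynomial-size $\NC^3$ oracle circuit \emph{also} makes only polynomially many oracle queries; any oracle that fools the comparator circuit by hiding information outside its queried points would fool the $\NC^3$ circuit too. The real separating resource is not query count but the \emph{fan-out restriction}: a comparator circuit cannot duplicate an intermediate oracle output to feed it to many downstream gates, whereas $\NC$ can broadcast. The paper's proof of Theorem~\ref{t:expBound} does not hide anything from the circuit; instead it exploits the 1-Lipschitz property of \emph{every gate in the comparator circuit} via the flip-path argument: changing one input bit produces a unique path through the circuit, and one tracks how many such paths can be ``consumed'' by oracle gates at each nesting level as the oracle output is swept through a Gray code. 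The resulting recurrence $g_{k-1} \leq g/2^n + g_k/(d-2)$ gives the exponential gate lower bound. None of this machinery is present in, or hinted at by, your sketch, and a generic diagonalization over a partial 1-Lipschitz oracle has no mechanism to produce it.
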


The proof of Theorem~\ref{t:separation}(ii) only uses the fact that comparator gates are 1-Lipschitz, and therefore the lower bound implies to the more general class $\APC$ considered by Subramanian~\cite{Sub90} in which comparator gates are replaced by arbitrary 1-Lipschitz gates. Conversely, the comparator circuit in the proof of Theorem~\ref{t:separation}(i) uses only oracle gates, and so the relation $R_1(\alpha)$ belongs to any circuit class whose depth is unrestricted.

The restriction to strictly 1-Lipschitz oracles might seem severe. However the following lemma shows how to extend every 1-Lipschitz function to a strictly 1-Lipschitz function. As a result, it is enough to prove a relaxed version of Theorem~\ref{t:separation} where the oracles are restricted to be only (weakly) 1-Lipschitz.

\begin{lemma} \label{lem:parity}
Suppose $f$ is a 1-Lipschitz function. Define $g(X)$ by adding a
`parity bit' in front of $f(X)$ as follows:
\[ g(X) = \bigl[\parity{X} \oplus \parity{f(X)}\bigr] f(X). \]
Then $g$ is strictly 1-Lipschitz.
\end{lemma}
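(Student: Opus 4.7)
The plan is to exploit the clever design of the prepended bit, which is precisely the bit that forces $\parity{g(X)}$ to equal $\parity{X}$. Indeed, by construction
\[ \parity{g(X)} = \bigl[\parity{X}\oplus\parity{f(X)}\bigr]\oplus \parity{f(X)} = \parity{X}. \]
This is the key identity; everything else will be a short case analysis around it.

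First I would argue an upper bound on the Hamming distance. Suppose $X,X'$ have the same length and Hamming distance~$1$. Since $f$ is (weakly) $1$-Lipschitz, $f(X)$ and $f(X')$ differ in at most one bit. Meanwhile the two prepended parity bits can differ in at most one position (trivially). Therefore $g(X)$ and $g(X')$ differ in at most $2$ positions.

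Next I would use the identity above to get a parity constraint that rules out distance $0$ or $2$. Since $X$ and $X'$ have Hamming distance $1$, $\parity{X}\oplus\parity{X'}=1$, so by the identity $\parity{g(X)}\oplus\parity{g(X')}=1$. Thus the Hamming distance between $g(X)$ and $g(X')$ must be odd, and combined with the upper bound of~$2$, it must equal exactly~$1$. This shows $g$ is strictly $1$-Lipschitz.

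The only subtle point—and it is barely a subtle point—is making sure the prepended bit is treated as part of the string when computing parity and Hamming distance, so I would state that convention explicitly at the start of the proof. No step looks like a real obstacle; the argument is essentially the observation that prepending a parity bit turns any $1$-Lipschitz map into one whose output always has the same parity as the input, which forces the Hamming distance between images of neighbors to be odd and hence exactly~$1$.
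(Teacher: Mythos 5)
Your proof is correct and follows essentially the same route as the paper's: bound $d(g(X),g(X'))\le 2$ using the weak $1$-Lipschitz property plus the extra prepended bit, observe $\parity{g(X)}=\parity{X}$ so the distance between images of neighbours is odd, and conclude it is exactly~$1$. The paper's version is just a terser statement of the same argument.
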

\begin{proof}
Suppose $d(X,Y) = 1$. Clearly $d(g(X),g(Y)) \leq 2$. On the other hand, $\parity{g(X)} = \parity{X}$, and hence $d(g(X),g(Y))$ is odd. We conclude that $d(g(X),g(Y)) = 1$.
\end{proof}

Given a relation $S(\beta)$ which separates two relativized complexity classes even when the oracle $\beta$ is restricted to 1-Lipschitz functions, define a new relation $R(\alpha) = S(\mathrm{chop}(\alpha))$, where $\mathrm{chop}(\alpha)_n(X)$ results from $\alpha_{n+1}(0 X)$ by chopping off the leading bit. Given an oracle $\beta$ which is 1-Lipschitz, define a new oracle $\alpha$ by
\[ \alpha_{n+1}(b X) = \bigl[\parity{b X} \oplus \parity{\beta_n(X)}\bigr] \beta_n(X). \]
Lemma~\ref{lem:parity} shows that $\alpha$ is strictly 1-Lipschitz. Notice that $R(\alpha) = S(\beta)$. Hence $R(\alpha)$ separates the two relativized complexity classes even when the oracle $\alpha$ is restricted to strictly 1-Lipschitz functions. 

Henceforth we will prove the relaxed version of Theorem~\ref{t:separation} in which the oracles are only restricted to be (weakly) 1-Lipschitz.

\subsection{Proof of item (i) of Theorem \ref{t:separation}}
It turns out that item (i) is easy to prove if we require the $\NC(\alpha)$
circuit family to work on all length-preserving oracles $\alpha$, and
not just 1-Lipschitz oracles.
This is a consequence of the next proposition, which
follows from the proof of  \cite[Theorem~14]{ACN}, and states that
the $\ell$th iteration of an oracle requires a circuit with oracle
nesting depth $\ell$ to compute.

\begin{definition}\label{d:nesting}
The {\em nesting depth} of an oracle gate $G$ in an oracle circuit 
is the maximum number
of oracle gates (counting $G$) on any path in the circuit from an
input (to the circuit) to $G$.
\end{definition}

\begin{proposition} \label{t:aehlig}
Let $d,n >0$ and
let  $C(\alpha)$ be a circuit with any number of Boolean gates
but with fewer than $2^n$ $\alpha_n$-gates such that the nesting
depth of any $\alpha_n$-gate is at most $d$.
If the circuit correctly computes the first bit of $\alpha_n^\ell$ (the $\ell$th iteration of $\alpha_n$),
and this is true for all oracles $\alpha_n$, then $\ell \le d$.
\end{proposition}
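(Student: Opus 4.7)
The plan is to prove the contrapositive by induction on $d \ge 0$: any oracle circuit with fewer than $2^n$ $\alpha_n$-gates all at nesting depth at most $d$ cannot correctly compute the first bit of $\alpha_n^{d+1}$ (on a fixed input $X$, say $X=0^n$) for every oracle $\alpha_n$. The base case $d=0$ is immediate because the circuit then contains no oracle gates, so it computes a constant in $\alpha$, whereas the first bit of $\alpha(X)$ clearly varies with $\alpha$.

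For the inductive step, I would use a lazy-revelation adversary. Fix $X = 0^n$ and traverse the oracle gates of $C$ in topological order. When a query $\alpha(q)$ is made: if $q$ has already been queried, return the stored value; otherwise assign $\alpha(q) = v$ for some $v \in \{0,1\}^n$ that has not yet appeared as a queried point or as a returned value. Since $C$ has fewer than $2^n$ oracle gates, the pool of ``fresh'' values is never exhausted and the adversary's partial $\alpha$ is always extendable to a total (and, if desired, 1-Lipschitz) oracle.

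The heart of the argument is the following invariant, which I would establish by a secondary induction on the nesting depth $k$ of an oracle gate $g$: after processing all gates of nesting depth $\le k$, the iterate $\alpha^{k+1}(X)$ is still free in the sense that no query has fixed its value, so its first bit can be made either $0$ or $1$ by extending $\alpha$ consistently. The intuition is that resolving $\alpha^{k+1}(X)$ requires querying $\alpha$ on $\alpha^{k}(X)$, which is itself resolved only after nesting depth $k$; thus each additional iterate in the chain consumes one additional level of nesting. Since $C$ has nesting depth $\le d$, the adversary can always leave $\alpha^{d+1}(X)$ unresolved and flip its first bit without altering any circuit-visible query answer, contradicting the supposed correctness of $C$ on all oracles.

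The step I expect to be the main obstacle is formalizing this invariant: one needs to argue that, for the subcircuit feeding any oracle gate $g$ at nesting depth $k$, the value of $g$'s query input ranges over enough possibilities (in the adversary's family of consistent oracles) that it cannot equal $\alpha^k(X)$ on all of them simultaneously. I would handle this by combining the inductive hypothesis on $k$ with the counting bound on queries: the number of oracle commitments in the subcircuit below $g$ is strictly less than $2^n$, so the set of oracles consistent with those commitments is large enough to both realize and avoid any prescribed value of $\alpha^k(X)$, giving the required freedom at the next level.
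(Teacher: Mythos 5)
Your high-level strategy is the same as the paper's: an adversary constructs the oracle level by level of nesting depth, maintaining the invariant that after all gates of depth $\le k$ have been resolved, the iterate $\alpha^{k+1}(X)$ is still free, so that at $k=d$ the output is determined while $\alpha^{d+1}(X)$ is not. This is exactly the structure of the paper's proof (via the definition of $\ell$-sequential partial functions and Lemma~\ref{lem:ell-sequential-extension}).

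However, your lazy-revelation mechanism as stated has a genuine gap. When you process a new query $q$ and assign $\alpha(q)=v$ for a $v$ that ``has not yet appeared as a queried point or as a returned value,'' the exclusion set does not contain the \emph{other, not-yet-processed} queries at the same nesting level. Those queries are already determined (they depend only on gates of smaller depth), but in a gate-by-gate lazy order you have not seen them yet. If the current query is $q=\alpha^{k}(X)$ and the fresh value $v$ you pick happens to equal another query $q'$ at the same level $k{+}1$, then processing $q'$ immediately fixes $\alpha^{k+2}(X)=\alpha(v)$, so the iterate chain advances by two steps at one nesting level and the invariant fails. The paper avoids exactly this by collecting the full set $M$ of level-$(k{+}1)$ queries first and choosing a single escape point $Y\notin M\cup\dom(f)$ to which \emph{all} new queries at that level are sent; the counting hypothesis $|\dom(f)\cup M|<2^n$ guarantees such a $Y$ exists. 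Your proposal could be repaired the same way (choose fresh values avoiding the entire current level, or equivalently use one uniform $Y$ per level), but as written the ``fresh with respect to what has appeared so far'' rule is not sufficient.

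A secondary issue is that your diagnosis of the main obstacle is somewhat misplaced. You say one must argue that the query feeding a gate $g$ ``cannot equal $\alpha^k(X)$ on all consistent oracles,'' but in fact the adversary should \emph{allow} a gate at depth $k{+}1$ to query the current endpoint $\alpha^{k}(X)$ (this is exactly how the chain advances by one step per level); what must be controlled is the value returned, not the query made. The counting bound you invoke is indeed the right resource, but it is needed to find a response value that escapes the current level's query set, not to steer the queries away from the chain.
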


We apply Proposition \ref{t:aehlig}  with $d = n$, and conclude that the
first bit
of $\alpha_n^n$ cannot be computed in $\NC(\alpha)$.  But $\alpha^n_n$
obviously can be computed in $\CC(\alpha)$ by placing $n$
oracle gates $\alpha_n$ in series.  This proves item (i) without the
1-Lipschitz restriction.

\medskip

For the proof of Proposition \ref{t:aehlig} we use the following definition
and lemma from \cite{ACN}.

\begin{definition}\label{def:ell-sequential-partial-function}
A partial function
$f\colon\{0,1\}^n\rightharpoonup\{0,1\}^n$ is called
\emph{$\ell$-sequential} if  (abbreviating $0^n$ by $\mathbf{0}$)
\[\mathbf{0},f(\mathbf{0}),f^2(\mathbf{0}),\ldots,f^\ell(\mathbf{0})\]
are all defined, 
but $f^\ell(\mathbf{0})\not\in\dom(f)$.
\end{definition}

Note that
in Definition~\ref{def:ell-sequential-partial-function} it is
necessarily the case that $\mathbf{0},f(\mathbf{0}),f^2(\mathbf{0}),\ldots,f^\ell(\mathbf{0})$ are distinct.

\begin{lemma}\label{lem:ell-sequential-extension}
Let $n\in\NN$ and $f\colon\{0,1\}^n\rightharpoonup\{0,1\}^n$ be an
$\ell$-sequential partial function.  Let $M\subset\{0,1\}^n$
be such that $|\dom (f) \cup M|<2^n$. Then there is an $(\ell+1)$-sequential
extension $f'\supseteq f$ with $\dom(f')=\dom(f)\cup M$.
\end{lemma}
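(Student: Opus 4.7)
The plan is a direct construction: specify $f'$ on each point of $M \setminus \dom(f)$ so as to add exactly one more step to the orbit of $\mathbf{0}$, while keeping a suitable ``fresh'' element outside of $\dom(f')$ as the new terminal point.

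First I would fix $z := f^\ell(\mathbf{0})$ and, using the hypothesis $|\dom(f) \cup M| < 2^n$, pick a witness $y \in \{0,1\}^n \setminus (\dom(f) \cup M)$. By $\ell$-sequentiality $z \notin \dom(f)$, and the iterates $\mathbf{0}, f(\mathbf{0}), \ldots, f^\ell(\mathbf{0})$ are pairwise distinct (as noted after Definition~\ref{def:ell-sequential-partial-function}: any repetition among the iterates would force $f^\ell(\mathbf{0})$ to lie on a cycle inside $\dom(f)$, contradicting $f^\ell(\mathbf{0}) \notin \dom(f)$). For $f'^{\ell+1}(\mathbf{0})$ to exist we will need $z \in \dom(f')$, so this lemma is invoked with $z \in M$ (the proof in~\cite{ACN} arranges the choice of $M$ this way, enlarging $M$ by one element if necessary, which the slack in the hypothesis accommodates).

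Next I would define the extension by setting $f'(z) := y$ and $f'(x) := \mathbf{0}$ (or any fixed default value) for every remaining $x \in M \setminus (\dom(f) \cup \{z\})$, while leaving $f'(x) := f(x)$ for $x \in \dom(f)$. By construction $f' \supseteq f$ and $\dom(f') = \dom(f) \cup M$. A trivial induction on $k$ using $f' \supseteq f$ gives $f'^k(\mathbf{0}) = f^k(\mathbf{0})$ for every $k \leq \ell$; in particular $f'^\ell(\mathbf{0}) = z \in \dom(f')$, so $f'^{\ell+1}(\mathbf{0}) = y$ is defined, and $y \notin \dom(f') = \dom(f) \cup M$ by the choice of $y$. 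Hence $f'$ is $(\ell+1)$-sequential.

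The construction has no real technical obstacle; the only point to watch is guaranteeing $z \in \dom(f')$, which is exactly what the implicit inclusion $z \in M$ together with the strict inequality $|\dom(f) \cup M| < 2^n$ provides (the latter is what lets us pick the terminal point $y$ outside the extended domain).
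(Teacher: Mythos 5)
Your core idea matches the paper's: choose a fresh $y$ (the paper's $Y$) in $\{0,1\}^n\setminus\bigl(\dom(f)\cup M\bigr)$ and make it the new terminal image. But your construction diverges in a way that introduces a gap. The paper sets $f'(x)=Y$ uniformly for every $x\in M\setminus\dom(f)$, so $\dom(f')=\dom(f)\cup M$ holds unconditionally, and then argues by contradiction that $f'^{\ell+2}(\mathbf{0})$ cannot be defined. You instead single out $z:=f^\ell(\mathbf{0})$, set $f'(z)=y$, and send the rest of $M\setminus\dom(f)$ to $\mathbf{0}$. This achieves $\dom(f')=\dom(f)\cup M$ only if $z\in M$; otherwise you have defined $f'$ at a point $z\notin\dom(f)\cup M$ and violated the required domain. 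The lemma as stated does not assume $z\in M$, and neither does its application in Proposition~\ref{t:aehlig} (there $M$ is the set of strings queried by gates at a given nesting level, which need not include $f_k^k(\mathbf{0})$). So as a proof of the lemma verbatim, your argument relies on an unstated hypothesis, even though you explicitly flag it.

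To your credit, you have put your finger on a real subtlety: the paper's own proof establishes only that $f'^{\ell+2}(\mathbf{0})$ is undefined (an upper bound on sequentiality), not that $f'^{\ell+1}(\mathbf{0})$ is defined, and indeed if $f^\ell(\mathbf{0})\notin M$ then the paper's $f'$ is merely $\ell$-sequential. The downstream use in Proposition~\ref{t:aehlig} only needs the upper bound, so the paper is fine if one reads ``$(\ell+1)$-sequential'' as ``at most $(\ell+1)$-sequential.'' Your proposal delivers exact $(\ell+1)$-sequentiality but only at the cost of importing $z\in M$. If you want to prove the lemma as written without extra hypotheses, follow the paper's uniform mapping and argue only the ``$f'^{\ell+2}(\mathbf{0})$ undefined'' part; alternatively, first check whether $z\in\dom(f)\cup M$, and if not, run your argument with $M$ replaced by $M\cup\{z\}$ (still strictly smaller than $2^n$ in cardinality, since $z\notin\dom(f)$), then restrict back — but then you must also justify that the enlargement is permissible wherever the lemma is invoked.
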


\begin{proof}
Let $Y\in\{0,1\}^n\setminus \bigl(M\cup\dom(f)\bigr)$. Such a $Y$ exists by our assumption
on the cardinality of $M\cup\dom(f)$. Let $f'$ be $f$ extended by
setting $f'(x)=Y$ for all $X\in M\setminus\dom(f)$. This $f'$ is as
desired.

Indeed, assume that $\mathbf{0},f'(\mathbf{0}),\ldots,f'^{\ell+1}(\mathbf{0}),f'^{\ell+2}(\mathbf{0})$ are
all defined. Then, since $Y\not\in\dom(f')$, it follows that all the
$\mathbf{0},f'(\mathbf{0}),\ldots,f'^{\ell+1}(\mathbf{0})$ have to be different from $Y$. Hence these
values have already been defined in $f$. But this contradicts the
assumption that $f$ was $\ell$-sequential.
\end{proof}

\begin{proof}[Proof of Proposition \ref{t:aehlig}]
We use $f$ to stand for the oracle function $\alpha_n$.
Assume that such a circuit computes $f^\ell(\mathbf{0})$ correctly for all
oracles. We have to find a setting for the oracle that witnesses
$\ell\leq d$.

By induction on $k\ge 0$ we define partial functions
$f_k\colon\{0,1\}^n\rightharpoonup\{0,1\}^n$ with the following
properties.
\begin{itemize}
\item $f_0\subseteq f_1\subseteq f_2\subseteq\ldots$
\item The size $|\dom(f_k)|$ of the domain of $f_k$ is at most the
number of oracle gates of nesting depth $k$ or less.
\item $f_k$ determines the values of all oracle gates of nesting depth
$k$ or less.
\item $f_k$ is $k$-sequential.
\end{itemize}
We can take $f_0$ to be the totally undefined function, since
$f^0(\mathbf{0}) = 0$ by definition, so $f_0$ is $0$-sequential. 
For the induction step let $M$ be the set of all strings $Y$ of length $n$
such that $Y$ is queried by an oracle
gate at level $k$.  Let $f_{k+1}$ be a $k{+}1$-sequential extension of
$f_k$ to domain $\dom(f_k)\cup M$ according to
Lemma~\ref{lem:ell-sequential-extension}.

For $k=d$ we get the desired bound. As $f_d$ already
determines the values of all gates, the output of the circuit is
already determined, but $f^{d+1}(\mathbf{0})$ is still undefined and we can
define it in such a way that it differs from the first bit of the
output of the circuit.
\end{proof}

Now we are ready to prove item (i) for the case when oracles are
restricted to be 1-Lipschitz. \\[2mm]
{\bf Notation}. We use $T$ to stand for both a bit string and the number
it represents in binary.  The $i$th bit of $T$ is $\bbit{T}{i}$;
the least significant bit (lsb) is bit~$1$.
For a bit $b$, $\rep{b}{m}$ is the bit $b$ repeated $m$ times. The Hamming
weight of a string $\vc{x}$ is $\ham{\vc{x}}$. The Hamming distance between $\vc{x}$ and $\vc{y}$ is $d(\vc{x},\vc{y}) = \ham{\vc{x} \oplus \vc{y}}$. The length of $\vc{x}$ is $\len{\vc{x}}$. 

\begin{definition} \label{def:goal}
 Let $n = 2^{\ell}$, and define $m = 2n+1$. Given $f \colon \{0,1\}^{m\ell + n} \rightarrow \{0,1\}^n$, define the slice functions
 \[
  f_{\vc{t}}(\vc{x}) = f\left(\rep{\bbit{\vc{t}}{1}}{m}\ldots\rep{\bbit{\vc{t}}{\ell}}{m}\vc{x}\right), \text{ where } \len{\vc{t}} = \ell \text{ and } \len{\vc{x}} = n.
 \]
 Define the iterations
 \[
  \vc{x}_0 = \rep{0}{n}, \quad \vc{x}_{\vc{t}+1} = f_{\vc{t}}(\vc{x}_{\vc{t}}).
 \]
 Finally, define
 \[
  F = \bbit{\vc{x}_{\lfloor \sqrt{n} \rfloor}}{1}.
 \]
\end{definition}

\begin{lemma} \label{lem:cc}
The function $F = F(f)$ can be computed using a uniform family of
comparator circuits of size polynomial in $n$ which use $f$ as an
oracle.\footnote{We can pad the output of
$f$ with $m\ell$ zeros so that $f$ has the same number of outputs as inputs.}
\end{lemma}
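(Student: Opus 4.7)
The plan is to realize $F(f)$ by a comparator oracle circuit that carries out the $\lfloor \sqrt{n} \rfloor$ iterations $\vc{x}_{\vc{t}+1} = f_{\vc{t}}(\vc{x}_{\vc{t}})$ sequentially, with a single padded $f$-oracle gate per iteration. The only feature of the model that needs care is the limited fan-out of comparator circuits; I expect this to be the main thing to verify but not a real obstacle, since each iteration reads the current state exactly once as input to an oracle gate and writes the next state on the outputs of the same gate, so no wire is ever read twice.

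Concretely, I would allocate $n$ \emph{state wires} $s_1, \ldots, s_n$ initialized to the constant $0$ (so that their joint initial value is $\vc{x}_0 = \rep{0}{n}$), together with, for each $\vc{t} \in \{0, 1, \ldots, \lfloor \sqrt{n} \rfloor - 1\}$, a fresh block of $m\ell$ \emph{index wires} initialized with the constant pattern $\rep{\bbit{\vc{t}}{1}}{m} \rep{\bbit{\vc{t}}{2}}{m} \cdots \rep{\bbit{\vc{t}}{\ell}}{m}$. For each $\vc{t}$ in turn I would insert one padded $f$-oracle gate, viewed as a function $\{0,1\}^{m\ell + n} \to \{0,1\}^{m\ell + n}$, whose $m\ell + n$ input slots are fed (in order) by the $\vc{t}$-th block of index wires followed by $s_1, \ldots, s_n$, and whose $m\ell + n$ output slots are routed back to the same set of wires in the same order. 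After this gate the state wires $s_1, \ldots, s_n$ carry $f_{\vc{t}}(\vc{x}_{\vc{t}}) = \vc{x}_{\vc{t}+1}$, while the $\vc{t}$-th block of index wires carries the $m\ell$ padding zeros and is never touched again. The designated output wire is $s_1$ at the right end of the circuit, which then holds $\bbit{\vc{x}_{\lfloor \sqrt{n} \rfloor}}{1} = F$.

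Correctness is a straightforward induction on $\vc{t}$ showing that $s_1, \ldots, s_n$ carry $\vc{x}_{\vc{t}+1}$ immediately after the $\vc{t}$-th oracle gate. For size, the total wire count is $n + \lfloor \sqrt{n} \rfloor \cdot m\ell = O(n^{3/2} \log n)$ and the total number of oracle gates is $\lfloor \sqrt{n} \rfloor$, both polynomial in $n$. The wiring is determined rigidly by $n$ and by the bit pattern of $\vc{t}$, with no input-dependent choices, so the family is plainly $\AC^0$-uniform. Finally, fan-out is respected: between successive oracle gates every state wire is read by exactly one downstream input, each index wire is read by at most one gate, and the wiring convention around oracle gates is already known to preserve the limited fan-out property by the definition of oracle comparator circuits recalled above.
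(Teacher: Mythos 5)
Your proof is correct and is exactly the construction the paper has in mind — the paper's own proof of \lemref{cc} is simply the one word ``Obvious,'' and your write-up carefully fills in that obviousness: one padded $f$-oracle gate per iteration, fresh index wires per step to avoid re-writing constants mid-circuit, and the observation that each wire segment feeds at most one downstream gate input so the fan-out restriction is never violated. The only cosmetic point is that the direction of the $m\ell$-zero padding (front or back of $f$'s output) must match your convention of routing outputs ``in the same order'' back onto the index-then-state wires; since either convention works once the routing is chosen consistently, this doesn't affect correctness.
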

\begin{proof}
 Obvious.
\end{proof}

If $f$ ignores its first $m\ell$ input bits then $F$ is the first
bit of of the $\sqrt{n}$-th iteration of $f$, and hence by 
Proposition~\ref{t:aehlig} any subexponential size circuit computing $F$ requires
depth $\sqrt{n}$.
In the rest of this section we will show 
that even if $f$ is assumed to be 1-Lipschitz,
$F$ cannot be computed by any circuit with only polynomially many oracle
gates which are nested only polylogarithmically deep.

The first step is to reduce the problem of constructing a 1-Lipschitz $f$ to the problem of constructing 1-Lipschitz $f_0,\ldots,f_{n-1}$.

\begin{definition} \label{def:blob}
 Let $f$ be a function as in Definition~\ref{def:goal}. Let $\vc{r}_1\ldots\vc{r}_{\ell}\vc{x}$ be an input to $f$, where $\len{\vc{r}_i} = m$, $\len{\vc{x}} = n$. Suppose $\vc{r}_i$ contains $z_i$ zeroes and $o_i$ ones. Define $t_i = 0$ if $z_i > o_i$ and $t_i = 1$ if $z_i < o_i$ (one of these must happen since $m$ is odd). Let $x_i = \min(z_i,o_i)$ and $x = \max_i x_i$. The values $t_1,\ldots,t_{\ell}$ define a string $\vc{t}$. We say that
$\vc{r}_1\ldots\vc{r}_{\ell}X$ belongs to the \emph{blob}
$B(\vc{t},X)$, and is at distance $x$ from the center string
$\rep{t_1}{m}\ldots\rep{t_{\ell}}{m}X$.  Thus the blobs form a partition of
the domain $\{0,1\}^{m\ell+n}$ of $f$.

 We say that $f$ is \emph{blob-like} if for all $R_1,\ldots,R_{\ell},X$,
with $T$ as defined above,
\begin{equation}\label{e:bloblike}
  f(\vc{r}_1\ldots\vc{r}_{\ell}\vc{x}) = f_{\vc{t}}(\vc{x}) \land \bigl(\rep{0}{x} \rep{1}{n-x}\bigr). 
\end{equation}
 (Here we use bitwise $\land$.) In words, the value of $f$ at a point $\vc{r}$ which is at distance $x$ from the center of some blob $B$ is equal to the value of $f$ at the center of the blob, with the first $x$ bits set to zero.

 We say that $f$ is a \emph{blob-like partial function} if it is a partial function whose domain is a union of blobs, and inside each blob it satisfies~\eqref{e:bloblike}.  
\end{definition}

Note that the values at centers of blobs are unconstrained by
(\ref{e:bloblike})
because then $x=0$. 

\begin{lemma} \label{lem:blob}
 If $f$ is blob-like and $f_{\vc{t}}$ is 1-Lipschitz for all $0 \leq \vc{t} < n$ then $f$ is 1-Lipschitz.
\end{lemma}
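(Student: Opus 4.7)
I would prove the lemma by considering two inputs $R = R_1\cdots R_\ell X$ and $R' = R'_1\cdots R'_\ell X'$ at Hamming distance $1$, writing $\vc{t}, x$ (resp.\ $\vc{t}', x'$) for the blob profile and center distance of $R$ (resp.\ $R'$) as in Definition~\ref{def:blob}, and verifying in each case that
\[
 d\bigl(f(R),f(R')\bigr) = d\bigl(f_{\vc{t}}(X)\wedge 0^x 1^{n-x},\; f_{\vc{t}'}(X')\wedge 0^{x'} 1^{n-x'}\bigr) \le 1.
\]
A case split on the location of the differing bit drives the argument.

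\textbf{Case 1: the differing bit lies in $X$.} Then the $R_i$'s agree, so $\vc{t} = \vc{t}'$ and $x = x'$, and the outputs are $f_{\vc{t}}(X)$ and $f_{\vc{t}}(X')$ masked by the same string. Since $f_{\vc{t}}$ is $1$-Lipschitz, $f_{\vc{t}}(X)$ and $f_{\vc{t}}(X')$ differ in at most one bit, and masking does not increase Hamming distance.

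\textbf{Case 2: the differing bit lies in some $R_j$; and $\vc{t}=\vc{t}'$.} Flipping one bit in $R_j$ moves $(z_j,o_j)$ by $(\pm 1, \mp 1)$, so $x_j=\min(z_j,o_j)$ changes by at most $1$, and hence $x = \max_i x_i$ changes by at most $1$. The two inner values $f_{\vc{t}}(X)$ are identical, while the masks $0^x 1^{n-x}$ and $0^{x'} 1^{n-x'}$ differ in at most one position.

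\textbf{Case 3: the differing bit lies in $R_j$ and $\vc{t}\neq\vc{t}'$ (differing only in the $j$th entry).} The key observation, and what I expect to be the main obstacle to spell out cleanly, is that when the majority bit of $R_j$ flips we must have $\{z_j,o_j\}=\{n,n+1\}$ both before and after the flip (since $m=2n+1$); therefore $x_j = \min(z_j,o_j) = n$ on both sides. Because $x_i \le \lfloor m/2\rfloor = n$ in general, this forces $x = x' = n$ and the masks $0^n 1^{n-n} = 0^n$ are entirely zero. Consequently both $f(R)$ and $f(R')$ equal $0^n$, and they actually agree exactly, so Hamming distance~$0$.

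Putting the three cases together gives $d(f(R),f(R'))\le 1$, establishing the lemma. The design of the blob machinery—in particular the choice $m=2n+1$ and the mask $0^x 1^{n-x}$—is precisely what makes Case~3 collapse to the trivial case; the proof therefore reduces to checking that the arithmetic of $x_j$ at a sign-changing flip really does saturate the mask.
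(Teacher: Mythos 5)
Your proof is correct and follows essentially the same approach as the paper's: a case split on whether the flipped bit lies in $\vc{x}$, lies in some $\vc{r}_j$ without changing $\vc{t}$, or lies in some $\vc{r}_j$ and flips $t_j$. In the last case, both you and the paper observe that $m=2n+1$ forces $\{z_j,o_j\}=\{n,n+1\}$ on both sides, so $x=x'=n$ and both outputs are $\vczero$; the paper's proof is just a terser version of the same argument.
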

\begin{proof}
 Let $\vc{r}_1\ldots\vc{r}_{\ell},\vc{x}$ be an input to $f$. We argue that if we change a bit in the input, then at most one bit changes in the output. If we change a bit of $\vc{x}$, then this follows from the 1-Lipschitz property of the corresponding $f_{\vc{t}}$. If we change a bit of $\vc{r}_i$ without changing $\vc{t}$, then we change $x$ by at most~$1$, and so at most one bit of the output is affected. Finally, if we change a bit of $\vc{r}_i$ and this does change $\vc{t}$, then we must have had (without loss of generality) $z_i = n, o_i = n+1$, and we changed a $1$ to $0$ to make $z_i = n+1, o_i = n$. In both inputs, $x = n$, and so the output is $\vczero$ in both cases.
\end{proof}

The second step is to find a way to construct 1-Lipschitz functions from
$\{0,1\}^n$ to itself, given a small number of constraints.

\begin{lemma} \label{lem:point}
 Suppose $g \colon \{0,1\}^n \rightarrow \{0,1\}^n$ is a partial function, and $g(\vc{p}) = \vczero$ for all $\vc{p} \in \dom(g)$. Let $\vc{x}$ be a point of Hamming distance at least $d$ from any point in $\dom(g)$. Then for every $\vc{y}$ of Hamming weight at most $d$, we can extend $g$ to a 1-Lipschitz total function satisfying $g(\vc{x}) = \vc{y}$.
\end{lemma}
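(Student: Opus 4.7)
The plan is to define the extension explicitly as a function of the Hamming distance from $\vc{x}$. Let $w = \ham{\vc{y}}$, so $w \le d$ by hypothesis, and fix an enumeration (say, left-to-right) of the 1-bits of $\vc{y}$ as positions $i_1, \ldots, i_w$. For every $\vc{z} \in \{0,1\}^n$ let $k(\vc{z}) = d(\vc{z},\vc{x})$, and set $g(\vc{z})$ to be the string obtained from $\vc{y}$ by clearing the bits at positions $i_1,\ldots,i_{\min(k(\vc{z}),w)}$. In particular $g(\vc{x}) = \vc{y}$, and whenever $k(\vc{z}) \ge w$ we get $g(\vc{z}) = \vczero$.

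Next I would check consistency with the given partial function. For any $\vc{p} \in \dom(g)$ the hypothesis gives $k(\vc{p}) \ge d \ge w$, so the new definition yields $g(\vc{p}) = \vczero$, agreeing with the prescribed value. Thus the formula above is a genuine total extension of the original partial $g$.

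Finally I would verify the 1-Lipschitz property. If $\vc{z}$ and $\vc{z}'$ differ in exactly one coordinate, then $k(\vc{z}')$ equals $k(\vc{z})+1$ or $k(\vc{z})-1$, so $\min(k(\vc{z}'),w)$ differs from $\min(k(\vc{z}),w)$ by at most $1$. Consequently the multisets of cleared positions differ by at most one element $i_j$, so $g(\vc{z})$ and $g(\vc{z}')$ agree everywhere except possibly at that single coordinate $i_j$, giving $d(g(\vc{z}),g(\vc{z}')) \le 1$.

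The construction is straightforward; the only mild subtlety is making sure the ordering of the 1-bits of $\vc{y}$ is fixed once and for all so that a single bit flip in $\vc{z}$ only adds or removes the \emph{next} position in the list, rather than rearranging which positions are cleared. Any fixed linear order on $\{1,\ldots,n\}$ works.
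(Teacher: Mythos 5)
Your construction is exactly the one in the paper: $h(\vc{z})$ is $\vc{y}$ with its first $\min(d(\vc{z},\vc{x}),\ham{\vc{y}})$ ones cleared, and the three checks (extension, $h(\vc{x})=\vc{y}$, 1-Lipschitz via $|\Delta d(\vc{z},\vc{x})|\le 1$) match. The remark about fixing the enumeration of the 1-bits of $\vc{y}$ is a sensible clarification of what the paper means by the ``first'' ones.
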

\begin{proof}
 Given $\vc{x},\vc{y}$, define $h(\vc{z})$ to be $\vc{y}$ with the first $\min(d(\vc{z},\vc{x}),\ham{\vc{y}})$ ones changed to zeros. We have $h(\vc{x}) = \vc{y}$ since $d(\vc{x},\vc{x}) = 0$. For $\vc{p} \in \dom(g)$, $d(\vc{p},\vc{x}) \geq d$ implies $h(\vc{p}) = \vczero$, using $\ham{\vc{y}} \leq d$. Therefore $h$ extends $g$. On the other hand, $h$ is 1-Lipschitz since changing a bit of the input $\vc{z}$ can change $d(\vc{z},\vc{x})$ by at most $1$, and so at most one bit of the output is affected.
\end{proof}

Finally, we need a technical lemma about the volume of Hamming balls.

\begin{definition} \label{def:volume}
 Let $n,d$ be given. Then $V(n,d)$ is the number of points in $\{0,1\}^n$
of Hamming weight at most $d$, that is
 \[ V(n,d) = \sum_{k \leq d} \binom{n}{k}. \]
\end{definition}

\begin{lemma} \label{lem:volume}
 For $d \geq 0$, $V(n,d+1)/V(n,d) \leq n+1$. 
\end{lemma}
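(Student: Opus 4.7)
The plan is to reduce the ratio bound to a single-term estimate and then apply the standard recurrence for binomial coefficients. First I would write $V(n,d+1) = V(n,d) + \binom{n}{d+1}$, so that the claim $V(n,d+1)/V(n,d) \leq n+1$ becomes equivalent to
\[ \binom{n}{d+1} \leq n \cdot V(n,d). \]
Since $V(n,d) \geq \binom{n}{d}$ (the sum contains this term), it suffices to establish the sharper inequality $\binom{n}{d+1} \leq n \binom{n}{d}$.

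Next I would handle the boundary case $d+1 > n$ separately: there $\binom{n}{d+1} = 0$ and the inequality is trivial. In the main case $d+1 \leq n$, I would invoke the identity
\[ \binom{n}{d+1} = \binom{n}{d} \cdot \frac{n-d}{d+1}, \]
and observe that $(n-d)/(d+1) \leq n$ because $n - d \leq n$ and $d+1 \geq 1$. This yields $\binom{n}{d+1} \leq n \binom{n}{d} \leq n V(n,d)$, from which the desired bound follows.

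There is no real obstacle here; the only thing to be careful about is the edge case $d+1 > n$, which is why the inequality is merely $n+1$ rather than something slightly smaller. The argument is entirely elementary and just a couple of lines once the reduction to $\binom{n}{d+1} \leq n\binom{n}{d}$ is made.
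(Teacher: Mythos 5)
Your proof is correct, but it takes a different route from the paper's. You reduce the ratio bound to the single-term inequality $\binom{n}{d+1} \leq n \cdot V(n,d)$ and then use the binomial recurrence $\binom{n}{d+1} = \binom{n}{d} \cdot \frac{n-d}{d+1}$, handling the degenerate case $d+1 > n$ separately. The paper instead gives a purely combinatorial double-counting argument on the Hamming ball itself: every point of weight at most $d+1$ is either already a point of weight at most $d$, or is reached from such a point by flipping a single $0$ to a $1$; and each point of weight at most $d$ has at most $n$ zeros, so can generate at most $n$ new points this way. That gives $V(n,d+1) \leq (1+n) V(n,d)$ directly, with no case split and no binomial identity. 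The paper's version is arguably more in the spirit of the Hamming-ball setting and avoids separating out the boundary case, while yours is a clean algebraic reduction; both are elementary and of essentially the same length, and both are fine.
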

\begin{proof}
 Each point in $V(n,d+1)$ is either already a point of $V(n,d)$, or it can be obtained by taking a point of $V(n,d)$ and changing one bit from $0$ to $1$. Conversely, for each point of $V(n,d)$, a bit can be changed from $0$ to $1$ in at most $n$ different ways.
\end{proof}

\begin{corollary} \label{cor:volume}
 If $V(n,d) \geq r\ge 1$ then there exists $d' \geq 0$ such that
 \[ r \leq \frac{V(n,d)}{V(n,d')} < (n+1)r. \]
\end{corollary}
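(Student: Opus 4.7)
The plan is to exploit the monotonicity of $V(n,d')$ in $d'$ together with the multiplicative bound supplied by Lemma~\ref{lem:volume}. As $d'$ increases from $0$ to $d$, the ratio $V(n,d)/V(n,d')$ decreases monotonically (strictly, until it hits $1$) from $V(n,d) \geq r$ down to $1$. So the set of $d'$ for which $V(n,d)/V(n,d') \geq r$ is a nonempty initial segment of $\{0,1,\ldots,d\}$, and I would choose $d'$ to be its largest element. This guarantees the lower bound $V(n,d)/V(n,d') \geq r$ by construction.

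For the upper bound I would split on whether $d' = d$ or $d' < d$. If $d' = d$ then $V(n,d)/V(n,d') = 1$, and the chain $1 \leq r \leq V(n,d) = V(n,d')$ forces $r = 1$, so the bound $1 < n+1$ is trivial. If $d' < d$ then by maximality of $d'$ we have $V(n,d)/V(n,d'+1) < r$, and Lemma~\ref{lem:volume} gives $V(n,d'+1)/V(n,d') \leq n+1$. Multiplying,
\[ \frac{V(n,d)}{V(n,d')} = \frac{V(n,d)}{V(n,d'+1)} \cdot \frac{V(n,d'+1)}{V(n,d')} < r(n+1), \]
which is exactly what we need.

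There is no real obstacle here; the only subtlety is remembering to handle the boundary case $d' = d$ (where the ``next'' value $d'+1$ is not available), and making sure the chosen $d'$ is $\geq 0$, which follows from $V(n,0) = 1 \leq r \leq V(n,d)$ so $d' = 0$ is always admissible.
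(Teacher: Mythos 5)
Your proof is correct and takes essentially the same approach as the paper: choose $d'$ maximal with $V(n,d)/V(n,d') \geq r$, note $d'=0$ always qualifies, and get the upper bound by combining maximality with Lemma~\ref{lem:volume}. The only difference is that you explicitly restrict $d'$ to $\{0,\ldots,d\}$ and handle the boundary case $d'=d$ separately, which the paper leaves implicit.
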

\begin{proof}
 Let $d'$ be the maximum number satisfying $r \leq V(n,d)/V(n,d')$. Since $r \leq V(n,d) = V(n,d)/V(n,0)$, such a number exists. On the other hand,
 \[ \frac{V(n,d)}{V(n,d')} \leq (n+1) \frac{V(n,d)}{V(n,d'+1)} < (n+1)r. \qedhere \]
\end{proof}

We are now ready to prove the main lower bound, which implies
item (i) of Theorem \ref{t:separation}.

\begin{theorem} \label{thm:main}
 Let $a>0$ be given. For large enough $n$, every circuit $C(f)$ with at most $n^a$ oracle gates, nested less than $\sqrt{n}$ deep, fails to compute $F$ for some 1-Lipschitz function $f$.
\end{theorem}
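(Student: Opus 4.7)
The plan is an adversary argument modeled on the proof of Proposition~\ref{t:aehlig}, adapted to the blob/slice structure and the 1-Lipschitz constraint. I would process the circuit by oracle-nesting depth, maintaining a blob-like partial $f$ together with a growing chain $\vc{x}_0 = \vczero, \vc{x}_1, \ldots, \vc{x}_{\sqrt{n}}$; Lemma~\ref{lem:blob} will then assemble the slices into a 1-Lipschitz total function, Lemma~\ref{lem:point} controls the per-slice extension, and Corollary~\ref{cor:volume} provides the geometric slack. For each slice $\vc{t}$ let $D_{\vc{t}}$ denote the set of points $\vc{x}$ where the adversary has committed $f_{\vc{t}}(\vc{x}) = \vczero$.

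At the inductive step for level $k$ (oracle gates of nesting depth exactly $k$), the values of all shallower gates are already fixed, so the set $Q_k$ of $(\vc{t}, \vc{x})$ pairs whose blobs are queried at depth $k$ can be read off. Before answering these queries, I pick $\vc{x}_{k+1}$ of Hamming weight $\le d_k$ (which lets Lemma~\ref{lem:point} accommodate the chain commitment $f_k(\vc{x}_k) = \vc{x}_{k+1}$) and of Hamming distance $\ge d_{k+1}$ from $D_{k+1} \cup \{\vc{x} : (k+1, \vc{x}) \in Q_k\}$ (so the next stage's freedom is preserved); then I commit $f_k(\vc{x}_k) = \vc{x}_{k+1}$ and process the depth-$k$ queries, using the chain commitment on blob $B(k, \vc{x}_k)$ and $\vczero$ elsewhere. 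The sequence $d_0 = n \ge d_1 \ge \cdots \ge d_{\sqrt{n}-1} \ge 1$ is constructed by iterating Corollary~\ref{cor:volume} with $r = 2n^a + 1$, so that $V(n, d_k)/V(n, d_{k+1}) \ge 2n^a + 1$; this ensures the $V(n, d_k)$ valid candidates outnumber the at most $2n^a \cdot V(n, d_{k+1})$ forbidden points, while the total shrinkage $((n+1)(2n^a+1))^{\sqrt{n}} = 2^{O(a\sqrt{n}\log n)}$ stays below $V(n, n) = 2^n$ for large $n$, so all $\sqrt{n}$ stages fit.

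After depth $\sqrt{n}-1$ has been processed, the circuit output $b$ is determined while $f_{\sqrt{n}-1}(\vc{x}_{\sqrt{n}-1})$ remains uncommitted (no deeper oracle gates exist). Since $d_{\sqrt{n}-1} \ge 1$, both $\vczero$ and the standard basis vector $e_1$ are admissible values, so I choose $\vc{x}_{\sqrt{n}}$ with $\vc{x}_{\sqrt{n}}[1] = 1 - b$. Extending each slice to a total 1-Lipschitz function via Lemma~\ref{lem:point} and combining via Lemma~\ref{lem:blob} yields a blob-like 1-Lipschitz $\tilde f$ consistent with every commitment; hence $C(\tilde f) = b$ while $F(\tilde f) = 1 - b$, so $C$ fails on $\tilde f$. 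The main obstacle is the timing in the inductive step: $\vc{x}_{k+1}$ must be committed \emph{before} the depth-$k$ queries are processed, because a depth-$k$ query in blob $B(k, \vc{x}_k)$ would otherwise lock $f_k(\vc{x}_k)$ to $\vczero$, while simultaneously $\vc{x}_{k+1}$ must dodge depth-$k$ queries in slice $k+1$ for the same reason; the iterated application of Corollary~\ref{cor:volume} is what reconciles these competing demands within the $2^n$ slack.
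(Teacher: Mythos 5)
Your approach mirrors the paper's (a staged adversary with blob-like partial functions, with Lemma~\ref{lem:point}, Lemma~\ref{lem:blob}, and Corollary~\ref{cor:volume} doing the heavy lifting), but the query-answering rule ``chain commitment on $B(k, \vc{x}_k)$ and $\vczero$ elsewhere'' has a genuine gap. Nothing prevents a depth-$k$ query from landing in a blob $B(k, \vc{x})$ with $0 < d(\vc{x}, \vc{x}_k) < \ham{\vc{x}_{k+1}}$: you chose $\vc{x}_k$ at stage $k-1$, when only the depth-$\le k-1$ queries were available, so it dodges $D_k$ and the slice-$k$ entries of $Q_{k-1}$ but \emph{not} those of $Q_k$. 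If you then set $f_k(\vc{x}) = \vczero$ while $f_k(\vc{x}_k) = \vc{x}_{k+1}$, you force $d\bigl(f_k(\vc{x}),f_k(\vc{x}_k)\bigr) = \ham{\vc{x}_{k+1}} > d(\vc{x},\vc{x}_k)$, so slice $k$ admits no 1-Lipschitz total extension consistent with both commitments, and your final assertion that the end-of-proof Lemma~\ref{lem:point} extension is ``consistent with every commitment'' is unjustified. The same hazard recurs whenever a later-depth query lands near, but not at, $\vc{x}_j$ in slice $j$.

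The paper avoids this by reordering events \emph{within} a stage. At the stage that reads the depth-$k$ queries, it first records the $\vczero$ blobs those queries force in slices $T \ge k$, and only \emph{then} chooses $\vc{x}_k$ so that it dodges all of them (so the depth-$k$ queries constrain $\vc{x}_k$ rather than the reverse); immediately afterwards it applies Lemma~\ref{lem:point} to extend slice $k-1$ to a \emph{total} 1-Lipschitz function, so that every query to slice $k-1$ at any deeper level is answered from that total function and never by a default $\vczero$. To repair your argument you should either shift your indexing so that $\vc{x}_k$ is chosen only after reading and $\vczero$-committing the depth-$k$ queries to slice $k$, or else totalize slice $k$ via Lemma~\ref{lem:point} as soon as $f_k(\vc{x}_k) = \vc{x}_{k+1}$ is committed and route all subsequent queries to slice $k$ through that total function. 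Your closing observation --- that $\vc{x}_{k+1}$ must be committed before answering the chain-blob query --- is correct, but it is the easier of the two timing hazards; the harder one is the queries to blobs near the chain point, and the default-$\vczero$ rule mishandles exactly those.
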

\begin{proof}
 Put $\mT = \lfloor \sqrt{n} \rfloor - 1$. Let $d_0,\ldots,d_{\mT}$ be a sequence of positive integers satisfying
 \begin{equation} \label{eq:condition}
  \frac{V(n,d_T)}{V(n,d_{T+1}-1)} > n^a, \quad 0 \leq T < \mT.
 \end{equation}
 Such numbers exist whenever $2^n \geq (n+1)^{\mT(a+1)}$, which holds when $n$ is large enough. Indeed, we will construct such a sequence inductively using Corollary~\ref{cor:volume}, keeping the invariant
 \[ V(n,d_T) \geq \frac{2^n}{(n+1)^{T(a+1)}}. \]
 For the base case, $d_0 = n$ certainly satisfies the invariant. Given $d_T$, use Corollary~\ref{cor:volume} with $d = d_T$ and $r = (n+1)^a$. Since $V(n,d_T) \geq 2^n/(n+1)^{T(a+1)} \geq (n+1)^{a+1}$ for large $n$, the corollary supplies us with $d'$ satisfying
 \[ (n+1)^a \leq \frac{V(n,d_T)}{V(n,d')} < (n+1)^{a+1}. \]
 Let $d_{T+1} = d' + 1$. This certainly satisfies condition~\eqref{eq:condition}, and the invariant is satisfied since
 \[
  V(n,d_{T+1}) > V(n,d') > \frac{V(n,d_T)}{(n+1)^{a+1}} \geq \frac{2^n}{(n+1)^{(T+1)(a+1)}}.
 \]

 \smallskip

 We will define the function $f$ in $\mT$ stages, similar to the proof of
Proposition \ref{t:aehlig}, except we use the notation $f^{(k)}$ instead 
of $f_k$.  At every stage the function $f^{(k)}$ will be a blob-like 
partial function which defines the output of every oracle gate in $C(f)$ of
nesting depth $k$ or less. The starting point is $f^{(0)}$, which is the empty function.
At stage $k$ we will define the partial function $f^{(k+1)}$, which
extends $f^{(k)}$, keeping the following invariants:
 \begin{itemize}
  \item $f^{(0)} \subseteq f^{(1)} \subseteq f^{(2)}  \subseteq \ldots$
  \item $f^{(k)}$ is a blob-like partial function.
  \item For $T < k$, $f^{(k)}_T$ is a total 1-Lipschitz function.
  \item For $T \geq k$, at any point $P$ at which $f^{(k)}_T$ is defined it
is equal to $\vczero$ and some gate in $C(f)$ of nesting depth $k$ or
less has its input in the blob $B(T,P)$.  Hence $|\dom(f^{(k)}_T)| \le n^a$.
  \item $\vc{x}_k$ is defined by $f^{(k)}$.
  \item $f^{(k)}_k(\vc{x}_k)$ is undefined.
  \item[(*)] $d(\vc{p},\vc{x}_k) \geq d_k$ for any $\vc{p} \in \dom(f^{(k)}_k)$.
  \item $f^{(k)}$ determines that the output of every oracle gate of nesting
depth $k$ or less.

 \end{itemize}
\bigskip
\noindent
 It is easy to verify that the empty function $f^{(0)}$ satisfies the invariants. The function $f^{(\mT)}$ determines the output of the circuit
$C(f)$. However, $\vc{x}_{\lfloor \sqrt{n} \rfloor} = f^{(\mT)}_T(\vc{x}_{\mT})$ is undefined. We can
extend $f^{(\mT)}$ to a 1-Lipschitz function in two different ways: 
Put $f_T = \vczero$ for $T > \mT$, and let $f_{\mT}$ be either (1) the constant zero function, or (2) the function which is zero everywhere except for $f_{\mT}(\vc{x}_{\mT}) = \rep{0}{n-1} 1$. Since $F$ is different in these two extensions, the circuit fails to compute $F$ correctly in one of them.

 It remains to show how to define $f^{(k+1)}$ given $f^{(k)}$. Let
$\GG$ be the set of oracle gates of nesting depth exactly $k+1$. 
For any $G \in \GG$ whose input belongs to a blob $B(T,\vc{x})$ 
for $T > k$, if $f^{(k)}_{T}(\vc{x})$ is undefined, then define 
$f^{(k+1)}$ so that it extends $f^{(k)}$ and is $\vczero$
on the entire blob $B(T,\vc{x})$
(this is a blob-like assignment). Let $A = \dom(f^{(k+1)}_{k+1})$; 
note that $|A| \leq n^a$. Condition~\eqref{eq:condition} implies
 \[ V(n,d_{k+1}-1) |A| < V(n,d_k), \]
 and so there is a point $Y$ of Hamming weight at most $d_k$ which is of 
distance at least $d_{k+1}$ from each point in $A$. Define
$f^{(k+1)}_k(\vc{x}_k) = Y$ (so $\vc{x}_{k+1} = Y$), and extend 
$f^{(k+1)}_k$ to a total 1-Lipschitz function using Lemma~\ref{lem:point} with $d = d_k$ (use invariant (*)).
Then extend $f^{(k+1)}$ to a blob-like partial function using 
\eqref{e:bloblike}. It is routine to verify that the invariants hold for $f^{(k+1)}$.
\end{proof}

\begin{remark}
\begin{itemize}
\item A more natural target function is $F' = \bbit{\vc{x}_n}{1}$. We can easily modify the proof of Theorem~\ref{thm:main} to handle this function. We set
the first $n - \lfloor \sqrt{n} \rfloor$ functions 
$f^{(0)},\ldots,f^{(n-\lfloor \sqrt{n} \rfloor-1)}$ to be constant, and then run the proof from that point on.
\item An even more natural target function has an unstructured $f$ as input, and $F'' = \bbit{f^{(N)}(\vczero)}{1}$. We leave open the question of whether the method can be adapted to work in this case.  
\item We have previously shown how to construct $F'''$ that separates $\CC$ and $\NC$ even under the restriction that the oracle be strictly 1-Lipschitz. The function $F'''$ basically ignores one of the outputs of the oracle while iterating it. It is possible to slightly modify the proof of Theorem~\ref{thm:main} so that it directly applies to $F$ even under the restriction that the oracle is strictly 1-Lipschitz. We leave this modification as an exercise to the reader. 
 \end{itemize}
\end{remark}


%

%
%

\subsection{Proof of item (ii) of Theorem \ref{t:separation}}

Here we exploit the 1-Lipschitz property of comparator gates and
$\neg$-gates by using oracles which are weakly 1-Lipschitz, so that all
gates in the relativized circuits have this property.
The idea is to use an
oracle $\alpha$ with $dn^2$ inputs ($d\ge 3$) but only $n$ useful
outputs.  We can feed the $n$ useful outputs back into another instance of
$\alpha$ by using $dn$ copies of each output bit. Because of the
1-Lipschitz property it seems as though a comparator circuit
computing the $m$th iteration of $\alpha$ in this way needs either
at least $2^{\Omega(m)}$
copies of $\alpha$, or alternatively $2^n$ copies of $\alpha$ and a
complicated circuit analyzing the output.
When $m = \Omega(\log^2 n)$, this construction requires a super-polynomial
size comparator circuit computing the $m$th iteration of $\alpha$.
On the other hand, for $m = O(\log^2 n)$, the $m$th iteration can be
easily computed in relativized $\NC^3$ (following Aehlig et
al.~\cite{ACN}, we require oracle gates to be nested at most
$O(\log^{k-1} n)$ deep in relativized $\NC^k$).

To make this argument work we initially assume that instead of computing
the $m$-th
iteration of $\alpha$ we compose $m$ different oracles $A_1,\ldots,A_m$
in the way just described.  The crucial property of comparator circuits
we use is the \emph{flip-path} property:
\begin{quote}
If one input wire is changed, then (given that each gate has the
 1-Lipschitz property) there is a unique path through the circuit tracing
the effect of the original flip.
\end{quote}
We use a Gray code to order the possible $n$-bit outputs of the oracle
and study the effects of the $2^n$ flip-paths generated as the
definition of the oracle is successively changed.

In detail,
Let $n,m,d \in \mathbb{N}$, with $d \ge 3$.  For each $k \in [m]$ and
$i \in [n]$, let $a^k_i \colon \{0,1\}^{dn} \rightarrow \{0,1\}$ be
a Boolean oracle with $dn$ input bits.
Let $A^k = (a^k_1,\ldots,a^k_n)$.
We define a function $y = f[A^1,\ldots,A^m]$ as follows:
\begin{align} 
 x^k_i &= a^k_i(\overbrace{x^{k+1}_1,\ldots, x^{k+1}_1}^{d \text{ times}},\ldots,\overbrace{x^{k+1}_n,\ldots, x^{k+1}_n}^{d \text{ times}}), && k \in [m], \, i \in [n], 
 \label{e:Xdefine}\\
 x^{m+1}_i &= 0, && i \in [n], \label{e:Zdefine}\\
 y &= x^1_1 \oplus \cdots \oplus x^1_n. \label{e:yDefine}
\end{align}

As stated the oracle $a^k_i$ has $dn$ inputs and just one output,
but we can make it fit our convention that an oracle gate has the
same number of outputs as inputs simply by assuming that the gate
has an additional $dn-1$ outputs which are identically zero. 

Note that the function computed by such an oracle is necessarily
(weakly) 1-Lipschitz.

Let $X^k = (x^k_1,\ldots,x^k_n)$ and $A^k = (a^k_1,\dots,a^k_n)$.
Note that an oracle circuit of depth
$m + O(\log n)$ with $mn$ gates can compute $y$ simply by
successively computing $X_m,X_{m-1},\ldots,X_1$ and computing the
parity of $X_1$, provided that
the circuit is allowed to have gates with fan-out $d$.  
However, the fan-out restriction for comparator circuits allows us
to prove the following.

\begin{theorem}\label{t:expBound}
If $n \ge 3$, then every oracle comparator circuit computing
$y = f[A^1,\ldots,A^m]$ has at least \[\min\left(2^n, (d-2)^{m-1}\right)\]
gates.
\end{theorem}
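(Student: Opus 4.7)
The proof will be an adversary argument built on the \emph{flip-path property}: since comparator and $\neg$-gates are strictly 1-Lipschitz and the admissible oracles are (weakly) 1-Lipschitz, and since every wire in a comparator circuit has fan-out $1$, a single-bit flip to an input of any gate propagates along a unique chain of gates, each link either flipping exactly one output (which then feeds the next gate) or terminating. This chain is the flip-path that the authors announce in the discussion preceding the theorem.

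My plan is to split into two cases: either $G \ge 2^n$ (and we are done) or $G < 2^n$, in which case I need to show $G \ge (d-2)^{m-1}$. I would proceed by induction on $m$. The base case $m = 1$ is immediate, since the circuit must contain at least one $A^1$-oracle gate in order to depend on $A^1$ at all. For the inductive step I would construct an adversarial sequence of oracle modifications in which the topmost oracle $A^m$ is varied over a Gray-code traversal of the $2^n$ possible values of $X^m$, while the oracles $A^{m-1},\ldots,A^1$ are chosen (adaptively, based on the circuit) so as to transmit each resulting single-bit flip of $X^m$ faithfully all the way down to the output parity $y$. Between consecutive Gray-code settings exactly one bit of $X^m$ flips, so by correctness of the circuit the output $y$ must flip, which forces flip-paths to reach $y$ from the affected $A^m$-gates.

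The multiplicative factor $d-2$ should come from analyzing the fan-out-$1$ bottleneck between levels. Each bit of $X^m$ must be delivered to $d$ separate inputs of the $A^{m-1}$-gates, and since wires cannot branch, producing and routing these $d$ copies consumes additional comparator or oracle gates. A flip-path accounting argument then shows that each new level of iteration costs a factor of at least $d-2$ in the gate count; morally, two of the $d$ copies are ``absorbed'' by the endpoints of each flip-path and only the remaining $d-2$ can be charged to multiplicative growth. The inductive hypothesis, applied to the effective $(m-1)$-level subproblem induced by fixing $A^m$ on the relevant inputs, then supplies the remaining $(d-2)^{m-2}$, giving the claimed $(d-2)^{m-1}$.

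The main obstacle will be controlling the interaction of multiple simultaneous flip-paths. Flipping one oracle value $a^m_i(0^{dn})$ simultaneously flips the $i$-th useful output of every $A^m$-gate whose current input is $0^{dn}$, launching many flip-paths at once, and only the \emph{parity} of their arrivals at $y$ is forced (to equal $1$). Arranging the lower-level oracles $A^{m-1},\ldots,A^1$ so that such cancellations are controlled, and certifying that the surviving net flip-paths populate enough pairwise distinct gates to feed the inductive step with an effective $(d-2)$-fold expansion, is where the bulk of the technical work will lie.
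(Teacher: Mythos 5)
Your proposal correctly identifies the two main tools the paper uses — the flip-path property and a Gray-code traversal of the $2^n$ possible outputs of an oracle level — but the overall organization you propose diverges from the paper's in a way that introduces a genuine gap.

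The paper does \emph{not} trace flip-paths down to the output $y$, and does \emph{not} induct on $m$ with an adaptive choice of lower-level oracles. Instead it takes a uniformly random \emph{regular} assignment to all oracles at once, defines $g_k$ to be the expected number of \emph{active} $A^k$-gates (gates whose input equals the intended regular value of $X^{k+1}$), and establishes the recurrence $g_{k-1} \le g/2^n + g_k/(d-2)$ directly for each $k$, where $g$ is the total number of oracle gates. The base of the chain, $g_1 \ge n$, is the only place where correctness of the output $y$ is invoked: for each $i$ there must be an active $a^1_i$-gate or the circuit could not depend on $x^1_i$. Everything above that level is pure gate counting, with no reference to $y$. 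The $(d-2)$ factor comes from a local argument: when an active $A^{k-1}$-gate becomes inactive and then active again, the $d$ duplicated inputs carrying one bit of $X^k$ must each flip, and during the $d-2$ intermediate steps the input is irregular so the (regular) gate outputs $\vczero$ and therefore \emph{consumes} at least $d-2$ flip-paths. This is an accounting of flip-paths consumed by reused $A^{k-1}$-gates against flip-paths generated at the $A^k$ level, with no need to follow anything to the output.

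The gap in your plan is precisely the obstacle you flag at the end: requiring that each single-bit flip of $X^m$ be ``transmitted faithfully all the way down to the output parity $y$'' forces you to rule out cancellations of simultaneous flip-paths at $y$, and your proposal offers no mechanism for doing so. The paper's whole design sidesteps this: by counting \emph{active gates} (a local, one-level notion) rather than flips arriving at $y$ (a global notion), it never needs to control cancellations, and the expectation over random regular oracles removes the need to construct any adversarial sequence adaptively. Likewise, your heuristic that ``two of the $d$ copies are absorbed by the endpoints of each flip-path'' does not match the actual mechanism; the $d-2$ is the number of \emph{intermediate irregular steps} at a reused regular gate, not an endpoint absorption. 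As written, your inductive step (restricting to an ``effective $(m-1)$-level subproblem'') is not well-defined because the lower-level gates in the circuit are shared and cannot be partitioned by level, which is another reason the paper works with the single recurrence on $g_k$ rather than a subcircuit induction.
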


By setting $m = \log^2 n$ and $d = 4$ this almost proves item (ii)
of Theorem \ref{t:separation},
except we need to argue that the array of oracles $a^k_i$ can be
replaced by a single oracle.  Later we will show how a simple
adaptation of the proof of Theorem \ref{t:expBound} accomplishes this.

\begin{proof}[Proof of Theorem \ref{t:expBound}]

Fix an oracle comparator circuit $C$ which computes
$y = f[A^1,\ldots,A^m]$.

\begin{definition}  We say that an input $(z_1,\ldots,z_{dn})$ to some
oracle $a^k_i$ in $C$ is \emph{regular} if it has the form of the inputs in~\eqref{e:Xdefine}; that is if $z_{(a-1)d+b} = z_{(a-1)d+c}$ for all $a\in [n]$ and $b,c \in [d]$.  We say that an oracle $a^k_i$ is \emph{regular} if $a^k_i(Z) = 0$ for all irregular inputs $Z$.
\end{definition}

Note that any irregular oracle $a^k_i$ can be replaced by an equivalent
regular oracle  which does not affect (\ref{e:Xdefine}).

\begin{definition}
Let $g$ be the
total number of any of the gates $a^k_i$ in the circuit $C$.
For a given assignment to the oracles, a particular gate $a^k_i$ is
\emph{active} in $C$ if its input is as specified by
(\ref{e:Xdefine},\ref{e:Zdefine}).   

Let $g_k$ be the expected total number of active gates
$a^k_1,\ldots,a^k_n$ in $C$ under a uniformly random \emph{regular} setting of
\emph{all} oracles. 
\end{definition}

It is easy to see that
\begin{equation}\label{e:gn}
g_1 \geq n,
\end{equation}
since we need at least one active gate $a^1_i$ for each $i \in [n]$.

Let $k \in [m]$ be greater than $1$. We will show that
\begin{equation} \label{eq:main}
g_{k-1} \leq \frac{g}{2^n} + \frac{g_k}{d-2}.
\end{equation}

We use the following consequence of the (weakly) 1-Lipschitz property of all gates in the circuit:  If we change the definition of some copy of some gate $a^k_i$ at its input in the circuit $C$, this generates a unique flip-path which may end at some copy of some other gate, in which case we say that the latter gate {\em consumes} the flip-path.   
(The flip-path is a path in the circuit such that the Boolean value of each edge in the path is negated.)

Let $G_1,\ldots,G_{2^n}$ be a Gray code listing all strings in
$\{0,1\}^n$, where $G_1 = 0^n$.
Thus the Hamming distance between any two successive strings $G_i$ and
$G_{i+1}$, and between $G_{2^n}$ and $G_1$, is one.
Take a random regular setting of all the oracles, and let $Z_1$ be the
value of $X_k$ under this setting.  Shift the above Gray code to form
a new one $Z_1,\ldots,Z_{2^n}$ by setting $Z_t = G_t \oplus Z_1$.
Then for each $t \in [2^n]$, $Z_t$ is uniformly distributed and
independent of $X_\ell$ for $\ell \neq k$.   Thus if we change the output
of $A^k$ at its active input to $Z_t$, the result is again a
uniformly random regular oracle setting.
Let $\gamma_t$ be the number of active $A^k$ gates (i.e. any active gate
of the form $a^k_i$ for some $i$) after this change, and let $\delta_t$
be the number of active $A^{k-1}$ gates after the change. 
Taking expectations we have for each $t \in [2^n]$
\begin{equation}\label{e:expGammaDelta}
   \mathbb{E}(\gamma_t) = g_k,  \qquad  \mathbb{E}(\delta_t) = g_{k-1}.
\end{equation}

We will change the output of $A^k$ (at its active input) successively from $Z_1$ to $Z_{2^n}$, and consider the relationship between $\gamma_t$ and $\delta_t$.  The total number of flip paths generated during the process is 
\[
     \sum_{t=1}^{2^n-1} \gamma_t.
\]
 Each time an $A^{k-1}$ gate is rendered active for the first time, we will call the gate \emph{fresh}. Otherwise, it is \emph{reused}.   At time $t$, let $\delta'_t$ ($\delta''_t$) be the number of fresh (reused) $A^{k-1}$ gates.  Thus $ \delta''_1= 0$, and
\begin{equation}\label{e:delta}
    \delta_t = \delta'_t + \delta''_t
\end{equation}
Since a given gate can be fresh at most once, we have
\begin{equation}\label{e:freshOnce}
    \sum_{t=1}^{2^n} \delta'_t \le g
\end{equation}
Each time an $a^{k-1}_i$ gate is reused it has consumed at least $d-2$ flip-paths
since the last time it was active.  This is because at least $d$ consecutive inputs must be changed from all 0's to all 1's (or vice versa), and since the gate is regular, its output will be constantly 0 during at least $d-2$ consecutive changes.

Since there must be at least as many flip-paths generated as consumed, we have
\begin{equation}\label{e:sumdelta}
   (d-2) \sum_{t=1}^{2^n}\delta''_t \le \sum_{t=1}^{2^n} \gamma_t.
 \end{equation}

From (\ref{e:delta}), (\ref {e:freshOnce}), (\ref{e:sumdelta}) we have
\begin{equation}\label{e:deltagamma}
 \sum_{t=1}^{2^n} \delta_t \leq g + \frac{1}{d-2} \sum_{t=1}^{2^n} \gamma_t. 
\end{equation}
Now \eqref{eq:main} follows from \eqref{e:deltagamma} and
\eqref{e:expGammaDelta} by linearity of expectations.

Hence either $g > 2^n$ or
\[
   g_k \ge (d-2) [g_{k-1} - 1].
\]
From this and (\ref{e:gn}) we have a recurrence whose solution shows
\[
   g_t \ge (d-2)^{t-1} n - \frac{(d-2)^t - (d-2)}{d-3}.
\]
If $n \ge 3$ then $g_m \ge (d-2)^{m-1}$, and Theorem~\ref{t:expBound} follows.
\end{proof}

Now we change the setting in Theorem \ref{t:expBound} so that
it applies to a single oracle.  The new oracle $a(k,i,x)$ is used
in the same way as $a^k_i(x)$.  The first two arguments can be encoded
in binary or unary, and we don't care what happens when
they are not "legal" (we don't require the output to be 0 unless the $x$
argument is illegal).
Define active $a^k_i$ gates as gates whose inputs are $(k,i,x)$, where
$x$ is the relevant active input.  We argue as before,
and again conclude that if $n \ge 3$ then $g_m \geq (d-2)^{m-1}$.
Hence Theorem \ref{t:expBound} follows in the single oracle setting,
and Theorem  \ref{t:separation} (ii) follows as explained right after the
statement of Theorem  \ref{t:expBound}.

\subsection{$\SC$ vs $\CC$}\label{s:SCvsCC}

Uniform $\SC^k$ is the class of problems decidable by Turing machines
running in polynomial time using $O(\log^k n)$ space.  Non-uniform
$\SC^k$ is the class of problems solvable by circuits of polynomial size
and $O(\log^k n)$ width.  Just as $\NC$ and $\CC$ appear to be
incomparable, it seems
plausible that $\SC$ and $\CC$ are incomparable.  For one direction,
$\NL$ is a subclass of both $\CC$ and $\NC$, but is conjectured not
to be a subclass of $\SC$ (Savitch's algorithm takes $2^{O(\log^2)}$
time).  For the other direction we can give a convincing oracle
separation as follows.

We apply Theorem \ref{t:expBound} to a problem with a padded input
of length $N$, and set the `real' input $n = \log^2 N$, and
also $m = \log^2 N$ and $d = 4$.  The theorem implies that every
comparator circuit solving $F$ has size at least
\[2^{\min(m-1,n)} = 2^{\log^2 N},\] 
which is superpolynomial.
Thus this padded problem is not in relativized $\CC$.

However, a Turing machine $M$ equipped with an oracle tape which can
query $4\log^2 N$ bits of each oracle $a^k_i$ can compute this
padded version of $F$ in linear time and $O(n) =O(\log^2 N)$ space,
so this problem is in relativized $\SC^2$. The machine $M$ proceeds
by successively computing $X_m,X_{m-1},\ldots,X_1$, writing each
of these $X_i$ on its work tape and then erasing the previous one.
The machine computes $X^k$ from $X^{k+1}$ bit by bit, making a
query of size $4\log^2 N$ to its query tape for each bit.
(We assume that $M$ can access its oracle in such a way that it
can determine $N$, and hence $m$ and $n$.)

\section{Lexicographically first maximal matching problems are $\CC$-complete \label{sec:lfcom}}
Let $G=(V,W,E)$ be a bipartite graph, where $V=\set{v_i}_{i=0}^{m-1}$, $W=\set{w_i}_{i=0}^{n-1}$ and $E\subseteq V\times W$. The \emph{lexicographically first maximal matching} (lfm-matching) is the matching produced by successively matching each vertex $v_{0},\ldots,v_{m-1}$ to the least vertex available  in $W$ (see Fig. \ref{fig:lfmmatching} for an example).  We refer to $V$ as the set of bottom nodes and $W$ as the set of top nodes. 

In this section we will show that two decision problems concerning the
lfm-matching of  a bipartite graph are $\CC$-complete under $\AC^{0}$ many-one reductions.
The lfm-matching  problem ($\LFMM$) is to decide if a designated
edge belongs to the lfm-matching of  a bipartite graph $G$.
The vertex version of lfm-matching problem ($\VLFMM$) is to decide if a designated top node is matched in the lfm-matching of a bipartite graph $G$. $\LFMM$ is the usual way to define a decision problem for  lfm-matching as seen in \cite{MS92,Sub94}; however, as shown in Sections \ref{sec:c2vl} and \ref{sec:vl2cc}, the $\VLFMM$ problem is even more closely related to the $\CCV$ problem.

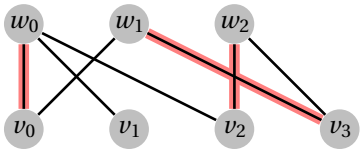
\begin{figure}
\tikzstyle{vertex}=[circle,fill=black!25,minimum size=15pt,inner sep=0pt]
\tikzstyle{edge} = [draw,line width=1pt,-]
\tikzstyle{cedge} = [draw,line width=4pt,-,red!50]
\centering
\begin{tikzpicture}[scale=0.7, auto,swap]
    \foreach \pos/\name/\label in {{(0,2)/a/w_{0}}, {(2,2)/b/w_{1}}, {(4,2)/c/w_{2}},
                            {(0,0)/x/v_{0}}, {(2,0)/y/v_{1}}, {(4,0)/z/v_{2}}, {(6,0)/w/v_{3}}}
        \node[vertex] (\name) at \pos {$\label$};

    \foreach \source/ \dest  in {x/a,z/c,w/b}
        \path[cedge] (\source) -- (\dest);

    \foreach \source/ \dest  in {x/a,x/b,y/a,z/a,z/c,w/b,w/c}
        \path[edge] (\source) -- (\dest);

\end{tikzpicture}
\caption{The thick edges form the lfm-matching of the above bipartite graph.}
\label{fig:lfmmatching}
\end{figure}



We will show that the following two more restricted lfm-matching problems  are also $\CC$-complete.  We define $\TLFMM$ to be the restriction of $\LFMM$ to bipartite graphs of degree at most three. We define $\TVLFMM$ to be the restriction of $\VLFMM$ to bipartite graphs of degree at most three. 

To show that the problems defined above are equivalent under  $\AC^{0}$ many-one reductions, it turns out that we also need the following intermediate problem.  A negation gate flips the value on a wire.  For comparator circuits with negation gates, we allow negation gates to appear on any wire
(see the left diagram of \figref{cn2c} below for an example). 
The comparator circuit value problem with negation gates ($\CCVN$) is: given a comparator circuit with negation gates and input assignment, and a designated wire, decide if that wire outputs 1.

All reductions in this section are summarized in \figref{redu}.
\ifslow
\begin{figure}
\begin{center}
        \tikzstyle{vertex}=
        [%
          	minimum size=5mm,%
          	rectangle,%
	        	thick,%
		text centered
        ]

       \tikzstyle{edge} = [draw,line width=1pt,->]
      \begin{tikzpicture}[>=stealth',scale=1.5]

    	\foreach \pos/\label/\name in 
			{{(0,1)/\VLFMM/vl}, {(2,0)/\CCV/c}, {(4,1)/\TVLFMM/tv},
			{(0,-1)/\CCVN/cvn}, {(2,-2)/\LFMM/el}, {(4,-1)/\TLFMM/te}}
        		\node[vertex] (\name) at \pos {$\label$};
    
    	\foreach \source/ \dest/\label  in {{vl/c/vl2cc},{c/tv/c2vl},{c/te/c2te},{el/cvn/el2cvn},{cvn/c/cvn2c}}
       		\path [edge] (\source) -- node [circle,below,minimum size=5mm,inner sep=1pt] {\S\ref{sec:\label}} (\dest) ;

    	\foreach \source/ \dest  in {{tv/vl},{te/el}}
       		\path[edge] (\source) -- (\dest);

      \end{tikzpicture}
\end{center}

\caption{The label of an arrow denotes the section in which the reduction is described. Arrows without labels denote trivial reductions. All six problems are $\CC$-complete.}
\label{fig:redu}
\end{figure}
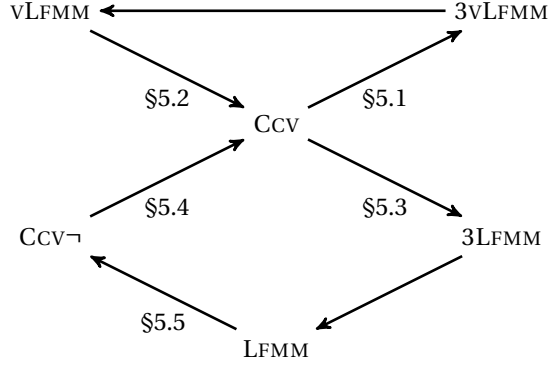
\fi

\subsection{\texorpdfstring{$\CCV  \mred^{\AC^{0}}  \TVLFMM$}{Ccv is AC0 many-one reducible to 3vLfmm} \label{sec:c2vl}}

By \propref{p1} it suffices to consider only instances of $\CCV$
in which all comparator gates point upward. We will show that these instances of $\CCV$ are $\AC^{0}$ many-one reducible to instances of $\TVLFMM$, which consist of bipartite graphs with \emph{degree at most three}.

The key observation is that a comparator gate on the left below closely relates to an instance of $\TVLFMM$ on the right. We use the top nodes $p_{0}$ and $q_{0}$ to represent the values $p_{0}$ and $q_{0}$ carried by the wires $x$ and $y$ respectively before the comparator gate, and the nodes $p_{1}$ and $q_{1}$ to represent the values of $x$ and $y$ after the comparator gate, where a top node is matched iff its respective value is one.
\ifslow
\begin{center}
\begin{minipage}{0.4\textwidth}
\centering
\small 
\hspace{-2cm}$\Qcircuit @C=2.5em @R=1.3em {
\push{\bit{p_{0}}}&\push{x\,} & \ctrlu{1}	& \rstick{p_{1} = p_{0}\vee q_{0}}\qw\\
\push{\bit{q_{0}}}&\push{y\,} & \ctrl{-1} & \rstick{q_{1} = p_{0}\wedge q_{0}}\qw  
}$
\end{minipage}
\begin{minipage}{0.2\textwidth}
\centering
\begin{tikzpicture}
\draw[-to,line width=1.5pt,snake=snake,segment amplitude=.5mm,
         segment length=2mm,line after snake=1mm]  (0,0) -- (2,0);
\end{tikzpicture}
\end{minipage}
\begin{minipage}{0.35\textwidth}
\centering
\tikzstyle{vertex}=[circle,fill=black!30,minimum size=15pt,inner sep=0pt]
\tikzstyle{edge} = [draw,line width=1pt,-]
\begin{tikzpicture}[scale=0.6, auto,swap]
    \foreach \pos/\label/\name in {{(0,2)/p_{0}/p0}, {(2,2)/q_{0}/q0}, {(4,2)/p_{1}/p1},{(6,2)/q_{1}/q1},
                            {(4,0)/x/x}, {(6,0)/y/y}}
        \node[vertex] (\name) at \pos {$\label$};
        
    \foreach \source/ \dest  in {x/p0,x/p1,y/q0,y/p1,y/q1}
        \path[edge] (\source) -- (\dest);

\end{tikzpicture}

\end{minipage}
\end{center}
\fi
If nodes $p_{0}$ and $q_{0}$ have not been previously matched, i.e. $p_{0}=q_{0}=0$ in the comparator circuit, then the edges $\seq{x,p_{0}}$ and $\seq{y,q_{0}}$ are added to the lfm-matching. So the nodes $p_{1}$ and $q_{1}$ are not matched.  If $p_{0}$ has been previously matched, but $q_{0}$ has not, then edges $\seq{x,p_{1}}$ and $\seq{y,q_{0}}$ are added to the lfm-matching. So the node $p_{1}$ will be matched but $q_{1}$ will remain  unmatched. The other two cases are similar.

Thus, we can reduce a comparator circuit to the bipartite graph of an $\TVLFMM$ instance  by  converting each  comparator gate into the ``gadget'' described above. We will describe our method through an example, where we are given the comparator circuit in \figref{cex}. 

\ifslow
\begin{wrapfigure}{r}{0.3\textwidth}
\vspace{-.2cm}
\centering
{\small
$\Qcircuit @C=2.2em @R=0.7em {
\push{\bit{0}}&\push{a\,} & \ctrlu{1}	& \qw&    \rstick{\bit{1}} \qw \\
\push{\bit{1}}&\push{b\,} & \ctrl{-1} & \ctrlu{1}&   \rstick{\bit{1}} \qw  \\
\push{\bit{1}}&\push{c\,} & \qw &	\ctrl{-1}&   \rstick{\bit{0}} \qw  \\
		&	    &\lstick{0} & \lstick{1} &\lstick{2}\\
}$}
\vspace{-.15cm}
\caption{}
\label{fig:cex}
\end{wrapfigure}
\fi

We divide the comparator circuit into vertical layers 0, 1, 2 as shown in \figref{cex}. Since the  circuit has three wires $a$, $b$, $c$, for each layer $i$, we use six nodes, including three top nodes $a_{i}$, $b_{i}$ and $c_{i}$ representing the values of the wires $a$, $b$, $c$ respectively, and three bottom nodes $a'_{i}$, $b'_{i}$, $c'_{i}$, which are auxiliary nodes used to simulate the effect of the comparator gate at layer $i$.\\
\textbf{Layer 0:} This is the input layer, so we add an edge $\set{x_{i},x'_{i}}$ iff the wire $x$ takes input value 1. In this example, since $b$ and $c$ are wires taking input 1, we need to add the edges $\set{b_{0},b'_{0}}$ and $\set{c_{0},c'_{0}}$.
\ifslow
\begin{center}
\tikzstyle{vertex}=[circle,fill=black!30,minimum size=15pt,inner sep=0pt]
\tikzstyle{edge} = [draw,line width=1pt,-]
\begin{tikzpicture}[scale=1, auto,swap]
    \foreach \i in {0,1,2}{
    	\node[vertex] (a\i) at (0+\i*3,1.2) {$a_{\i}$} ; 
	\node[vertex] (b\i) at (1+\i*3,1.2) {$b_{\i}$} ; 
	\node[vertex] (c\i) at (2+\i*3,1.2) {$c_{\i}$} ; 
	
	\node[vertex] (a'\i) at (0+\i*3,0) {$a'_{\i}$} ; 
	\node[vertex] (b'\i) at (1+\i*3,0) {$b'_{\i}$} ; 
	\node[vertex] (c'\i) at (2+\i*3,0) {$c'_{\i}$} ; 
    }
        
    \foreach \source/ \dest  in {b'0/b0,c'0/c0}
        \path[edge] (\source) -- (\dest);

\end{tikzpicture}
\end{center}
\fi
\textbf{Layer 1:} We then add the gadget simulating the comparator gate from wire $b$ to wire $a$ as follows.
\ifslow
\begin{center}
\tikzstyle{vertex}=[circle,fill=black!30,minimum size=15pt,inner sep=0pt]
\tikzstyle{edge} = [draw,line width=1pt,-]
\tikzstyle{dedge} = [draw,dashed,line width=1pt,-]
\begin{tikzpicture}[scale=1, auto,swap]
    \foreach \i in {0,1,2}{
    	\node[vertex] (a\i) at (0+\i*3,1.2) {$a_{\i}$} ; 
	\node[vertex] (b\i) at (1+\i*3,1.2) {$b_{\i}$} ; 
	\node[vertex] (c\i) at (2+\i*3,1.2) {$c_{\i}$} ; 
	
	\node[vertex] (a'\i) at (0+\i*3,0) {$a'_{\i}$} ; 
	\node[vertex] (b'\i) at (1+\i*3,0) {$b'_{\i}$} ; 
	\node[vertex] (c'\i) at (2+\i*3,0) {$c'_{\i}$} ; 
    }
        
    \foreach \source/ \dest  in {b'0/b0,c'0/c0,a'1/a0,a'1/a1,b'1/b0,b'1/a1,b'1/b1}
        \path[edge] (\source) -- (\dest);

    \foreach \source/ \dest  in {c'1/c0,c'1/c1}
        \path[dedge] (\source) -- (\dest);

\end{tikzpicture}
\end{center}
\fi
Since the value of wire $c$ does not change when going from layer 0 to layer 1, we can simply propagate the value of $c_{0}$ to $c_{1}$ using the pair of dashed edges in the picture.\\
\textbf{Layer 2:} We proceed very similarly to layer 1 to get the following bipartite graph.
\ifslow
\begin{center}
\tikzstyle{vertex}=[circle,fill=black!30,minimum size=15pt,inner sep=0pt]
\tikzstyle{edge} = [draw,line width=1pt,-]
\tikzstyle{dedge} = [draw,dashed,line width=1pt,-]
\begin{tikzpicture}[scale=1, auto,swap]
    \foreach \i in {0,1,2}{
    	\node[vertex] (a\i) at (0+\i*3,1.2) {$a_{\i}$} ; 
	\node[vertex] (b\i) at (1+\i*3,1.2) {$b_{\i}$} ; 
	\node[vertex] (c\i) at (2+\i*3,1.2) {$c_{\i}$} ; 
	
	\node[vertex] (a'\i) at (0+\i*3,0) {$a'_{\i}$} ; 
	\node[vertex] (b'\i) at (1+\i*3,0) {$b'_{\i}$} ; 
	\node[vertex] (c'\i) at (2+\i*3,0) {$c'_{\i}$} ; 
    }
        
    \foreach \source/ \dest  in 	{b'0/b0,c'0/c0,a'1/a0,
    						a'1/a1,b'1/b0,b'1/a1,b'1/b1,
						b'2/b1,b'2/b2,c'2/c1,c'2/c2,c'2/b2}
        \path[edge] (\source) -- (\dest);

    \foreach \source/ \dest  in {c'1/c0,c'1/c1,a'2/a1,a'2/a2}
        \path[dedge] (\source) -- (\dest);

\end{tikzpicture}
\end{center}
\fi
Finally, we can get the output values of the comparator circuit by looking at the ``output'' nodes $a_{2},b_{2},c_{2}$ of this bipartite graph. We can easily check that $a_{2}$ is the only node that remains unmatched, which corresponds exactly to the only zero produced by wire $a$ of the comparator circuit in \figref{cex}.

It remains to argue that the construction above is an $\AC^0$ many-one reduction. We observe that each gate in the comparator circuit can be independently reduced to exactly one gadget in the bipartite graph that simulates the effect of the comparator gate; furthermore, the position of each gadget can be easily calculated from the position of each gate in the comparator circuit using very simple arithmetic.

\subsection{\texorpdfstring{$\VLFMM \mred^{\AC^{0}} \CCV$}{vLfmm is AC0 many-one reducible to Ccv} \label{sec:vl2cc}}
Consider the instance of $\VLFMM$  consisting of the bipartite graph in \figref{l2c}. Recall that we find the lfm-matching by matching the bottom nodes $x, y, z$ successively  to the first available node on the top. Hence we can simulate the matching of the bottom nodes to the top nodes using the comparator circuit on the right of \figref{l2c}, where we can think of the moving of a 1, say from wire $x$ to wire $a$, as the matching of node $x$ to node $a$ in the original bipartite graph. In this construction, a top node is matched iff its corresponding wire in the circuit outputs 1.
\ifslow
\begin{figure}[!h]
\centering
\begin{minipage}{0.3\textwidth}
\tikzstyle{vertex}=[circle,fill=black!30,minimum size=15pt,inner sep=0pt]
\tikzstyle{edge} = [draw,line width=1pt,-]
\tikzstyle{cedge} = [draw,line width=4pt,-,red!50]
\begin{tikzpicture}[scale=0.6, auto,swap]
    \foreach \pos/\name in {{(0,2)/a}, {(2,2)/b}, {(4,2)/c},{(6,2)/d},
                            {(0,0)/x}, {(2,0)/y}, {(4,0)/z}}
        \node[vertex] (\name) at \pos {$\name$};
        
    \foreach \source/ \dest  in {x/a,x/b,x/c,y/a,y/c,z/b,z/d}
        \path[edge] (\source) -- (\dest);

    \begin{pgfonlayer}{background}
    \foreach \source/ \dest  in {x/a,y/c,z/b}
        \path[cedge] (\source) -- (\dest);
    \end{pgfonlayer}

\end{tikzpicture}
\end{minipage}
\begin{minipage}{0.18\textwidth}
\centering
\begin{tikzpicture}
\draw[-to,line width=1.5pt,snake=snake,segment amplitude=.5mm,
         segment length=2mm,line after snake=1mm]  (0,0) -- (1.5,0);
\end{tikzpicture}
\end{minipage}
\begin{minipage}{0.45\textwidth}
\centering
{\small$\Qcircuit @C=1.4em @R=0.5em {
\push{\bit{0}}&\push{a\,} & \ctrlu{4}	& \qw 	&\qw		&\qw		&\ctrlu{5}	& \qw 	& \qw	& \qw	& \qw	& \qw	& \qw	& \qw &\rstick{\bit{1}} \qw \\
\push{\bit{0}}&\push{b\,} &	\qw		& \ctrlu{3} &\qw  	&\qw		&\qw		&\qw		&\qw		&\qw		& \qw	& \ctrlu{5}	& \qw	& \qw &\rstick{\bit{1}} \qw\\
\push{\bit{0}}&\push{c\,} &	\qw		& \qw  	& \ctrlu{1}  	&\qw		&\qw 	&\qw		&\ctrlu{3} 	&\qw		&\qw		&\qw		&\qw		&\qw&\rstick{\bit{1}} \qw\\
\push{\bit{0}}&\push{d\,} & \qw	 	& \qw 	&\qw 	&\qw		&\qw		&\qw		&\qw 	&\qw		& \qw	& \qw 	& \qw	& \ctrlu{3}&\rstick{\bit{0}} \qw \\
\push{\bit{1}}&\push{x\,} & \ctrl{-1} 	& \ctrl{-1}  & \ctrl{-1} 	&\ctrl{0}	& \qw	&\qw		&\qw 	&\qw		&\qw		&\qw		& \qw	& \qw&    \rstick{\bit{0}} \qw \\
\push{\bit{1}}&\push{y\,} & \qw 		& \qw 	&\qw		&\qw		& \ctrl{-1}  &\ctrl{0}	&\ctrl{-1}	&\ctrl{0}	& \qw	& \qw 	& \qw	& \qw&\rstick{\bit{0}} \qw \\
\push{\bit{1}}&\push{z\,} &	\qw		& \qw  	&\qw  	&\qw		& \qw	&\qw		&\qw 	&\qw		&\ctrl{0}	&\ctrl{-1}	& \ctrl{0}	&\ctrl{-1}&\rstick{\bit{0}} \qw
\gategroup{1}{3}{7}{6}{1em}{-}
\gategroup{1}{7}{7}{10}{1em}{-}
\gategroup{1}{11}{7}{14}{1em}{-}
}$}
\end{minipage}
\caption{}
\label{fig:l2c}
\end{figure}
\fi

Note that we draw bullets without any arrows going out from them in the circuit to  denote dummy gates, which do nothing. These dummy gates are introduced for the following technical reason. Since the bottom nodes might not have the same degree, the position of a comparator gate really depends on the structure of the graph, which makes it harder to give a direct $\AC^{0}$ reduction. By using dummy gates, we can treat the graph as if it is a complete bipartite graph, the missing edges represented by dummy gates. This can easily be shown to be an $\AC^0$ reduction from $\VLFMM$ to $\CCV$. Together with the reduction from \secref{c2vl}, we get the following theorem.

\begin{theorem}\label{theo:lfcom} 
The problems $\CCV$, $\TVLFMM$  and  $\VLFMM$ are equivalent under $\AC^0$ many-one reductions.
\end{theorem}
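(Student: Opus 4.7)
The plan is to close a cycle of three $\AC^0$ many-one reductions among $\CCV$, $\TVLFMM$, and $\VLFMM$. Two of the three reductions have already been carried out in detail in the preceding subsections, so the main work of the theorem is just to observe that the right pieces fit together and to supply the one trivial link.

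First I would cite Section~\ref{sec:c2vl} for the reduction $\CCV \mred^{\AC^0} \TVLFMM$: the construction there (gadgets per comparator gate, propagation gadgets for wires not involved in a gate) is shown to be $\AC^0$ because each gadget is placed from local arithmetic on the circuit description, and the output value of a wire corresponds to whether the matching-image top node is matched. Next I would cite Section~\ref{sec:vl2cc} for the reduction $\VLFMM \mred^{\AC^0} \CCV$: the bottom vertices are laid out as ``moving 1's'' and the top vertices as wires initialized to $0$, with dummy gates filling in the missing edges so that the circuit can be computed locally from the incidence data, yielding an $\AC^0$ map.

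To close the cycle I would add the trivial observation $\TVLFMM \mred^{\AC^0} \VLFMM$, since $\TVLFMM$ is literally the restriction of $\VLFMM$ to bipartite graphs of degree at most three; the identity map is an $\AC^0$ reduction. Composing:
\[
 \VLFMM \mred^{\AC^0} \CCV \mred^{\AC^0} \TVLFMM \mred^{\AC^0} \VLFMM,
\]
shows all three problems are equivalent under $\AC^0$ many-one reductions, using that $\FAC^0$ is closed under composition.

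The main obstacle, such as it is, lies entirely in verifying that the reduction from Section~\ref{sec:c2vl} actually lands in bounded-degree (degree $\leq 3$) bipartite graphs rather than in the general $\VLFMM$ problem. I would therefore briefly check the degrees of the gadget introduced there: each auxiliary bottom node $a'_i$, $b'_i$, $c'_i$ is incident only to the top nodes from the preceding layer and its own layer (at most three edges by inspection of the comparator gadget and the propagation gadget), and each top node $a_i, b_i, c_i$ receives edges only from the at most two gadgets touching it at layers $i$ and $i+1$, so degrees are bounded by three. With this degree bound in hand, the target of the $\CCV$ reduction is indeed an $\TVLFMM$ instance, and the cycle closes as claimed.
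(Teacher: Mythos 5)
Your proposal is correct and follows exactly the paper's route: cite Section~\ref{sec:c2vl} for $\CCV \mred^{\AC^0} \TVLFMM$, Section~\ref{sec:vl2cc} for $\VLFMM \mred^{\AC^0} \CCV$, note the trivial restriction $\TVLFMM \mred^{\AC^0} \VLFMM$, and close the cycle via composition in $\FAC^0$. Your extra degree-counting check on the gadgets (bottom nodes have degree at most three, top nodes receive at most two edges from the gadget writing them and one from the gadget reading them) is a welcome verification of a point the paper leaves implicit, but it adds no new idea beyond what Section~\ref{sec:c2vl} already asserts.
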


\subsection{\texorpdfstring{$\CCV  \mred^{\AC^{0}}  \TLFMM$}{Ccv is AC0 many-one reducible to 3Lfmm} \label{sec:c2te}}
We start by applying the reduction  $\CCV  \mred^{\AC^{0}}  \TVLFMM$
of \secref{c2vl} to get an instance of $\TVLFMM$, and notice that 
the degrees of the top ``output'' nodes of the resulting bipartite
graph, e.g. the nodes $a_{2},b_{2},c_{2}$ in the example of \secref{c2vl},
have degree at most two.
Now we show how to reduce such instances of $\TVLFMM$ (i.e. those
whose designated top vertices have degree at most two) to $\TLFMM$.
Consider the graph $G$ with degree at most three and a designated top
vertex $b$ of degree two as shown on the left of \figref{vl2el}. We extend it to a bipartite graph $G'$ by adding an additional top node $w_t$ and an additional bottom node $w_b$, alongside two edges $\set{b, w_b}$ and $\set{w_t, w_b}$, as shown in \figref{vl2el}. Observe that the degree of the new graph $G'$ is at most three.
\ifslow
\begin{figure}[!h]
\begin{center}
\begin{minipage}{0.3\textwidth}
\centering 
\tikzstyle{vertex}=[circle,fill=black!30,minimum size=15pt,inner sep=0pt]
\tikzstyle{edge} = [draw,line width=1pt,-]
\tikzstyle{cedge} = [draw,line width=4pt,-,black!35]
\tikzstyle{dedge} = [draw,dashed,line width=1pt,-]
\begin{tikzpicture}[scale=0.6, auto,swap]
    \foreach \pos/\name in {{(0,2)/a}, {(2,2)/b}, {(4,2)/c},
                            {(0,0)/x}, {(2,0)/y}, {(4,0)/z}}
        \node[vertex] (\name) at \pos {$\name$};
        
    \foreach \source/ \dest  in {x/a,x/b,x/c,y/a,y/c,z/b}
        \path[edge] (\source) -- (\dest);

\end{tikzpicture}
\end{minipage}
\begin{minipage}{0.2\textwidth}
\centering
\begin{tikzpicture}
\draw[-to,line width=1.5pt,snake=snake,segment amplitude=.5mm,
         segment length=2mm,line after snake=1mm]  (0,0) -- (2,0);
\end{tikzpicture}
\end{minipage}
\begin{minipage}{0.4\textwidth}
\centering
\tikzstyle{vertex}=[circle,fill=black!30,minimum size=15pt,inner sep=0pt]
\tikzstyle{edge} = [draw,line width=1pt,-]
\tikzstyle{dedge} = [draw,dashed,line width=1pt,-]

\begin{tikzpicture}[scale=0.6, auto,swap]
    \foreach \pos/\name in {{(0,2)/a}, {(2,2)/b}, {(4,2)/c}, {(6,2)/w_t}, 
    					{(0,0)/x}, {(2,0)/y}, {(4,0)/z},{(6,0)/w_b}}
        \node[vertex] (\name) at \pos {$\name$};
        
    \foreach \source/ \dest  in {x/a,x/b,x/c,y/a,y/c,z/b}
        \path[edge] (\source) -- (\dest);
        
    \foreach \source/ \dest  in {{w_b/b},{w_b/w_t}}
        \path[dedge] (\source) -- (\dest);

\end{tikzpicture}
\end{minipage}
\end{center}
\caption{}
\label{fig:vl2el}
\end{figure}
\fi

We treat the resulting bipartite graph $G'$ and the edge $\set{w_{t},w_{b}}$ as an instance of $\TLFMM$. It is not hard to see that the vertex $b$ is matched in the lfm-matching of the original bipartite graph $G$ iff the edge $\set{w_t, w_b}$ is in the lfm-matching of the new bipartite graph $G'$.

\subsection{\texorpdfstring{$\CCVN \mred^{\AC^{0}}  \CCV$}{Ccv with negation gates is AC0 many-one reducible to Ccv} \label{sec:cvn2c}}
Recall that a comparator circuit value problem with negation gates ($\CCVN$) is the task of deciding,  given a comparator circuit with negation gates  and an input assignment, whether a designated wire outputs one. 
It should be clear that $\CCV$ is a special case of $\CCVN$ and hence $\AC^{0}$ many-one reducible to $\CCVN$. Here, we show the nontrivial direction that $\CCVN \mred^{\AC^{0}}  \CCV$. Our proof is based on Subramanian's idea from \cite{Sub94}.

\ifslow
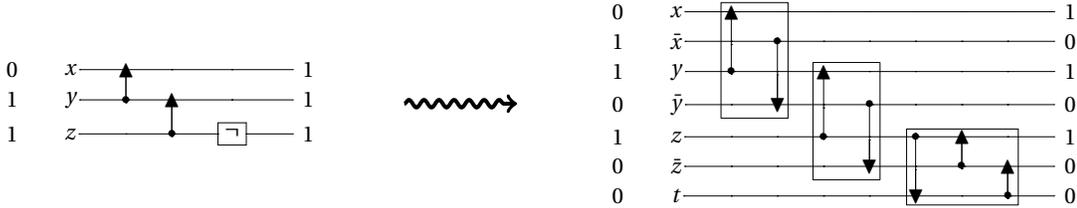
\begin{figure}[!h]
\begin{center}
\begin{minipage}{0.3\textwidth}
\centering
\small 
$\Qcircuit @C=2.2em @R=0.7em {
\push{\bit{0}}&\push{x\,} & \ctrlu{1}	& \qw&  \qw  &\rstick{\bit{1}} \qw \\
\push{\bit{1}}&\push{y\,} & \ctrl{-1} & \ctrlu{1}&  \qw &\rstick{\bit{1}} \qw  \\
\push{\bit{1}}&\push{z\,} & \qw &	\ctrl{-1}&   \gate{\neg}&\rstick{\bit{1}} \qw  
}$
\end{minipage}
\begin{minipage}{0.2\textwidth}
\centering
\begin{tikzpicture}
\draw[-to,line width=1.5pt,snake=snake,segment amplitude=.5mm,
         segment length=2mm,line after snake=1mm]  (0,0) -- (1.5,0);
\end{tikzpicture}
\end{minipage}
\begin{minipage}{0.4\textwidth}
\centering
\small 
$\Qcircuit @C=2.2em @R=0.7em {
\push{\bit{0}}&\push{x\,}  	& \ctrlu{1}	& \qw	& \qw	& \qw	&  \qw 	& \qw	 & \qw	&\rstick{\bit{1}} \qw \\
\push{\bit{1}}&\push{\bar{x}\,} 	& \qw	& \ctrl{2}	& \qw	& \qw	&  \qw  	& \qw	& \qw	&\rstick{\bit{0}} \qw \\
\push{\bit{1}}&\push{y\,} 	& \ctrl{-1} 	& \qw	& \ctrlu{1}	& \qw	&  \qw 	& \qw	& \qw	&\rstick{\bit{1}} \qw  \\
\push{\bit{0}}&\push{\bar{y}\,} 	& \qw	& \ctrld{-1}	& \qw	& \ctrl{2}	&  \qw  	& \qw	& \qw	&\rstick{\bit{0}} \qw \\
\push{\bit{1}}&\push{z\,} 	& \qw 	& \qw	&\ctrl{-1}	& \qw	&  \ctrl{2}	& \ctrlu{1}	&\qw	&\rstick{\bit{1}} \qw \\
\push{\bit{0}}&\push{\bar{z}\,} 	& \qw	& \qw	& \qw	&\ctrld{-1}	&  \qw  	& \ctrl{-1}	&\ctrlu{1}&\rstick{\bit{0}} \qw \\ 
\push{\bit{0}}&\push{t\,} 	& \qw	& \qw	& \qw	&\qw		& \ctrld{-1}& \qw & \ctrl{-1}	&\rstick{\bit{0}} \qw  
\gategroup{1}{3}{4}{4}{1em}{-}
\gategroup{3}{5}{6}{6}{1em}{-}
\gategroup{5}{7}{7}{9}{1em}{-}
}$
\end{minipage}
\end{center}
\caption{Successive gates on the left circuit correspond to successive boxes of gates on the right circuit.}
\label{fig:cn2c}
\end{figure}
\fi


The reduction is based on ``double-rail'' logic, which can be traced to Goldschlager's proof of the $\P$-completeness of the monotone circuit value problem~\cite{Gol77}.
Given an instance of $\CCVN$ consisting of a comparator circuit with negation gates $C$ with its input $I$ and a designated wire $s$, we construct  an instance of $\CCV$ consisting of a comparator circuit $C'$ with its input $I'$ and a designated wire $s'$ as follows.
For every wire $w$ in $I$ we put in two corresponding wires, $w$ and
$\overline{w}$, in $C'$.  We define the input $I'$ of $C'$ such that the input value
of $\overline{w}$ is the negation of the input value of $w$. We want to
fix things so that the value carried by the wire $\overline{w}$ at each layer
is always the negation of the value carried by $w$. For any comparator
gate $\seq{y,x}$ in $C$ we put in $C'$ the gate $\seq{y,x}$ followed by the
gate $\seq{\overline{x},\overline{y}}$.
It is easy to check using De Morgan's laws that the wires $x$ and $y$ in
$C'$ carry the corresponding values of $x$ and $y$ in $C$, and the
wires $\overline{x}$ and $\overline{y}$ in $C'$ carry the negations
of the wires $x$ and $y$ in $C$.


The circuit $C'$ has one extra wire $t$ with input value $0$ to help in
translating negation gates.  For each negation gate on a wire, says $z$
in the example from \figref{cn2c}, we add three comparator gates
$\seq{z,t}$, $\seq{\overline{z},z}$,  $\seq{t,\overline{z}}$ as shown in the right
circuit of \figref{cn2c}. Thus $t$ as a
temporary ``container'' that we use to swap the values carried by the
wires $z$ and $\overline{z}$. Note that the swapping of values of $z$ and $\overline{z}$ in $C'$ simulates the effect of a negation in $C$.  Also note that
after the swap takes place, the value of $t$ is restored to $0$. (The more straightforward solution of simply switching the wires $z$ and $\overline{z}$ does not result in an $\AC^0$ many-one reduction.)

Finally note that the output value of the designated wire $s$ in
$C$ is 1 iff the output value of the
corresponding wire $s$ in $C'$ with input $I'$ is 1.  Thus we set
the designated wire $s'$ in $I'$ to be $s$.

\subsection{\texorpdfstring{$\LFMM \mred^{\AC^{0}} \CCVN$ }{Lfmm is AC0 many-one reducible to Ccv with negation gates}\label{sec:el2cvn}}
Consider an instance of $\LFMM$  consisting of the bipartite graph on the left of \figref{el2c}, and a designated edge $\set{y,c}$. Without loss of generality, 
we can safely ignore all top vertices occurring after $c$, all bottom vertices occurring after $y$, and all the edges associated with them, since they are not going to affect the outcome of the instance. 
Using the construction from \secref{vl2cc},  we can simulate the matching of the bottom nodes to the top nodes using the  comparator circuit in the upper box on the right of \figref{el2c}.
\ifslow
\begin{figure}[!h]
\centering
\begin{minipage}{0.25\textwidth}
\centering
\tikzstyle{vertex}=[circle,fill=black!30,minimum size=15pt,inner sep=0pt]
\tikzstyle{edge} = [draw,line width=1pt,-]
\tikzstyle{cedge} = [draw,line width=4pt,-,red!50]
\begin{tikzpicture}[scale=0.6, auto,swap]
    \foreach \pos/\label/\name in {{(0,2)/a/a}, {(2,2)/b/b}, {(4,2)/c/c},
                            {(0,0)/x/x}, {(2,0)/y/y}}
        \node[vertex] (\name) at \pos {$\label$};
        
    \foreach \source/ \dest  in {x/a,x/b,y/a,y/c}
        \path[edge] (\source) -- (\dest);
        
    \begin{pgfonlayer}{background}
    \foreach \source/ \dest  in {x/a,y/c}
        \path[cedge] (\source) -- (\dest);
    \end{pgfonlayer}
    
\end{tikzpicture}
\end{minipage}
\begin{minipage}{0.18\textwidth}
\centering
\begin{tikzpicture}
\draw[-to,line width=1.5pt,snake=snake,segment amplitude=.5mm,
         segment length=2mm,line after snake=1mm]  (0,0) -- (1.5,0);
\end{tikzpicture}
\end{minipage}
\begin{minipage}{0.45\textwidth}
\centering
{\small$\Qcircuit @C=2em @R=0.5em {
\push{\bit{0}}&\push{a\,} & \ctrlu{3}	& \qw 	&\qw		&\ctrlu{4}	&\qw		&\qw			&\qw		&\rstick{\bit{1}} \qw \\
\push{\bit{0}}&\push{b\,} &	\qw		&\ctrlu{2} 	&\qw		&\qw		&\qw 	&\qw 		&\qw		&\rstick{\bit{0}} \qw\\
\push{\bit{0}}&\push{c\,} & \qw 		& \qw 	&\qw		&\qw		&\qw		&\ctrlu{2}		&\ctrlu{5}	&\rstick{\bit{1}} \qw \\
\push{\bit{1}}&\push{x\,} & \ctrl{-1} 	& \ctrl{-1}  &\ctrl{0}	&\qw		& \qw         &\qw			& \qw	&\rstick{\bit{0}} \qw \\
\push{\bit{1}}&\push{y\,} & \qw 		& \qw	&\qw 	&\ctrl{-1}	&\ctrl{0}	&\ctrl{-1}		& \qw	&\rstick{\bit{0}} \qw \\
\push{\bit{0}}&\push{a'\,} &\ctrlu{3}	&\qw 	&\qw		&\ctrlu{4}	&\qw  	&\qw			& \qw	&\rstick{\bit{1}} \qw\\
\push{\bit{0}}&\push{b'\,} &\qw		&\ctrlu{2}  	&\qw		&\qw		&\qw 	&\qw			&\qw 	&\rstick{\bit{0}} \qw\\
\push{\bit{0}}&\push{c'\,} & \qw 		& \qw 	&\qw		&\qw		&\qw 	&\gate{\neg}	&\ctrl{-1}	&\rstick{\bit{1}} \qw \\
\push{\bit{1}}&\push{x'\,} & \ctrl{-1} 	& \ctrl{-1}	&\ctrl{0}	&\qw		& \qw	&\qw			& \qw	&\rstick{\bit{0}} \qw \\
\push{\bit{1}}&\push{y'\,} & \qw 		& \qw	&\qw 	&\ctrl{-1}	&\ctrl{0}	&\qw			& \qw	&\rstick{\bit{1}} \qw
\gategroup{1}{3}{5}{8}{1em}{-}
\gategroup{6}{3}{10}{7}{1em}{-}
}$}
\end{minipage}
\caption{}
\label{fig:el2c}
\end{figure}
\fi

We keep another running copy of this simulation on the bottom
(see the wires labelled $a',b',c',x',y'$ in \figref{el2c}). The only
difference  is that the comparator gate $\seq{y',c'}$ corresponding to
the designated edge $\set{y,c}$ is not added. Finally, we add a negation
gate on $c'$ and  a comparator gate $\seq{c',c}$.   We let the desired
output of the $\CCV$ instance be the output of $c$, since $c$ outputs 1
iff the edge $\set{y,c}$ is added to the lfm-matching. It  is not hard to generalize this construction to an arbitrary bipartite graph and designated edge. 

Combined with the constructions from Sections \ref{sec:c2vl} and \ref{sec:vl2cc}, we have the following corollary.

\begin{corollary}\label{cor:lfm}
  The problems $\CCV$, $\TVLFMM$, $\VLFMM$, $\CCVN$, $\TLFMM$  and  $\LFMM$ are equivalent under $\AC^0$ many-one reductions.
\end{corollary}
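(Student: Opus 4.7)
The plan is to simply compose the reductions built up over Sections~\ref{sec:c2vl}--\ref{sec:el2cvn} so that each of the six problems is $\AC^0$ many-one reducible to every other. Theorem~\ref{theo:lfcom} has already established the equivalence of $\CCV$, $\TVLFMM$, and $\VLFMM$, so the task reduces to slotting $\TLFMM$, $\LFMM$, and $\CCVN$ into the same equivalence class by exhibiting an $\AC^0$ cycle through them.

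The first step is to observe that there are two trivial reductions we get for free: $\TVLFMM \mred^{\AC^0} \VLFMM$ and $\TLFMM \mred^{\AC^0} \LFMM$, since each is just an instance-preserving inclusion (a bounded-degree graph is already a general one, and $\AC^0$ can copy an input verbatim). Second, the nontrivial chain on the `matching' side is $\LFMM \mred^{\AC^0} \CCVN$ (Section~\ref{sec:el2cvn}) composed with $\CCVN \mred^{\AC^0} \CCV$ (Section~\ref{sec:cvn2c}). Since $\FAC^0$ is closed under composition (as noted in Section~\ref{sec:reductions}), this yields $\LFMM \mred^{\AC^0} \CCV$ and $\TLFMM \mred^{\AC^0} \CCV$. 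In the other direction, Section~\ref{sec:c2te} already gives $\CCV \mred^{\AC^0} \TLFMM$, which composes with the trivial inclusion to yield $\CCV \mred^{\AC^0} \LFMM$. Finally, $\CCV \mred^{\AC^0} \CCVN$ is immediate because a $\CCV$ instance is a $\CCVN$ instance with no negation gates.

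Putting the pieces together, we exhibit the cycle
\[
  \CCV \mred^{\AC^0} \CCVN \mred^{\AC^0} \CCV, \qquad
  \CCV \mred^{\AC^0} \TLFMM \mred^{\AC^0} \LFMM \mred^{\AC^0} \CCVN \mred^{\AC^0} \CCV,
\]
which, together with Theorem~\ref{theo:lfcom}, shows that every pair among the six problems is interreducible under $\AC^0$ many-one reductions. No genuinely new technical work is required; the only thing to check carefully is that each reduction invoked is indeed $\AC^0$ many-one (not merely $\AC^0$ Turing or log-space), which has been verified in the referenced subsections. The ``main obstacle'' is therefore purely bookkeeping: recording the reduction graph of Figure~\ref{fig:redu} and appealing to closure of $\FAC^0$ under composition to turn that graph into a clique under $\mred^{\AC^0}$.
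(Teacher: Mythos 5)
Your proposal is correct and follows essentially the same route as the paper: the corollary is obtained by composing the reductions summarized in Figure~\ref{fig:redu} (the cycle $\CCV \mred^{\AC^0} \TLFMM \mred^{\AC^0} \LFMM \mred^{\AC^0} \CCVN \mred^{\AC^0} \CCV$ together with Theorem~\ref{theo:lfcom} and the trivial inclusions), using closure of $\FAC^0$ under composition.
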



\section{The $\SMP$ problem is $\CC$-complete}
\label{sec:stable}

An instance of the stable marriage problem ($\SMP$), proposed by Gale and Shapley~\cite{GS62} in the context of college admissions, is given by a
number $n$ (specifying the number of men and the number of women),
together with a preference list for each man and each woman
specifying a total ordering on all people of the opposite sex.
The goal of $\SMP$ is to produce a perfect matching between men and women,
i.e., a bijection from the set of men to the set of women, such that the
following \emph{stability} condition is satisfied: there are no two
people of opposite sex who like each other more than their current
partners.  Such a stable solution always exists, but it may not be unique.
Thus $\SMP$ is a search problem rather than a decision problem.

However there is always a unique {\em man-optimal} and a unique
{\em woman-optimal} solution.   In the man-optimal solution each
man is matched with a woman whom he likes at least as well as any
woman that he is matched with in any stable solution.  Dually
for the woman-optimal solution.  Thus we define the
{\em man-optimal stable marriage decision problem} ($\MOSM$) as follows:
given an instance of the stable marriage problem together with
a designated man-woman pair, determine whether that pair is 
married in the man-optimal stable marriage.
We define the {\em woman-optimal stable marriage decision problem}
($\WOSM$) analogously.


We show here that the search version and the decision versions are
computationally equivalent, and each is complete for $\CC$.
\secref{yuli} shows how to reduce the
lexicographically first maximal matching problem (which is complete for
$\CC$) to the $\SMP$ search problem, and Section~\ref{s:SMP-CCV}
shows how to reduce both the $\MOSM$ and $\WOSM$ problems to $\CCV$. 

\subsection{$\TLFMM$ is $\AC^{0}$ many-one reducible to $\SMP$, $\MOSM$ and $\WOSM$} \label{sec:yuli}

We start by showing that $\TLFMM$ is $\AC^0$ many-one reducible to
$\SMP$ when we regard both $\TLFMM$ and $\SMP$ as search problems.  (Of course the
lfm-matching is the unique solution to $\TLFMM$ formulated as a
search problem, but it is still a total search problem.)

Let $G=(V,W,E)$ be a bipartite graph  from an instance of  $\TLFMM$, where $V$ is the set of bottom nodes, $W$ is the set of top nodes, and $E$ is the edge relation such that the degree of each node is at most three (see the example in the figure on the left  below).  Without loss of generality, we can assume that $|V|=|W|=n$. To reduce it to an instance of $\SMP$, we  double the number of nodes in each partition, where the new nodes are enumerated after the original nodes and the original nodes are enumerated using the ordering of the original bipartite graph, as shown in the diagram on the right below. We also let the bottom nodes and top nodes represent the men and women respectively.
\ifslow
\begin{center}
\tikzstyle{vertex}=[circle,fill=black!30,minimum size=16pt,inner sep=0pt]
\tikzstyle{edge} = [draw,line width=1pt,-]
\begin{minipage}{0.25\textwidth}
\centering
\begin{tikzpicture}[scale=0.6, auto,swap]
    \foreach \pos/\name in {{(0,2)/a}, {(2,2)/b}, {(4,2)/c},
    					{(0,0)/x}, {(2,0)/y}, {(4,0)/z}}
        \node[vertex] (\name) at \pos{};
        
    \foreach \source/ \dest  in {x/a,x/b,x/c,y/a,y/c,z/c}
        \path[edge] (\source) -- (\dest);

\end{tikzpicture}
\end{minipage}
\begin{minipage}{0.2\textwidth}
\centering
\begin{tikzpicture}
\draw[-to,line width=1.5pt,snake=snake,segment amplitude=.5mm,
         segment length=2mm,line after snake=1mm]  (0,0) -- (1.5,0);
\end{tikzpicture}
\end{minipage}
\begin{minipage}{0.5\textwidth}
\begin{tikzpicture}[scale=0.6, auto,swap]
    \foreach \i  in {0,...,5}{
    		\node[vertex](w\i) at (2*\i,2){$w_{\i}$};
		\node[vertex](m\i) at (2*\i,0){$m_{\i}$};
  	}
    \foreach \source/ \dest  in {m0/w0,m0/w1,m0/w2,m1/w0,m1/w2,m2/w2}
        \path[edge] (\source) -- (\dest);

\end{tikzpicture}
\end{minipage}
\end{center}
\fi

It remains to define a preference list for each person in this $\SMP$ instance. The preference list of each man $m_{i}$, who represents a bottom node in the original graph, starts with all the women $w_{j}$ (at most three of them) adjacent to $m_{i}$ in the order that these women are enumerated, followed by all the women $w_{n},\ldots,w_{2n-1}$; the list ends with all women $w_{j}$ not adjacent to  $m_{i}$ also in the order that they are enumerated. For example, the preference list of $m_{2}$ in our example is  $w_{2},w_{3},w_{4},w_{5},w_{0},w_{1}$. The preference list of each newly introduced man $m_{n+i}$ simply consists of $w_{0},\ldots,w_{n-1},w_{n},\ldots,w_{2n-1}$, i.e., in the order that the top nodes are listed. Preference lists for the women are defined dually. 


Intuitively, the preference lists are constructed so that any stable marriage (not necessarily man-optimal) of the new $\SMP$ instance must contain  the lfm-matching of $G$. Furthermore, if a bottom node $u$ from the original graph is not matched to any top node in the lfm-matching of $G$, then the man $m_{i}$ representing $u$ will marry some top node $w_{n+j}$, which is a dummy node that does not correspond to any node of $G$. 

The above construction gives us a $\AC^{0}$ many-one reduction from $\TLFMM$ to $\SMP$ as search problems, any solution of a stable marriage instance constructed by the above reduction providing us all the information to decide whether an edge is in the lfm-matching of the original $\TLFMM$ instance. The key explanation is that every instance of stable marriage produced by the above reduction has a unique solution; thus the man-optimal solution coincides with the woman-optimal solution. Moreover,
the above construction also shows that the decision version of $\TLFMM$ is 
$\AC^{0}$ many-one reducible to either of the decision problems
$\MOSM$ and $\WOSM$. Hence we have proven the following theorem, whose detailed proof can be found in
\cite{CLY11}.

\begin{theorem}\label{theo:me}
  $\TLFMM$  is $\AC^{0}$ many-one reducible to $\SMP$, $\MOSM$ and $\WOSM$.
\end{theorem}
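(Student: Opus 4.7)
The plan is to formalize the construction sketched just before the theorem and verify three things: (i) the mapping from a $\TLFMM$ instance to an $\SMP$ instance is computable in $\AC^0$, (ii) the resulting $\SMP$ instance has a \emph{unique} stable matching, whose restriction to ``real'' pairs encodes the lfm-matching of $G$, and (iii) the decision versions of $\TLFMM$ reduce to $\MOSM$ and $\WOSM$ while the search version reduces to $\SMP$.

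First I would verify the $\AC^0$-uniformity of the reduction. Because each vertex of $G$ has degree at most three, the first segment of each preference list (the neighbors of the vertex, listed in index order) has bounded length and can be produced by a constant-depth circuit querying $E$ at $O(1)$ many fixed positions per man/woman. The middle block consists of the $n$ dummy people in index order, and the tail is the $\le n$ non-neighbors of the original graph in index order. Each of these blocks, and the position of each entry inside the full list, is given by very simple arithmetic on the index, so the entire list of preferences is $\AC^0$-computable from $G$ and the encoding of $n$.

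Next I would prove uniqueness of the stable matching. Let $M^\ast$ denote the unique matching obtained by greedily pairing $m_0, m_1, \ldots, m_{n-1}$ as in the lfm-procedure on $G$, pairing each unmatched real man $m_i$ ($i < n$) to the first free dummy woman $w_{n+j}$, and dually for women. The claim is that $M^\ast$ is the only stable matching in the constructed instance. For stability of $M^\ast$: if $m_i$ (real) and $w_j$ (real and a neighbor) form a blocking pair, then $w_j$ prefers $m_i$ over her partner, which by the symmetric structure of her preference list would imply that $m_i$ was chosen by the lfm procedure \emph{before} the point at which $w_j$ got her partner, contradicting the greedy construction; all other candidate blocking pairs (involving dummies or non-neighbors) are ruled out because every real person prefers every real neighbor to every dummy and every dummy to every non-neighbor. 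For uniqueness, I induct on the index $i$ of the bottom vertices in $V$: in any stable matching, $m_i$'s partner must be no worse than his lfm-partner (else that partner would form a blocking pair with him, using the induction hypothesis that nobody of smaller index is available), and conversely the lfm-partner of $m_i$ (if any) cannot be matched below $m_i$ on her list. The dummies then fit in uniquely in index order.

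Finally, uniqueness makes the man-optimal and woman-optimal solutions identical, so the reduction simultaneously handles $\SMP$ (as a search problem), $\MOSM$, and $\WOSM$: the designated edge $\{v_i,w_j\} \in E$ belongs to the lfm-matching of $G$ if and only if the designated pair $(m_i,w_j)$ is married in the (unique) stable marriage. For the search version, any stable marriage output can be post-processed in $\AC^0$ by retaining only the pairs $(m_i,w_j)$ with $i,j<n$ and $\{v_i,w_j\}\in E$. The main obstacle is the uniqueness/stability analysis above: one must argue simultaneously from both the men's and women's preference structures that no swap among real pairs, and no swap exchanging a dummy for a real partner, can be stable. The bounded-degree assumption of $\TLFMM$ helps keep the case analysis short, since each real person has only $O(1)$ real candidates to compare against.
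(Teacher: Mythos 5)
Your plan follows the same approach the paper takes: double the vertices, let the men/women prefer their real neighbors (in index order), then the dummies (in index order), then their non-neighbors (in index order), observe that the resulting $\SMP$ instance has a unique stable marriage whose real--real pairs are exactly the lfm-matching, and conclude that the search reduction to $\SMP$ and the decision reductions to $\MOSM$ and $\WOSM$ all fall out at once. The paper in fact gives only this sketch and defers the detailed stability/uniqueness verification to \cite{CLY11}; your proposal supplies essentially the arguments that verification would contain (the blocking-pair case analysis organized around the ``real neighbor $\succ$ dummy $\succ$ non-neighbor'' layering, plus an induction on the man index for uniqueness), so the two are in agreement.
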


\subsection{$\TCV$ is $\CC$-complete \label{sec:ccvpi}}
In the remainder of the section, we will be occupied with developing an algorithm due to Subramanian~\cite{Sub90,Sub94} that finds a stable marriage using comparator circuits, thus furnishing an $\AC^0$ reduction from $\SMP$ to $\CCV$. To this end, it turns out to be conceptually simpler to go through a new variant of $\CCV$, where the wires are three-valued instead of Boolean.

We define the $\TCV$ problem similarly to $\CCV$, i.e., we want to decide, on a given input assignment, if a designated wire of a comparator circuit outputs one.  The only difference is that each wire can now take either value 0, 1 or $\ast$, where a wire takes value $\ast$ when its value is not known to be 0 or 1.  The output values of the comparator gate on two input values $p$ and $q$ will be defined as follows.
\begin{center}
$p\wedge q = \begin{cases}
0	& \text{if $p=0$ or $q=0$}\\
1	& \text{if $p=q=1$}\\
\ast	& \text{otherwise.} 
\end{cases}$
\hspace{2cm} 
$p\vee q = \begin{cases}
0	& \text{if $p=q=0$}\\
1	& \text{if $p=1$ or $q=1$}\\
\ast	& \text{otherwise.} 
\end{cases}$
\end{center}
Clearly every instance of $\CCV$ is also an instance of $\TCV$. We will show that every instance of $\TCV$ is $\AC^{0}$ many-one reducible to an instance of $\CCV$ by using a pair of Boolean wires to represent each three-valued wire and adding comparator gates appropriately to simulate three-valued comparator gates. 

\begin{theorem}\label{theo:tcv}
  $\TCV$ and $\CCV$ are equivalent under $\AC^0$ many-one reductions.
\end{theorem}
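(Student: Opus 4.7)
The plan is straightforward. One direction, $\CCV \mred^{\AC^0} \TCV$, is immediate since every Boolean instance is already a legal three-valued instance. For the other direction, I would encode each three-valued wire $w$ by a pair of Boolean wires $(w^l, w^u)$ using the representation
\[
0 \mapsto (0,0), \quad 1 \mapsto (1,1), \quad \ast \mapsto (0,1),
\]
so that $w^l$ is a ``definitely 1'' bit and $w^u$ is a ``possibly 1'' bit, maintaining the invariant $w^l \le w^u$.

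The key observation is that each three-valued comparator gate on wires $w_i, w_j$ is faithfully simulated by the two Boolean comparator gates $(w_i^l, w_j^l)$ and $(w_i^u, w_j^u)$ (taken in either order). To see this, check case-by-case (or simply by componentwise monotonicity) that, for $p,q \in \{0,1,\ast\}$ with encodings $(p_l,p_u)$ and $(q_l,q_u)$, the three-valued outputs satisfy
\[
p \wedge q = (p_l \wedge q_l,\; p_u \wedge q_u), \qquad p \vee q = (p_l \vee q_l,\; p_u \vee q_u).
\]
Since $\min$ and $\max$ are monotone in each argument, the invariant $w^l \le w^u$ is preserved after each simulated gate. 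For the inputs we hard-wire the two Boolean wires according to the encoding above, and the designated three-valued output wire $w$ equals $1$ iff its lower bit $w^l$ equals $1$ (because of the invariant), so we let $w^l$ be the designated Boolean output wire.

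It remains to check that the reduction is computable in $\AC^0$. The $i$-th wire of the $\TCV$ instance is mapped to wires $2i$ and $2i+1$ of the $\CCV$ instance, and each comparator gate $(i,j)$ in the $\TCV$ instance becomes a pair of comparator gates $(2i, 2j)$ and $(2i+1, 2j+1)$ in the same relative position; input annotations and the designated wire are translated by similarly simple indexing. All of this is first-order definable from the input, so the reduction is in $\AC^0$. I do not anticipate any obstacle: the only mildly delicate point is verifying the case analysis showing that the two Boolean comparator gates faithfully implement the three-valued gate, which is routine.
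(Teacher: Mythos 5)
Your proof is correct and follows essentially the same route as the paper: the same pairwise encoding $0\mapsto(0,0)$, $1\mapsto(1,1)$, $\ast\mapsto(0,1)$ and the same two-gate gadget acting componentwise on the lower and upper bits. The one small difference is at the output: the paper extracts the answer by adding an $\wedge$-gate to test whether the output pair is $(1,1)$, while you observe that the invariant $w^l \le w^u$ makes the lower bit alone sufficient, a slight but valid tidy-up.
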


\begin{proof}Since each instance of $\CCV$ is a special case of $\TCV$, it only remains to show that every instance of $\TCV$ is $\AC^{0}$ many-one reducible to an instance of $\CCV$. 

First, we will describe a gadget built from standard comparator gates that simulates a three-valued comparator gate  as follows. Each wire of an instance of $\TCV$  will be represented by a pair of wires in an instance of $\CCV$. Each three-valued comparator gate on the left below, where $p,q,p\wedge q,p\vee q \in \set{0,1,\ast}$, can be simulated by  a gadget consisting of two standard comparator gates on the right below.
\ifslow
\begin{center}
\begin{minipage}{0.3\textwidth}
\centering
\hspace{-1.5cm}$\Qcircuit @C=2em @R=0.7em {
\lstick{\bit{p}}&\push{x\,} & \ctrl{1}	& \rstick{~~~~~\bit{p\wedge q}}\qw\\
\lstick{\bit{q}}&\push{y\,} & \ctrld{-1} & \rstick{~~~~~\bit{p\vee q}}\qw  
}$
\end{minipage}
\begin{minipage}{0.2\textwidth}
\centering
\begin{tikzpicture}
\draw[-to,line width=1.5pt,snake=snake,segment amplitude=.5mm,
         segment length=2mm,line after snake=1mm]  (0,0) -- (2,0);
\end{tikzpicture}
\end{minipage}
\begin{minipage}{0.4\textwidth}
\centering
$\Qcircuit @C=2em @R=.7em {
\lstick{\bit{p_1}}&\push{x_1} & \ctrl{1}	&\qw	& \rstick{~~~~~\bit{p_1\wedge q_1}}\qw\\
\lstick{\bit{p_2}}&\push{x_2} &\qw	& \ctrl{1}	& \rstick{~~~~~\bit{p_2\wedge q_2}}\qw\\
\lstick{\bit{q_1}}&\push{y_1} & \ctrld{-1} &\qw	& \rstick{~~~~~\bit{p_1\vee q_1}}\qw\\  
\lstick{\bit{q_2}}&\push{y_2} &\qw	& \ctrld{-1} & \rstick{~~~~~\bit{p_2\vee q_2}}\qw
}$
\end{minipage}
\end{center}
\fi

The wires $x$ and $y$ are represented using the two pairs of wires $\seq{x_{1},x_{2}}$ and $\seq{y_{1},y_{2}}$, and three possible values 0, 1 and $\ast$ will be encoded by $\seq{0,0}$, $\seq{1,1}$, and $\seq{0,1}$ respectively.  The fact that our gadget  correctly simulates the three-valued comparator gate is shown in the following table.
\begin{center}
\begin{tabular}{|c|c||c|c||c|c||c|c|}
\hline
$~p~$ & $~q~$ & $\seq{p_1,p_2}$ &  $\seq{q_1,q_2}$ & $p\wedge q$ &$p\vee q$ 
& $\seq{p_1\wedge q_1,p_2 \wedge q_2}$& $\seq{p_1\vee q_1,p_2 \vee q_2}$\\
\hline
\hline
0 & 0 & $\seq{0,0}$ &  $\seq{0,0}$ & 0 & 0 & $\seq{0,0}$& $\seq{0,0}$\\
\hline
0 & 1 & $\seq{0,0}$ &  $\seq{1,1}$ & 0 & 1 & $\seq{0,0}$& $\seq{1,1}$\\
\hline
0 & $\ast$ & $\seq{0,0}$ &  $\seq{0,1}$ & 0 & $\ast$ & $\seq{0,0}$& $\seq{0,1}$\\
\hline
1 & 0 & $\seq{1,1}$ &  $\seq{0,0}$ & 0 & 1 & $\seq{0,0}$& $\seq{1,1}$\\
\hline
1 & 1 & $\seq{1,1}$ &  $\seq{1,1}$ & 1 & 1 & $\seq{1,1}$& $\seq{1,1}$\\
\hline
1 & $\ast$ & $\seq{1,1}$ &  $\seq{0,1}$ & $\ast$ & 1 & $\seq{0,1}$& $\seq{1,1}$\\
\hline
$\ast$ & 0 & $\seq{0,1}$ &  $\seq{0,0}$ & 0 & $\ast$ & $\seq{0,0}$& $\seq{0,1}$\\
\hline
$\ast$ & 1 & $\seq{0,1}$ &  $\seq{1,1}$ & $\ast$ & 1 & $\seq{0,1}$& $\seq{1,1}$\\
\hline
$\ast$ & $\ast$ & $\seq{0,1}$ &  $\seq{0,1}$ & $\ast$ & $\ast$ & $\seq{0,1}$& $\seq{0,1}$\\
\hline
\end{tabular} 
\end{center}

Using this gadget, we can reduce an instance of $\TCV$ to an instance of $\CCV$ by doubling the number of wires, and replacing every three-valued comparator gate of the $\TCV$ instance with a gadget with two standard comparator gates simulating it.

The above construction shows how to reduce the question
of whether a designated wire outputs 1 for a given instance of $\TCV$
to the question of whether a \emph{pair} of wires of an instance of $\CCV$
output $\seq{1,1}$. However for an instance of $\CCV$ we are only
allowed to decide whether a \emph{single} designated wire outputs 1.
This technical difficulty can be easily overcome since  we can use an $\wedge$-gate (one of the two outputs of a comparator gate) to test whether
a pair of wires outputs $\seq{1,1}$, and output the result on a single designated wire.
\end{proof}

Subramanian~\cite[\S 6.2.6]{Sub90} generalizes the construction of Theorem~\ref{theo:tcv} to 1-Lipschitz gates which are \emph{uniparous}: if the input to a gate contains at most one non-star, then the output contains at most one non-star.

\subsection{Algorithms for solving stable marriage problems} \label{s:SMP-CCV}
In this section, we develop a reduction from $\SMP$ to $\CCV$ due to Subramanian~\cite{Sub90,Sub94}, and later extended to a more general class of problems by Feder~\cite{Fed92,Fed95}. Subramanian did not reduce  $\SMP$ to $\CCV$ directly, but to the \emph{network stability problem} built from the less standard X gate, which takes two inputs $p$ and $q$ and produces two outputs $p'=p\wedge \neg q$ and $q'=\neg p\wedge q$.  It is important to note that the ``\emph{network}'' notion in Subramanian's work denotes a generalization of circuits  by allowing a connection from the output of a gate to the input of any gate including itself, and thus a network in his definition might contain cycles. An X-network is a network consisting only of X gates under the important restriction that each X gate has fan-out exactly one for each output it computes. The network stability problem for X gates ($\XNS$) is then to decide if an X-network has a stable configuration, i.e., a way to assign Boolean values to the wires of the 
network so that  the values are 
compatible with all the X gates of the network. 
Subramanian showed in his dissertation \cite{Sub90}  that $\SMP$, $\XNS$ and  $\CCV$ are all equivalent under $\log$ space reductions. 

We do not work with $\XNS$ in this paper since networks are less intuitive and do not have a nice graphical representation as do comparator circuits. By utilizing Subramanian's idea, we give a direct $\AC^{0}$ reduction from $\SMP$ to $\CCV$, using the three-valued variant of $\CCV$ developed in~\secref{ccvpi}.

We will describe a sequence of algorithms, starting with Gale and Shapley's algorithm, which is historically the first algorithm solving the stable marriage problem, and ending with Subramanian's algorithm. 

\subsubsection{Notation}
Let $M$ denote the set of men, and $W$ denote the set of women; both are of size $n$. The preference list for a person $p$ is given by
\[ \pref_1(p) \succ_p \pref_2(p) \succ_p \cdots \succ_p \pref_n(p) \succ_p \nessun. \]
The last place on the list is taken by the placeholder $\nessun$ which represents $p$ being unmatched, a situation less preferable than being matched. If $p$ is a man then $\pref_1(p),\ldots,\pref_n(p)$ are women, and vice versa. 

The preference relation $\succ_p$ is defined by $\pref_i(p) \succ_p \pref_j(p)$ whenever $i < j$; we say that $p$ prefers $\pref_i(p)$ over $\pref_j(p)$. For a set of women $W_0$ and a man $m$, the woman $m$ prefers the most is $\max_m W_0$; if $W_0$ is empty, then $\max_m W_0 = \nessun$. Let $S$ be a set, then we write $q \succ_{p} S$ to denote that $p$ prefers $q$ to any person in $S$; similarly, $q \prec_{p} S$ denotes that $p$ prefers any person in $S$ to $q$.

A marriage $P$ is a set of pairs $(m,w)$ which forms a perfect matching between the set of men and the set of women.  In a marriage $P$, we let $P(p)$ denote the person $p$ is married. A marriage is stable if there is no unstable pair $(m,w)$, which is a pair satisfying $w \succ_m P(m)$ and $m \succ_w P(w)$, i.e.$m$ and $w$ prefer each other more than their current partner.

\subsubsection{Gale-Shapley} \label{s:gale-shapley}

Gale and Shapley's algorithm~\cite{GS62} proceeds in rounds. In the first round, each man proposes to his top woman among the ones he hasn't proposed, and each woman selects her most preferred suitor. In each subsequent round, each rejected man proposes to his next choice, and each woman selects her most preferred suitor (including her choice from the previous round). The situation eventually stabilizes, resulting in the man-optimal stable marriage.

There are many ways to implement the algorithm. One of them is illustrated below in Algorithm~\ref{alg:gale-shapley}. The crucial object is the graph $G$, which is a set of possible matches. Each round, each man $m$ selects the top woman $\topp(m)$ currently available to him. Among all men who chose her (if any), each woman $w$ selects the best suitor $\best(w)$. Whenever any man $m$ is rejected by his top woman $w$, we remove the possible match $(m,w)$ from $G$.

\begin{algorithm}
\caption{Gale-Shapley}
\label{alg:gale-shapley}
\begin{algorithmic}
\STATE $G \gets \{(m,w) : m \in M, w \in W\}$
\REPEAT
\STATE $\topp(m) \gets \max_m \{ w : (m,w) \in G \}$ for all $m \in M$
\STATE $\best(w) \gets \max_w \{ m : \topp(m) = w \}$ for all $w \in W$
\STATE Remove $(m,\topp(m))$ from $G$ whenever $\best(\topp(m)) \neq m$
\UNTIL {$G$ stops changing}
\RETURN {$\bset{ (m, \topp(m)) : m \in M }$}
\end{algorithmic}
\end{algorithm}

\begin{lemma} \label{lem:gale-shapley}
 Algorithm \ref{alg:gale-shapley} returns the man-optimal stable matching, and terminates after at most $n^2$ rounds.
\end{lemma}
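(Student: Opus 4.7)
The plan is to prove four sub-claims in sequence: the $n^2$ round bound, that the output is a perfect matching, stability, and man-optimality. The key invariant driving the entire analysis is a pair of monotonicity statements that follow from the fact that edges are only ever removed from $G$: the set of women available to a given man $m$ only shrinks, so $\topp(m)$ can only slide down $m$'s preference list; and once some man $m'$ achieves $\topp(m') = w$ it will continue to hold until the pair $(m', w)$ is removed, so the current suitor $\best(w)$ can only improve in $w$'s preference list.

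The round bound is easiest: I will argue that any iteration that does not satisfy the termination condition removes at least one pair $(m, \topp(m))$ from $G$, so the loop runs at most $|G_0| = n^2$ times. For the output to be a perfect matching, the termination condition directly gives injectivity of $m \mapsto \topp(m)$. To see $\topp(m) \ne \nessun$ for every $m$, I will reason by contradiction: if some $m^*$ ends with an empty row in $G$, then every woman $w$ rejected $m^*$ in some round, so by the monotonicity of $\best(w)$ each woman retains a suitor through termination; injectivity then forces all $n$ men to be matched, contradicting $m^* \in M$.

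For stability, I will case-split on a candidate blocking pair $(m, w) \notin P$. If $(m, w) \in G$ at termination, then $w \prec_m \topp(m) = P(m)$, so $m$ does not prefer $w$. Otherwise $(m, w)$ was removed in some round in favour of a suitor $\succ_w m$, and the monotonicity of $\best(w)$ yields $P(w) \succ_w m$; thus $w$ does not prefer $m$ either.

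The main obstacle, and the only genuinely substantive step, is man-optimality. I will prove by induction on the round index the classical lemma that no pair ever removed from $G$ can belong to any stable marriage. For the inductive step, suppose $(m, w)$ is removed at round $k$ because $m' = \best(w) \succ_w m$ with $\topp(m') = w$, and suppose for contradiction that some stable $Q$ has $Q(m) = w$. Set $w' = Q(m')$; the inductive hypothesis ensures that the stable pair $(m', w')$ has not been removed before round $k$, so $w'$ is available to $m'$ at round $k$. Since $w$ is $m'$'s favourite available woman at that round, $w \succ_{m'} w'$, and combined with $m' \succ_w m = Q(w)$ this exhibits $(m', w)$ as a blocking pair for $Q$, contradicting stability. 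Hence every stable partner of $m$ survives in $G$ until termination, which gives $P(m) = \topp(m) \succeq_m Q(m)$ for every stable $Q$ and every man $m$, as required.
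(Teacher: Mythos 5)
Your proposal is correct and follows essentially the same decomposition as the paper's proof: $n^2$ termination bound via edge removal, definiteness of $\topp(m)$ via a pigeonhole argument using monotonicity of $\best(w)$, stability by a case split on whether $(m,w)$ survives in $G$, and man-optimality via an induction showing no removed pair can belong to any stable marriage. If anything your write-up is slightly more explicit than the paper's (e.g.\ in the stability case analysis and in noting $Q(m')\neq w$ during the admissibility induction), but the underlying argument is the same.
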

\begin{proof}
\textbf{Admissibility:} If a pair $(m,w)$ is removed from $G$, then no stable marriage contains the pair $(m,w)$. This is proved by induction on the number of pairs removed. A pair $(m,w)$ is removed when $w = \topp(m) = \topp(m')$ for some other man $m'$, and $m' \succ_w m$. Suppose for a contradiction that $P$ is a stable marriage and  if $P(m) = w$. By the induction hypothesis, we know that  $m'$ can never be married to any woman $w'$ such that $w'\succ_{m'} w$ since that edge $(m',w')$ was removed previously. Thus $w \succeq_{m'} P(m')$. But then $(m',w)$ would be an unstable pair, a contradiction. 

\textbf{Definiteness:} For all men $m$ and at all times, $\topp(m) \neq \nessun$. For any man $m$, $(m,\pi_n(m))$ is never removed from $G$, and so $\topp(m)$ is always well-defined. Indeed, for each $w$, after each iteration $\best(w)$ is non-decreasing in the preference order of $w$. So if $(m,w) \notin G$, $\best(w) \succ_w m$. On the other hand, for any two women $w$ and $w'$, if $\best(w),\best(w') \neq \nessun$ then $\best(w) \neq \best(w')$. Thus, if $(m,w) \notin G$ for all $w \in W$, then $\best$ is a injective mapping from $W$ into $M \setminus \{m\}$, contradicting the pigeonhole principle.

\textbf{Completeness:} The output of the algorithm is a marriage. The algorithm ends when $\best(\topp(m)) = m$ for every $m$, which implies that $\best$ and $\topp$ are mutually inverse bijections.

\textbf{Stability:} The output of the algorithm is a stable marriage. Suppose $(m,w)$ were an unstable pair, so at the end of the algorithm, $m \succ_w \best(w)$ and $w \succ_m \topp(m)$ (we're using the fact that $\topp$ and $\best$ are inverses at the end of the algorithm). However, $m \succ_w \best(w)$ implies $\topp(m) \neq w$, which implies $\topp(m) \succ_m w$.

\textbf{Optimality:} The output of the algorithm is the man-optimal stable marriage. This is obvious, since each man gets his best choice among all possible stable marriages.


\textbf{Runtime:} The algorithm terminates in $n^2$ iterations since at most $n^2$ edges can be deleted from $G$.
\end{proof}

Gale-Shapley has one disadvantage: it only computes the man-optimal stable matching. This is easy to rectify by symmetrizing the algorithm, resulting in Algorithm~\ref{alg:gale-shapley-symm}. While in the original algorithm, only the men propose (select their top choices), and only the women accept or reject (choose the most promising suitor), in the symmetric algorithm, both sexes participate in both tasks in parallel. The algorithm returns both the man-optimal and the woman-optimal stable marriages.

\begin{algorithm}
\caption{Symmetric Gale-Shapley}
\label{alg:gale-shapley-symm}
\begin{algorithmic}
\STATE $G \gets \{(m,w) : m \in M, w \in W\}$
\REPEAT
\STATE $\topp(m) \gets \max_m \{ w : (m,w) \in G \}$ for all $m \in M$
\STATE $\topp(w) \gets \max_w \{ m : (m,w) \in G \}$ for all $w \in W$
\STATE $\best(w) \gets \max_w \{ m : \topp(m) = w \}$ for all $w \in W$
\STATE $\best(m) \gets \max_m \{ w : \topp(w) = m \}$ for all $m \in M$
\STATE Remove $(m,\topp(m))$ from $G$ whenever $\best(\topp(m)) \neq m$
\STATE Remove $(\topp(w),w)$ from $G$ whenever $\best(\topp(w)) \neq w$
\UNTIL {$G$ stops changing}
\RETURN {$\bset{ (m, \topp(m)) : m \in M }$ and $\bset{ (\topp(w), w) : w \in W }$ as the man-optimal and the woman-optimal stable marriages respectively}
\end{algorithmic}
\end{algorithm}

\begin{lemma} \label{lem:gale-shapley-symm}
 Algorithm \ref{alg:gale-shapley-symm} returns the man-optimal and woman-optimal stable matchings, and terminates after at most $n^2$ rounds. 
\end{lemma}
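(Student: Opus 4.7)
The plan is to mirror the proof of Lemma~\ref{lem:gale-shapley}, verifying all five correctness properties (admissibility, definiteness, completeness, stability, optimality) and the runtime bound, but now for both sides of the symmetric algorithm simultaneously, since they share the graph $G$. The new technical prerequisite is that both $\best(w)$ and $\best(m)$ are non-decreasing across rounds (in the respective preferences); I would establish this first, together with the admissibility invariant. Once these two invariants are in hand, the remaining arguments closely track those of Lemma~\ref{lem:gale-shapley}.

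To see that $\best(m)$ is non-decreasing, note that if $w^* = \best(m)$ in some round then $\topp(w^*) = m$, which forces every $m'$ adjacent to $w^*$ in $G$ to satisfy $m' \preceq_{w^*} m$. Consequently the men's rule cannot delete $(m, w^*)$: if $\topp(m) = w^*$ then $m$ lies in the $\best(w^*)$-set and so $\best(w^*) = m$, blocking the deletion; otherwise men's rule does not target $(m,w^*)$ at all. The women's rule also cannot delete $(m,w^*)$, since $\best(m) = w^*$ blocks the rule directly. Hence $\topp(w^*)$ remains $m$ in the next round, and $\best(m) \succeq_m w^*$ there; a symmetric argument handles $\best(w)$. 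The admissibility induction from Lemma~\ref{lem:gale-shapley} then extends: men's-rule deletions are handled verbatim and women's-rule deletions by the mirror argument, so $G$ always contains every stable marriage and in particular both $P_M^*$ and $P_W^*$ persist.

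Definiteness of $\topp(m)$ and $\topp(w)$ follows at once from admissibility; completeness holds because the termination conditions $\best(\topp(m)) = m$ and $\best(\topp(w)) = w$ force both $\topp$ maps to be bijections, making $P_M$ and $P_W$ perfect matchings; and the $n^2$-round bound is immediate, since each non-terminating round removes at least one of the at most $n^2$ edges. For stability of $P_M$, suppose $(m,w)$ were an unstable pair, so $w\succ_m \topp(m)$ and $m\succ_w \best(w)$ at termination. If $(m,w) \in G$ at termination, then $\topp(m) \succeq_m w$, a contradiction. Otherwise $(m,w)$ was removed at some time $t_0$: a men's-rule removal yields $\best(w)_{t_0}\succ_w m$, so by monotonicity $\best(w)\succ_w m$ at the end, contradicting $m\succ_w \best(w)$; a women's-rule removal yields $\best(m)_{t_0}\succ_m w$, so setting $w'$ to be $\best(m)$ at the end we have $\topp(w') = m$, hence $(m,w')\in G$, and $\topp(m)\succeq_m w'\succ_m w\succ_m \topp(m)$, again a contradiction. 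Stability of $P_W$ is symmetric. Optimality then follows: admissibility gives $P_M(m) \succeq_m P_M^*(m)$ for every $m$, which combined with the stability of $P_M$ and the man-optimality of $P_M^*$ forces $P_M = P_M^*$; dually $P_W = P_W^*$.

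The hardest step will be establishing the $\best$-monotonicity in the presence of \emph{both} deletion rules, since a priori a women's-rule removal could eliminate the current witness of $\best(m)$; the key is to isolate the interlock between $\topp$ and $\best$ that immunizes the witnessing pair $(m, \best(m))$ against both rules. Once that monotonicity is in place the women's-rule case of the stability argument closes cleanly by pitting $\best(m)$-monotonicity against $\topp(m)$'s maximality, and the rest of the proof is routine.
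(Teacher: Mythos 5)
Your proof is correct and takes essentially the same approach as the paper, which simply declares the analysis "largely analogous" to Lemma~\ref{lem:gale-shapley}. The value you add is explicitly isolating and proving the $\best(m)$-monotonicity invariant and the interlock with $\topp$ that immunizes the witnessing pair $(m,\best(m))$ against both deletion rules --- this is the one genuinely new technical step when passing from the asymmetric algorithm to the symmetric one, and the paper leaves it implicit.
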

\begin{proof}
 The analysis is largely analogous to the analysis of the original algorithm. Every pair $(m,w)$ removed from $G$ belongs to no stable marriage. Furthermore, since a stable marriage exists, $\topp(m)$ and $\topp(w)$ are always defined after the algorithm finishes. At the end of the algorithm, $\topp$ and $\best$ are  mutually inverse bijections on $M \cup W$, hence the outputs are marriages. The same arguments as before show that the marriages returned by the algorithm are  man-optimal and woman-optimal stable marriages respectively. Finally, the algorithm terminates in $n^2$ iterations since we can only remove at most $n^2$ edges $G$.
\end{proof}

\subsubsection{Interval algorithms} \label{s:interval}

At the end of Algorithm~\ref{alg:gale-shapley-symm}, for each man $m$, his partner in the man-optimal stable marriage is $\topp(m)$, while his partner in the woman-optimal stable marriage is $\best(m)$. The same holds for women (with the roles of the sexes reversed). This prompts our next algorithm, Algorithm~\ref{alg:interval}, which explicitly keeps track of an interval $\interval(p)$ of possible matches for each person $p$ (these are intervals in the person's preference order).

At each round, each person $p$ first picks their top choice $\topp(p)$. Then each person $q$ picks their top suitor $\best(q)$, if any. People over whom $\best(q)$ is preferred are removed from $\interval(q)$. If $p$ was rejected by his top choice $\topp(p)$, then $\topp(p)$ is removed from $\interval(p)$. These update rules maintain the contiguous nature of the intervals. The situation eventually stabilizes, and the algorithm returns the man-optimal and the woman-optimal stable marriages.

\begin{algorithm}
\caption{Interval algorithm}
\label{alg:interval}
\begin{algorithmic}
\STATE $\interval_0(m) \gets W$ for all $m \in M$
\STATE $\interval_0(w) \gets M$ for all $w \in W$
\STATE $t \gets 0$
\REPEAT
\STATE $\topp_t(p) \gets \max_p \interval_t(p)$ for all $p \in M \cup W$
\STATE $\best_t(q) \gets \max_q \{ p : q = \topp_t(p) \}$ for all $q \in M \cup W$
\STATE Remove $p$ from $\interval_t(q)$ whenever $p \prec_q \best_t(q)$, for all $p,q$ of opposite sex
\STATE Remove $\topp_t(p)$ from $\interval_t(p)$ if $p \neq \best_t(\topp_t(p))$
\STATE $t \gets t + 1$
\UNTIL {$\interval_{t+1}(p) = \interval_t(p)$ for all $p \in M \cup W$}
\RETURN {$\bset{(m,\max_m \interval_t(m)) : m \in M}$ and $\bset{(\max_w \interval_t(w),w) : w \in W}$ as the man-optimal and the woman-optimal stable marriages respectively}
\end{algorithmic}
\end{algorithm}

\begin{lemma} \label{lem:interval}
 Algorithm \ref{alg:interval} returns the man-optimal and woman-optimal stable matchings, and terminates after at most $2n^2$ rounds.
Furthermore, the man-optimal and woman-optimal matchings are given  by 
 \[\bset{(m,\max_m \interval_t(m)) : m \in M}\text{ and }\bset{(\max_w \interval_t(w),w) : w \in W} \text{ respectively}.\]
\end{lemma}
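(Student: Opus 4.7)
The plan is to parallel the proof of Lemma~\ref{lem:gale-shapley-symm} by establishing two invariants inductively on $t$: (I1) $\interval_t(p)$ is a non-empty contiguous interval in $p$'s preference order, and (I2, admissibility) for every stable marriage $P$ and every person $p$, $P(p) \in \interval_t(p)$. The proof of (I2) follows the template of Lemma~\ref{lem:gale-shapley-symm}: if $P(p_0)$ were removed at round $t$, the top-removal rule would supply an unstable pair $(\best_t(P(p_0)), P(p_0))$ for $P$, and the suffix-removal rule would supply $(p_0, \best_t(p_0))$, either contradicting stability of $P$. An immediate corollary of (I2) is that $\best_t(p) \preceq_p P(p)$ for every stable $P$, since otherwise $(p, \best_t(p))$ itself would be unstable for $P$. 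From (I2) and this corollary, (I1) follows because the removal rules strip only the top element of $\interval_t(p)$ or a suffix of elements strictly below $\best_t(p) \preceq_p P(p) \in \interval_t(p)$, preserving both contiguity and non-emptiness. Termination in at most $2n^2$ rounds is then immediate from counting: the total interval size starts at $2n^2$, is always at least $2n$ by (I1), and strictly drops in each non-terminal round.

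At termination the top-removal rule does not fire, so $\best_\infty(\topp_\infty(p)) = p$ for every $p$. This forces $F := \topp_\infty|_M \colon M \to W$ to be an injection (if $F(m_1) = F(m_2) = w$ then $\best_\infty(w)$ cannot equal both), hence a bijection by cardinality; the same argument gives that $G := \topp_\infty|_W \colon W \to M$ is a bijection. The crux is showing that $F$ is a stable matching; granted this, admissibility ($F(m) \succeq_m P_M(m)$, where $P_M$ denotes the man-optimal marriage) together with man-optimality of $P_M$ (so $P_M(m) \succeq_m F(m)$) forces $F = P_M$, and the symmetric argument yields $G = P_W$. To establish stability of $F$, I assume $w \succ_m F(m)$ and show $F^{-1}(w) \succ_w m$. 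Since $F(m) = \max \interval_\infty(m)$ we have $w \notin \interval_\infty(m)$, so $w$ was removed at some round $t$. The admissibility corollary gives $\best_t(m) \preceq_m P_M(m) \preceq_m F(m) \prec_m w$, which rules out suffix-removal of $w$ from $\interval_t(m)$; hence $w$ was removed by the top-removal rule, supplying $m_1 = \best_t(w) \succ_w m$ with $\topp_t(m_1) = w$.

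Since $w = \max \interval_t(m_1)$ and $\topp$ is weakly decreasing in $t$, $F(m_1) \preceq_{m_1} w$. If $F(m_1) = w$ then $F^{-1}(w) = m_1 \succ_w m$ and we are done; otherwise $F(m_1) \prec_{m_1} w$ and the same argument applied to $m_1$ in place of $m$ produces $m_2 \succ_w m_1$ with $\topp_{t'}(m_2) = w$ for some $t' > t$. Iterating yields a chain $m \prec_w m_1 \prec_w m_2 \prec_w \cdots$ strictly ascending in $w$'s finite preferences, which must terminate at some $m_K$ with $F(m_K) = w$, giving $F^{-1}(w) = m_K \succ_w m$ as required. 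The main obstacle is precisely this chain argument; the critical enabler is the admissibility corollary $\best_t(p) \preceq_p P(p)$, without which the suffix-removal rule could remove $w$ from $\interval_t(m)$ for reasons unrelated to a better suitor of $w$, breaking the finite descent.
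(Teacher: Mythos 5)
Your proof is correct, and it follows the same overall skeleton as the paper's argument: admissibility of the interval shrinking (induction on rounds), termination by interval-size counting, bijectivity of $\topp$ at the fixed point, and optimality deduced from admissibility. Where you genuinely diverge is in the stability step. The paper dispatches stability with ``the same reason given for Gale-Shapley,'' which implicitly relies on the fact that $\best_t(w)$ is non-decreasing in $t$; that monotonicity gives an immediate one-line finish (once $w$ is removed from $\interval_t(m)$ by top-removal because $\best_t(w) \succ_w m$, monotonicity gives $\best_\infty(w) \succeq_w \best_t(w) \succ_w m$). You instead trace an explicit chain $m \prec_w m_1 \prec_w m_2 \prec_w \cdots$ of successive preferred suitors, which is a correct and self-contained alternative, though somewhat longer than the monotonicity route.

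Your proof also surfaces a real subtlety that the paper's brevity obscures: the interval algorithm, unlike Gale-Shapley, also removes elements from the \emph{bottom} of $\interval(m)$ via the suffix rule, so one must rule out that $w \succ_m F(m)$ was ever suffix-removed. You handle this via the admissibility corollary $\best_t(p) \preceq_p P(p)$, which is the right tool. One small quibble: you call this corollary ``immediate'' from (I2), but it is not quite $-$ it requires applying (I2) to $\best_t(p)$ (to see $P(\best_t(p)) \in \interval_t(\best_t(p))$, hence $\best_t(p)$ prefers $p$ to its $P$-partner whenever $p = \topp_t(\best_t(p))$), not just to $p$ itself. The argument is recoverable and you invoke it correctly later, so this is a presentational nitpick rather than a gap. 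Finally, you correctly identify that $\topp_t(m)$ is weakly decreasing along $m$'s list because intervals only shrink, which is what makes the chain strictly ascending in $w$'s order and hence finite.
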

\begin{proof}

\textbf{Admissibility:} In every stable marriage, every person $p$ is matched to someone from $\interval(p)$. This is proved by induction on the number of rounds. A person $q$ can be removed from $\interval(p)$ for one of two reasons: either $q \prec_p \best(p)$, or $q = \topp(p)$ and $p \neq \best(q)$. In the former case, if $p$ were matched to $q$, then $(p,\best(p))$ would be an unstable pair, since $p = \topp(\best(p))$ implies that $\best(p)$ prefers $p$ to any other partner in $\interval(\best(p))$. In the latter case, if $p$ were matched to $q = \topp(p)$, then $(q,\best(q))$ would be an unstable pair, since $q$ prefers $\best(q)$ over $p$ by definition, and $\best(q)$ prefers $q$ as in the former case.

The remaining analysis of this algorithm is similar to the analysis of Gale-Shapley. The outputs of the algorithm are marriages, since the algorithm ends when $\best(\topp(p)) = p$ for all $p$, hence $\topp$ and $\best$ are inverse bijections. The marriages are stable for the same reason given for Gale-Shapley. They are man-optimal and woman-optimal for the same reason. The number of iterations is at most $2n^2$ since there are $2n$ intervals, each of initial length $n$.

Finally, at the termination of the algorithm, $\best(q) = \max_q \interval(q)$. Since $\topp$ and $\best$ are inverses, this explains the dual formulas for the man-optimal and woman-optimal matchings.
\end{proof}

Our next algorithm introduces a new twist. Instead of removing $\topp(p)$ from $\interval(p)$ whenever $p \neq \best(\topp(p))$, we remove $\topp(p)$ from $\interval(p)$ whenever $p \notin \interval(\topp(p))$ as shown in Algorithm~\ref{alg:interval-delayed}. The idea is that if at some point $p \neq \best(\topp(p))$, then $\best(\topp(p)) \succ_{\topp(p)} p$, so $p$ is removed from $\interval(\topp(p))$. At the following iteration, $\topp(p)$ will be removed from $\interval(p)$ in reciprocity.  Thus, Algorithm~\ref{alg:interval-delayed} emulates Algorithm~\ref{alg:interval} with a delay of one round. We will later show that the advantage of this strange rule is the nice representation of the same algorithm in three-valued logic which can then be transformed to Subramanian's algorithm, implementable by comparator circuits.

\begin{algorithm}
\caption{Delayed interval algorithm}
\label{alg:interval-delayed}
\begin{algorithmic}
\STATE $\interval_0(m) \gets W$ for all $m \in M$
\STATE $\interval_0(w) \gets M$ for all $w \in W$
\STATE $t \gets 0$
\REPEAT
\STATE $\topp_{t}(p) \gets \max_p \interval(p)$ for all $p \in M \cup W$
\STATE $\best_{t}(q) \gets \max_q \{ p : q = \topp_{t}(p) \}$ for all $q \in M \cup W$
\STATE Remove $p$ from $\interval_t(q)$ whenever $p \prec_q \best_{t}(q)$, for all $p,q$ of opposite sex
\STATE Remove $\topp_{t}(p)$ from $\interval_t(p)$ if $p \notin \interval_t(\topp_{t}(p))$
\STATE $t \gets t + 1$
\UNTIL {$\interval_{t+1}(p) = \interval_t(p)$ for all $p \in M \cup W$}
\RETURN {$\bset{(m,\max_m \interval_t(m)) : m \in M}$ and $\bset{(\max_w \interval_t(w),w) : w \in W}$ as the man-optimal and the woman-optimal stable marriages respectively}
\end{algorithmic}
\end{algorithm}

\begin{lemma} \label{lem:interval-delayed}
 Algorithm \ref{alg:interval} returns the man-optimal and woman-optimal stable matchings, and terminates after at most $2n^2$ rounds. Furthermore, the man-optimal and woman-optimal matchings are given  by
 \[\bset{(m,\max_m \interval_t(m)) : m \in M}\text{ and }\bset{(\max_w \interval_t(w),w) : w \in W} \text{ respectively}.\]
 \end{lemma}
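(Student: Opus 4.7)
The plan is to follow the proof of Lemma~\ref{lem:interval} closely, adapting only the places that use the modified removal rule. The intuitive justification is already given in the text: whenever Algorithm~\ref{alg:interval} removes $\topp_t(p)$ from $\interval_t(p)$ because $p \neq \best_t(\topp_t(p))$, Algorithm~\ref{alg:interval-delayed} first removes $p$ from $\interval_t(\topp_t(p))$ via the common rule (since $\best_t(\topp_t(p)) \succ_{\topp_t(p)} p$), and then fires the new rule at round $t+1$, so the delayed algorithm catches up within one round.

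I would begin by proving admissibility: every stable matching $P$ satisfies $P(p) \in \interval_t(p)$ for every person $p$ and every $t$, by induction on $t$. The first removal rule is unchanged, so its analysis from Lemma~\ref{lem:interval} carries over verbatim. For the new rule, the induction hypothesis gives that $p \notin \interval_t(\topp_t(p))$ implies no stable matching pairs $p$ with $\topp_t(p)$, so removing $\topp_t(p)$ from $\interval_t(p)$ preserves the invariant. Since a stable matching exists (by Gale-Shapley), admissibility keeps intervals nonempty, so $\topp_t$ is always defined; a short check shows that intervals remain contiguous prefixes of the preference lists since one rule trims a suffix and the other trims the top. Termination in at most $2n^2$ rounds is immediate, as each of the $2n$ intervals can only shrink.

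The central new step is verifying that at termination the map $\topp_t$ induces perfect matchings. Termination means neither rule fires, so for every $p$ one has (i) $p \in \interval_t(\topp_t(p))$ (from the new rule) and (ii) every $p' \in \interval_t(q)$ satisfies $p' \succeq_q \best_t(q)$ (from the first rule). Setting $q = \topp_t(p)$ in~(ii) and using~(i) gives $p \succeq_q \best_t(q)$; combined with $\best_t(q) \succeq_q p$ (since $\topp_t(p)=q$ makes $p$ a suitor of $q$), this yields $\best_t(\topp_t(p)) = p$, which is exactly the termination condition of Algorithm~\ref{alg:interval}. So $\topp_t|_M$ and $\topp_t|_W$ restrict to mutually inverse bijections between $M$ and $W$ in both directions, and the two output sets are well-defined marriages.

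Finally, I would invoke the proofs of stability and of man/woman-optimality from Lemma~\ref{lem:interval}: admissibility guarantees that $\topp_t(p) = \max_p \interval_t(p)$ is preferred by $p$ over any partner $P(p)$ in any stable matching $P$, and the same swap argument rules out unstable pairs. I expect the main subtlety to be the ``catch-up'' step establishing $\best_t(\topp_t(p)) = p$ at termination---specifically, combining conditions (i) and (ii) correctly and using nonemptiness of intervals to rule out $\topp_t(p) = \nessun$. Once that is in place, the remaining arguments reduce verbatim to the proof of Lemma~\ref{lem:interval}.
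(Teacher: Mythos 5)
Your proposal is correct and follows essentially the same route as the paper. The paper's proof likewise establishes admissibility, shows that at a fixed point $p = \best(\topp(p))$ (arguing by contradiction on the two cases $p \in \interval(\topp(p))$ and $p \notin \interval(\topp(p))$, which is your direct derivation from conditions (i) and (ii) stated contrapositively), and then defers stability, optimality, and the $2n^2$ bound verbatim to the proof of Lemma~\ref{lem:interval}.
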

\begin{proof}
 Clearly Algorithm \ref{alg:interval-delayed} is admissible, that is $p$ is matched to someone from $\interval(p)$ in any stable matching. Furthermore, at the end of the algorithm, $p = \best(\topp(p))$. Otherwise, there are two cases. If $p \in \interval(\topp(p))$, then $p$ would be removed from $\interval(\topp(p))$, and the algorithm would continue. If $p \notin \interval(\topp(p))$, then $\topp(p)$ would be removed from $\interval(p)$, and the algorithm would continue; note that by definition, at the beginning of the round, $\topp(p) \in \interval(p)$.

 The rest of the proof follows the one for Algorithm~\ref{alg:interval}.
\end{proof}

\begin{corollary} \label{cor:interval-delayed}
 The intervals at the end of Algorithm~\ref{alg:interval} coincide with the intervals at the end of Algorithm~\ref{alg:interval-delayed}.
\end{corollary}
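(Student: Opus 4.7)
The plan is to show that both algorithms terminate in states satisfying the same characterization of the tops and bottoms of the intervals, and that together with an elementary contiguity invariant this forces the entire intervals to agree.

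First, I would establish by induction on the round counter $t$ that in both Algorithm~\ref{alg:interval} and Algorithm~\ref{alg:interval-delayed}, every set $\interval_t(p)$ is a contiguous subrange of $p$'s preference list. The only removals in either algorithm are of the current top $\topp_t(p)$ (by the second rule), or of everything strictly below $\best_t(q)$ in $q$'s order (by the first rule); both operations preserve contiguity. Consequently, each terminal interval is completely determined by its top and its bottom element.

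Next I would show that at termination of either algorithm one has $\best(q) = \min_q \interval(q)$ and $\best(\topp(p)) = p$ for all $p, q$. The identity $\best(q) = \min_q \interval(q)$ follows immediately from non-firing of the first rule, which is identical in both algorithms. The identity $\best(\topp(p)) = p$ is immediate in Algorithm~\ref{alg:interval} from non-firing of its second rule. In Algorithm~\ref{alg:interval-delayed}, non-firing of its (different) second rule only gives $p \in \interval(\topp(p))$, and one then has to combine this with the first-rule condition and the definition of $\best(\topp(p))$ as the $\topp(p)$-top-ranked suitor pointing to $\topp(p)$ in order to deduce $p = \best(\topp(p))$. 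Making this reconciliation precise — essentially formalizing the ``delay-by-one-round'' intuition mentioned in the paragraph just before Algorithm~\ref{alg:interval-delayed} — is the main, though still routine, obstacle.

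Finally I would read off the endpoints using Lemmas~\ref{lem:interval} and~\ref{lem:interval-delayed}: both algorithms return the same man-optimal marriage $P^M$ (from the tops of men's intervals) and the same woman-optimal marriage $P^W$ (from the tops of women's intervals). Since $\topp$ and $\best$ are inverse bijections on $M \cup W$ at termination, the bottom of a man $m$'s interval is $\best(m) = \topp^{-1}(m) = P^W(m)$, and symmetrically the bottom of a woman $w$'s interval is $P^M(w)$. Hence in both algorithms the terminal interval of each person $p$ is the contiguous range of $p$'s preference list stretching from their partner in the opposite-sex-optimal marriage up to their partner in the own-sex-optimal marriage, and the two intervals therefore coincide.
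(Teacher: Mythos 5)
Your proof is correct and follows the same route as the paper's, which disposes of the corollary in a single sentence — ``That follows immediately from the two formulas for the output'' — on the strength of Lemmas~\ref{lem:interval} and~\ref{lem:interval-delayed}. You have simply spelled out the reasoning that the paper leaves implicit: contiguity of the intervals, tops read off from the two output formulas, and bottoms pinned down via $\best(p)=\topp^{-1}(p)$ being $p$'s partner in the other-sex-optimal marriage.
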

\begin{proof}
 That follows immediately from the two formulas for the output. 
\end{proof}

The delayed interval algorithm can be implemented using three-valued logic. The key is the following encoding of the intervals using matrices, which we call the \emph{matrix representation}.
\begin{align*}
\MM(m,w) &= \begin{cases}
1 & \text{if  } w \succeq_m \max_m \interval(m) \\
\ast & \text{if  } \max_m \interval(m) \succ_m w \succeq_m \min_m \interval(m) \\
0 & \text{if  } \min_m \interval(m) \succ_m w
\end{cases} \\
\WW(w,m) &= \begin{cases}
0 & \text{if  } m \succeq_w \max_w \interval(w) \\
\ast & \text{if  } \max_w \interval(w) \succ_w m \succeq_w \min_w \interval(w) \\
1 & \text{if  } \min_w \interval(w) \succ_w m
\end{cases}
\end{align*}

In other words, for every man $m$, the array $\MM(m,\pi_1(m)),\ldots,\MM(m,\pi_n(m))$ has the form
\[
 \begin{array}{cc|cccc|ccc} \cline{3-6}
  1 & \cdots & 1 & \ast & \cdots & \ast & 0 & \cdots & 0 \\\cline{3-6}
 \end{array}
\]
where the men whose corresponding values are contained in the box are precisely the men in $\interval(m)$.

For every woman $w$, the array $\WW(w,\pi_1(w)),\ldots,\WW(w,\pi_n(w))$ has the form
\[
 \begin{array}{cc|cccc|ccc} \cline{3-6}
  0 & \cdots & 0 & \ast & \cdots & \ast & 1 & \cdots & 1 \\\cline{3-6}
 \end{array}
\]
where the women whose corresponding values are contained in the box are precisely the women in $\interval(w)$.

Algorithm~\ref{alg:interval-logic} is an implementation of Algorithm~\ref{alg:interval-delayed} using three-valued logic. We will show, in a sequence of steps, that at each point in time, the matrices representing the intervals in Algorithm~\ref{alg:interval-delayed} equal the matrices in Algorithm~\ref{alg:interval-logic}.

\begin{algorithm}
\caption{Delayed interval algorithm, three-valued logic formulation}
\label{alg:interval-logic}
\begin{algorithmic}
\STATE $\MM_0(m,w) = \begin{cases} 1 & \text{if  } w = \pi_1(m) \\ \ast & \text{otherwise} \end{cases}$
\STATE $\WW_0(w,m) = \begin{cases} 0 & \text{if  } m = \pi_1(w) \\ \ast & \text{otherwise} \end{cases}$
\STATE $t \gets 0$
\REPEAT
\STATE $\MM_{t+1}(m, \pi_i(m)) = \begin{cases} 1 & \text{if  } i = 1 \\ \MM_t(m,\pi_{i-1}(m)) \land \bigwedge_{j \leq i-1} \WW_t(\pi_j(m),m) & \text{otherwise} \end{cases}$
\STATE $\WW_{t+1}(w, \pi_i(w)) = \begin{cases} 0 & \text{if  } i = 1 \\ \WW_t(w,\pi_{i-1}(w)) \lor \bigvee_{j \leq i-1} \MM_t(\pi_j(w),w) & \text{otherwise} \end{cases}$
\STATE $t \gets t + 1$
\UNTIL {$\MM_t = \MM_{t-1}$ and $\WW_t = \WW_{t-1}$}
\STATE $S_M \gets \bset{ (m,w) : \MM_t(m,w) = 1 \text{ and } \WW_t(w,m) \in \{0, \ast\} }$\hfill \% man-optimal stable marriage
\STATE $S_W \gets \bset{ (m,w) : \WW_t(w,m) = 0 \text{ and } \MM_t(m,w) \in \{1, \ast\} }$ \hfill \% woman-optimal stable marriage
\RETURN {$S_M,S_W$}
\end{algorithmic}
\end{algorithm}

First, we show that the matrices properly encode intervals.

\begin{lemma} \label{lem:interval:monotone}
 At each time $t$ in the execution of Algorithm~\ref{alg:interval-logic}, and for each man $m$, the sequence 
 \[\MM_t(m,\pi_1(m)),\ldots,\MM_t(m,\pi_n(m))\] is non-increasing (with respect to the order $1 > \ast > 0$). Similarly, for each woman $w$, the sequence \[\WW_t(w,\pi_1(w)),\ldots,\WW_t(w,\pi_n(w))\] is non-decreasing
\end{lemma}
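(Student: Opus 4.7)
The plan is to prove the lemma by induction on the time $t$, exploiting the fact that the three-valued AND and OR operations (as defined in Section~\ref{sec:ccvpi}) are monotone in each argument with respect to the ordering $0 < \ast < 1$.

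For the base case $t=0$, observe that $\MM_0(m, \pi_1(m)) = 1$ and $\MM_0(m, \pi_i(m)) = \ast$ for $i \geq 2$, giving the sequence $1, \ast, \ast, \ldots, \ast$, which is non-increasing. Dually, the sequence for $\WW_0(w, \cdot)$ is $0, \ast, \ldots, \ast$, which is non-decreasing.

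For the inductive step, I would compare $\MM_{t+1}(m, \pi_i(m))$ with $\MM_{t+1}(m, \pi_{i+1}(m))$. When $i=1$ the first value equals $1$, so there is nothing to show. For $i \geq 2$, both values are defined by the recursive formula in Algorithm~\ref{alg:interval-logic}, and the only differences are (a) the first conjunct changes from $\MM_t(m, \pi_{i-1}(m))$ to $\MM_t(m, \pi_i(m))$, which by the inductive hypothesis can only decrease, and (b) one extra conjunct $\WW_t(\pi_i(m), m)$ appears in the big AND. Since three-valued AND is monotone non-decreasing in each argument, both modifications can only decrease the result, yielding $\MM_{t+1}(m, \pi_{i+1}(m)) \leq \MM_{t+1}(m, \pi_i(m))$. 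The argument for $\WW_{t+1}$ is entirely dual: the first disjunct increases from $\WW_t(w, \pi_{i-1}(w))$ to $\WW_t(w, \pi_i(w))$ by the inductive hypothesis, and an extra disjunct appears, so monotonicity of three-valued OR yields the desired inequality.

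The proof is essentially bookkeeping, with no real obstacle beyond verifying the monotonicity of the three-valued connectives and correctly lining up the recurrence indices at $i=1$ versus $i \geq 2$. I would briefly remark on this monotonicity at the outset rather than interrupting the induction.
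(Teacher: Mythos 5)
Your proof is correct and follows essentially the same route as the paper's: induction on $t$, using the monotonicity of the three-valued $\land$ and $\lor$ together with the inductive hypothesis that the previous row is monotone (the paper merely phrases the consecutive-entry comparison globally, as a componentwise AND of two non-increasing sequences — the shifted row with a $1$ prepended and the sequence of partial conjunctions $\bigwedge_{j\le i-1}\WW_t(\pi_j(m),m)$). The index bookkeeping at $i=1$ versus $i\ge 2$ is handled correctly in your version.
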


\begin{proof}
The proof is by induction.  The claim is clearly true at time $t = 0$.  For the inductive case, it suffices to analyze $\MM$ since $\WW$ can be handled dually. Furthermore, at each iteration, we ``shift'' each sequence $\MM(m,\cdot)$ one step to the right and  add  a $1$ to the left end to get the following non-increasing sequence
\[
1,\MM_{t-1}(m,\pi_{1}(m)),\MM_{t-1}(m,\pi_{2}(m)),\ldots,\MM_{t-1}(m,\pi_{n-1}(m))
\]  
Then we take a component-wise AND of the above sequence with the non-increasing sequence 
 \[
  1,\WW_{t-1}(\pi_1(m),m), \bigl(\WW_{t-1}(\pi_1(m),m) \land \WW_{t-1}(\pi_2(m),m)\bigr), \ldots,
  \bigl(\WW_{t-1}(\pi_1(m),m) \land \cdots \land \WW_{t-1}(\pi_{n-1}(m),m)\bigr). 
 \]
 It's not hard to check that the result is also a non-increasing sequence by the properties of three-valued logic.
\end{proof}

Second, we show that the intervals encoded by the matrices can only shrink. This is the same as saying that whenever an entry gets determined (to a value different from $\ast$), it remains constant.

\begin{lemma} \label{lem:interval:fixed}
 If for some time $t$, for some man $m$ and some woman $w$, $\MM_t(m,w) \in \{0,1\}$, then $\MM_s(m,w) = \MM_t(m,w)$ for $s \geq t$. A similar claim holds for $\WW$.
\end{lemma}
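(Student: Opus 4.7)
The plan is to introduce Kleene's \emph{information ordering} $\sqsubseteq$ on the three-valued set $\{0,\ast,1\}$, defined by making $\ast$ the bottom element and leaving $0$ and $1$ as incomparable maxima (so $\ast \sqsubseteq 0$, $\ast \sqsubseteq 1$, and $v \sqsubseteq v$ for all $v$). The key observation is that the three-valued operations $\land$ and $\lor$ are monotone with respect to $\sqsubseteq$: if $a \sqsubseteq a'$ and $b \sqsubseteq b'$ then $a \land b \sqsubseteq a' \land b'$ and $a \lor b \sqsubseteq a' \lor b'$. This is a small case check: whenever an input is refined from $\ast$ to $0$ or $1$, the output is either unchanged or refined from $\ast$ to $0$ or $1$ in the same direction.

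Given this, I would prove by simultaneous induction on $t$ that $\MM_t(m,w) \sqsubseteq \MM_{t+1}(m,w)$ and $\WW_t(w,m) \sqsubseteq \WW_{t+1}(w,m)$ for all $m,w$. For the base case $t=0$, the only non-$\ast$ entry of $\MM_0$ is $\MM_0(m,\pi_1(m)) = 1$, which equals $\MM_1(m,\pi_1(m))$ by definition; every other entry is $\ast$, which sits below any value. The same holds dually for $\WW_0$. For the inductive step, fix $i > 1$ and use the update rule
\[
\MM_{t+2}(m,\pi_i(m)) = \MM_{t+1}(m,\pi_{i-1}(m)) \land \bigwedge_{j \leq i-1} \WW_{t+1}(\pi_j(m),m),
\]
together with the same expression one step earlier. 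The inductive hypothesis gives $\MM_t(m,\pi_{i-1}(m)) \sqsubseteq \MM_{t+1}(m,\pi_{i-1}(m))$ and $\WW_t(\pi_j(m),m) \sqsubseteq \WW_{t+1}(\pi_j(m),m)$, so by monotonicity of $\land$ we obtain $\MM_{t+1}(m,\pi_i(m)) \sqsubseteq \MM_{t+2}(m,\pi_i(m))$. The case $i=1$ is immediate since both sides equal $1$. The argument for $\WW$ is identical using monotonicity of $\lor$ and the dual base value $0$.

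With monotonicity in hand, the lemma is immediate: if $\MM_t(m,w) \in \{0,1\}$, then iterating the monotonicity inequality yields $\MM_t(m,w) \sqsubseteq \MM_s(m,w)$ for every $s \geq t$, and the only $v$ satisfying $0 \sqsubseteq v$ is $v = 0$ (similarly for $1$), so $\MM_s(m,w) = \MM_t(m,w)$. The corresponding claim for $\WW$ follows the same way. I do not expect a real obstacle; the only conceptual point is recognising that the information order, rather than the truth order $1 > \ast > 0$ used in Lemma~\ref{lem:interval:monotone}, is the right partial order here, and that the update rules are built entirely from Kleene operations which preserve it.
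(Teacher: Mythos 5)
Your proof is correct. Both you and the paper establish the same one-step preservation by induction on $t$, but the route differs: the paper directly unfolds the update rule into two cases (an entry becomes $1$ only if every term it depends on was already $1$; it becomes $0$ only if at least one term was $0$) and invokes the inductive hypothesis on those earlier entries. You instead isolate a reusable fact --- the three-valued $\land$ and $\lor$ are monotone in Kleene's information order $\sqsubseteq$ --- and then the one-step claim $\MM_t \sqsubseteq \MM_{t+1}$ follows by a single application of monotonicity to the update formula, with no case split on the determined value. Your version is cleaner, makes explicit the structural reason the paper's case analysis succeeds, and would extend unchanged to any update rule built from information-monotone operations; the cost is the modest overhead of introducing $\sqsubseteq$ and checking the monotonicity of the connectives, which the paper sidesteps by arguing concretely. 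Note also that you quietly correct a small slip in the paper's statement, which writes $\MM_{t+1}(m,\pi_{i+1}(m))$ where it means $\MM_{t+1}(m,\pi_{i}(m))$.
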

\begin{proof}
 We prove the claim by induction on $t$. Let $w = \pi_i(m)$.  If $i = 1$, then the claim is trivial. Now suppose $i > 1$. If $\MM_t(m,\pi_i(m)) = 1$, then 
 \[\MM_{t-1}(m,\pi_{i-1}(m)) = \WW_{t-1}(\pi_1(m),m) = \cdots = \WW_{t-1}(\pi_{i-1}(m),m) = 1.\] 
 The induction hypothesis shows that all these elements retain their values in the next iteration and hence $\MM_{t+1}(m,\pi_{i+1}(m)) = 1$. If $\MM_t(m,\pi_i(m)) = 0$, then at least one of these elements is equal to zero; this element retains its value in the next iteration by the induction hypothesis; hence $\MM_{t+1}(m,\pi_{i+1}(m)) = 0$.
\end{proof}

It remains to show that the way that the underlying intervals are updated matches the update rules of Algorithm~\ref{alg:interval-delayed}.

\begin{lemma} \label{lem:interval-logic}
 At each time $t$, the matrix representation of the intervals in Algorithm~\ref{alg:interval-delayed} is the same as the matrices $\MM_t,\WW_t$ in the execution of Algorithm~\ref{alg:interval-logic}. Furthermore, both algorithms return the same marriages.
\end{lemma}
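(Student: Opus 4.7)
The proof will proceed by induction on $t$, establishing the stronger statement that at each time $t$ the matrices $\MM_t, \WW_t$ of Algorithm~\ref{alg:interval-logic} coincide with the matrix representations of the intervals $\interval_t(\cdot)$ of Algorithm~\ref{alg:interval-delayed}. The base case is immediate: both algorithms initialize each interval to the full opposite-sex set, so the initializations of $\MM_0$ and $\WW_0$ place exactly one $1$ per row of $\MM_0$ (at each man's top choice) and one $0$ per row of $\WW_0$ (at each woman's top choice), matching the matrix representation of the full intervals.

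For the inductive step we would fix a man $m$ with $\topp_t(m) = \pi_k(m)$, so by the induction hypothesis $\MM_t(m,\pi_j(m)) = 1$ exactly for $j \leq k$. The update formula yields
\[ \MM_{t+1}(m,\pi_{k+1}(m)) \;=\; 1 \land \bigwedge_{j \leq k} \WW_t(\pi_j(m),m). \]
The key subclaim is $\WW_t(\pi_j(m),m) = 1$ for every $j < k$: such a $\pi_j(m)$ sits strictly above $m$'s current top, so she was removed via Rule~2 at some earlier time $s$, giving $m \notin \interval_s(\pi_j(m))$. A short subsidiary argument using monotonicity of intervals rules out $m$ having been ejected from the \emph{top} of $\pi_j(m)$'s interval, leaving the bottom as the only option; so $\WW_s(\pi_j(m),m) = 1$, and Lemma~\ref{lem:interval:fixed} then propagates this to time $t$. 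Hence $\MM_{t+1}(m,\pi_{k+1}(m)) = \WW_t(\pi_k(m),m)$, which is $1$ iff $m \notin \interval_t(\pi_k(m))$, i.e., exactly when Rule~2 removes $\pi_k(m)$ from $\interval_t(m)$. For the bottom boundary we would let $\best_t(m) = \pi_{j_0}(m)$: since $m = \topp_t(\pi_{j_0}(m))$, we have $\WW_t(\pi_{j_0}(m),m) = 0$, which forces the conjunction to $0$ for every $i > j_0$, giving $\MM_{t+1}(m,\pi_i(m)) = 0$, matching Rule~1's removal of $\pi_i(m)$. A symmetric analysis handles the $\WW_{t+1}$ update.

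Once the correspondence holds at every $t$, the termination conditions of the two algorithms coincide; at termination the definitions of $S_M, S_W$ in Algorithm~\ref{alg:interval-logic} reduce to $\bset{(m,\max_m \interval_t(m)) : m \in M}$ and $\bset{(\max_w \interval_t(w),w) : w \in W}$, which by Lemma~\ref{lem:interval-delayed} are the man- and woman-optimal stable marriages.

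The hardest part will be the case analysis in the inductive step: verifying that the update formulas simultaneously and correctly advance both interval boundaries while leaving intermediate positions with value $\ast$, and proving the auxiliary claim $\WW_t(\pi_j(m),m) = 1$ for $j < k$. The latter is where the delayed character of Algorithm~\ref{alg:interval-delayed} is essential; without the delay, $m$ could a priori be above rather than below $\pi_j(m)$'s interval, breaking the argument.
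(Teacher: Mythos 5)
Your proposal follows essentially the same plan as the paper: induction on $t$ with the base case immediate, then a case analysis of how the update rules advance the top and bottom boundaries of each interval, ending with the termination argument and an appeal to Lemma~\ref{lem:interval-delayed}. The one place you take a genuinely different route is the key subclaim that $\WW_t(\pi_j(m),m)=1$ for the women $\pi_j(m)$ sitting strictly above $m$'s current top. You argue this by \emph{monotonicity of intervals}: when $\pi_j(m)$ was ejected from the top of $\interval(m)$ at time $s$ via Rule~2, we had $m \notin \interval_s(\pi_j(m))$; if $m$ were \emph{above} that interval, then $m$ would have been removed from the top of $\interval(\pi_j(m))$ at some $s' < s$, forcing $\pi_j(m) \notin \interval_{s'}(m)$, which contradicts $\pi_j(m) \in \interval_s(m) \subseteq \interval_{s'}(m)$. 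The paper instead rules out the ``above'' possibility with a \emph{stability} argument: if $m \succ_{\pi_j(m)} \interval(\pi_j(m))$, then $(m,\pi_j(m))$ would be an unstable pair in the marriage the algorithm eventually outputs, contradicting Lemma~\ref{lem:interval-delayed}. Both arguments are correct; yours is more local and self-contained (it does not rely on knowing that the algorithm's final output is stable), which is a genuine stylistic advantage, whereas the paper's leans on already-established global facts.

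Two small points you should tidy up when filling in the details. First, the assertion that $\WW_t(\pi_k(m),m) = 1$ iff $m \notin \interval_t(\pi_k(m))$ requires the same monotonicity argument at index $j=k$ to rule out $m$ lying strictly \emph{above} $\interval_t(\pi_k(m))$; as written, you invoke the subsidiary argument only for $j<k$. Second, when you apply Lemma~\ref{lem:interval:fixed} to propagate $\WW_s(\pi_j(m),m)=1$ forward to time $t$, make explicit that the induction hypothesis at time $s$ is what identifies the interval-representation value with the matrix value in Algorithm~\ref{alg:interval-logic}, since Lemma~\ref{lem:interval:fixed} is a statement about the latter. Neither of these is a real gap; both are easily repaired. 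You also correctly flag the remaining bookkeeping (verifying the $\ast$ values in the interior and the symmetric analysis for $\WW$) as the part needing care, which is indeed where the paper spends the most ink.
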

\begin{proof}
 The proof is by induction on $t$. The base case $t = 0$ is clear by inspection.

 We now compare the update rules in some round $t$ of both algorithms. There are two ways an interval $\interval(m)$ can be updated: either a woman is removed from the bottom of the interval, or a woman is removed from the top of the interval.

 In the former case, a woman $\pi_i(m)$ is removed from $\interval_{t+1}(m)$ since $\pi_i(m) \prec_m \best_t(m)$. Suppose $\best_t(m) = \pi_j(m)$, where $j < i$. Since $m = \topp_t(\pi_j(m))$, we know that $\WW_t(\pi_j(m),m) = 0$, and so
\begin{align}
 \MM_{t+1}(m,\pi_i(m)) = \MM_t(m,\pi_{i-1}(m)) \land \bigwedge_{j \leq i-1} \WW_t(\pi_j(m),m) = 0. \label{eq:interval-logic}
\end{align}
 Conversely, suppose $\MM_{t+1}(m,\pi_i(m)) = 0$ while $\MM_t(m,\pi_i(m)) = \ast$. Since $\MM_t(m,\cdot)$ is non-increasing, we have $\MM_t(m,\pi_{i-1}(m)) \neq 0$. Thus for $\MM_{t+1}(m,\pi_i(m)) = 0$, we must have $\WW_t(\pi_j(m),m) = 0$ for some $j < i$. Now suppose $m \neq \topp_t(\pi_j(m))$, then that equation \eref{interval-logic} were true at an earlier time $s<t$, at which $\MM_{s}(m,\pi_i(m))$ would have become~$0$. Hence $m = \topp_t(\pi_j(m))$, and $\pi_i(m)$ is removed from $\interval_{t+1}(m)$.

 In the latter case, a woman $\pi_i(m)$ is removed from $\interval_{t+1}(m)$ since $\pi_i(m) = \topp_t(m)$ and $m \notin \interval_t(\pi_i(m))$. We claim that $m \prec_{\pi_i(m)} \interval(\pi_i(m))$, since otherwise $m \succ_{\pi_i(m)} \interval(\pi_i(m))$. Thus $(m,\pi_i(m))$ would be an unstable pair in any marriage produced by the algorithm ($m$ will be matched to a woman inferior to $\topp_t(m) = \pi_i(m)$, and $\pi_i(m)$ will be matched to a man from $\interval(\pi_i(m))$ whom  $\pi_i(m)$ doesn't like as much as $m$), and this contradicts that the algorithm produces some stable marriage. Hence we have $m \prec_{\pi_i(m)} \interval(\pi_i(m))$ and so $\WW_t(\pi_i(m),m) = 1$.  For $j \leq i-1$, the woman $\pi_j(m)$ must have been removed from $\interval(m)$ in the past (since women can only be removed from the top of $\interval(m)$ one at a time), and at that time $\WW(\pi_j(m),m) = 1$ (just as at the current time, $\WW_t(\pi_i(m),m) = 1$); by Lemma~\ref{lem:interval:fixed}, this value stays $1$. Hence
 \[ \MM_{t+1}(m,\pi_{i+1}(m)) = \MM_t(m,\pi_i(m)) \land \bigwedge_{j \leq i} \WW_t(\pi_j(m),m) = 1. \]
 Conversely, suppose $\MM_{t+1}(m,\pi_{i+1}(m)) = 1$ while $\MM_t(m,\pi_{i+1}(m)) = \ast$. Since $\MM_t(m,\pi_i(m)) = 1$, we must have $\pi_i(m) = \topp_t(m)$. Further, $\WW_t(\pi_i(m),m) = 1$ shows that $m \notin \interval_t(\pi_i(m))$. Therefore $\pi_i(m)$ is removed from $\interval_{t+1}(m)$.

 Finally, we conclude that the two algorithms return the same man-optimal and woman-optimal stable marriages by  \lemref{interval:monotone}.
\end{proof}

Let us illustrate the workings of Algorithm~\ref{alg:interval-logic}. The following diagram illustrates a situation in which a man's interval shrinks from the bottom. The diagram illustrates $\WW_t(w,\cdot), \MM_t(m,\cdot), \MM_{t+1}(m,\cdot)$, respectively, where $m = \topp_t(w)$. The two elements in red form a pair $\MM(m,w),\WW(w,m)$.
\begin{align*}
 \begin{array}{lc|cccccc|cc} 
  \multicolumn{2}{c}{} & \multicolumn{1}{c}{m} & \multicolumn{1}{c}{} & \hphantom{w} & \multicolumn{5}{c}{} \\ \cline{3-8}
  \WW_t(w,\cdot) & 0 & {\color{red}0} & \ast & \ast & \ast & \ast & \ast & 1 & 1 \\ \cline{3-8}
 \end{array} \\
 \begin{array}{lcc|cccc|ccc} 
  \multicolumn{2}{c}{} & \multicolumn{1}{c}{\hphantom{m}} & \multicolumn{1}{c}{} & \multicolumn{1}{c}{w} & \multicolumn{5}{c}{} \\ \cline{4-7}
  \MM_t(m,\cdot) & 1 & 1 & 1 & {\color{red}\ast} & \ast & \ast & 0 & 0 & 0 \\ \cline{4-7}
 \end{array} \\
 \begin{array}{lcc|cc|ccccc}
   \multicolumn{2}{c}{} & \multicolumn{1}{c}{\hphantom{m}} & \multicolumn{1}{c}{} & \multicolumn{1}{c}{\hphantom{w}} & \Downarrow & \multicolumn{4}{c}{} \\ \cline{4-5}
  \MM_{t+1}(m,\cdot) & 1 & 1 & 1 & {\color{red}\ast} & 0 & 0 & 0 & 0 & 0 \\ \cline{4-5}
 \end{array}
\end{align*}

The next diagram illustrates a situation in which a man's interval shrinks from the top. This time $w = \topp_t(m)$.
\begin{align*}
 \begin{array}{lc|cccccc|cc} 
  \multicolumn{3}{c}{} & \hphantom{w} & \multicolumn{5}{c}{} & m \\ \cline{3-8}
  \WW_t(w,\cdot) & 0 & 0 & \ast & \ast & \ast & \ast & \ast & 1 & {\color{red}1} \\ \cline{3-8}
 \end{array} \\
 \begin{array}{lcc|cccc|ccc}
  \multicolumn{3}{c}{} & w & \multicolumn{5}{c}{} & \hphantom{m} \\ \cline{4-7}
  \MM_t(m,\cdot) & 1 & 1 & {\color{red}1} & \ast & \ast & \ast & 0 & 0 & 0 \\ \cline{4-7}
 \end{array} \\
 \begin{array}{lccc|ccc|ccc}
  \multicolumn{3}{c}{} & \multicolumn{1}{c}{\hphantom{w}} & \Downarrow & \multicolumn{4}{c}{} & \hphantom{m} \\ \cline{5-7}
  \MM_{t+1}(m,\cdot) & 1 & 1 & {\color{red}1} & 1 & \ast & \ast & 0 & 0 & 0 \\ \cline{5-7}
 \end{array}
\end{align*}

\subsubsection{Subramanian's algorithm} \label{s:subramanian}

Subramanian's algorithm is very similar to Algorithm~\ref{alg:interval-logic}. The latter algorithm is not implementable using comparator circuits, since (for example) the value $\WW_t(\pi_1(m),m)$ is used in the update rule of \[\MM_{t+1}(m,\pi_2(m)),\ldots,\MM_{t+1}(m,\pi_n(m)).\] Subramanian's algorithm, displayed as Algorithm~\ref{alg:subramanian}, corrects this issue by retaining only the most important term in each conjunction and disjunction.

\begin{algorithm}
\caption{Subramanian's algorithm}
\label{alg:subramanian}
\begin{algorithmic}
\STATE $\MM_0(m,w) = \begin{cases} 1 & \text{if  } w = \pi_1(m) \\ \ast & \text{otherwise} \end{cases}$
\STATE $\WW_0(w,m) = \begin{cases} 0 & \text{if  }m = \pi_1(w) \\ \ast & \text{otherwise} \end{cases}$
\STATE $t \gets 0$
\REPEAT
\STATE $\MM_{t+1}(m, \pi_i(m)) = \begin{cases} 1 & \text{if  } i = 1 \\ \MM_t(m,\pi_{i-1}(m)) \land \WW_t(\pi_{i-1}(m),m) & \text{otherwise} \end{cases}$
\STATE $\WW_{t+1}(w, \pi_i(w)) = \begin{cases} 0 & \text{if  } i = 1 \\ \WW_t(w,\pi_{i-1}(w)) \lor \MM_t(\pi_{i-1}(w),w) & \text{otherwise} \end{cases}$
\STATE $t \gets t + 1$
\UNTIL {$\MM_t = \MM_{t-1}$ and $\WW_t = \WW_{t-1}$}
\STATE $S_M \gets \bset{ (m,w) : \MM_t(m,w) = 1 \text{ and } \WW_t(w,m) \in \{0, \ast\} }$\hfill \% man-optimal stable marriage
\STATE $S_W \gets \bset{ (m,w) : \WW_t(w,m) = 0 \text{ and } \MM_t(m,w) \in \{1, \ast\} }$\hfill \% woman-optimal stable marriage
\RETURN {$S_M,S_W$}
\end{algorithmic}
\end{algorithm}

In the analysis of Algorithm~\ref{alg:interval-logic}, we already saw that Subramanian's update rule works when an entry of $\MM$ is set to~$1$. When Algorithm~\ref{alg:interval-logic} sets an entry to~$0$, say $\MM_{t+1}(m,\pi_i(m)) = 0$, it is due to $\WW_t(\pi_j(m),m) = 0$ for some $j \leq i-1$. If $j = i-1$, then Subramanian's algorithm will also set $\MM_{t+1}(m,\pi_i(m)) = 0$. Otherwise, Subramanian's algorithm will set $\MM_{t+1}(m,\pi_{j+1}(m)) = 0$, and this zero will propagate, so that $\MM_{t+i-j}(m,\pi_i(m)) = 0$. This shows that in some sense, Subramanian's algorithm mimics Algorithm~\ref{alg:interval-logic}. It remains to show that Subramanian's algorithm computes the same final $\MM$ and $\WW$ as Algorithm~\ref{alg:interval-logic}.

First, we notice that the termination conditions of both algorithms are really the same. For Subramanian's algorithm, the conditions are that for $i > 1$,
\begin{equation}  \label{eq:subramanian:termination}
\begin{aligned}
 \MM(m, \pi_i(m)) &= \MM(m,\pi_{i-1}(m)) \land \WW(\pi_{i-1}(m),m), \\
 \WW(w, \pi_i(w)) &= \WW(w,\pi_{i-1}(w)) \lor \MM(\pi_{i-1}(w),w).
\end{aligned}
\end{equation}
For Algorithm~\ref{alg:interval-logic}, the termination conditions are
\begin{equation} \label{eq:interval:termination}
\begin{aligned}
 \MM(m, \pi_i(m)) &= \MM(m,\pi_{i-1}(m)) \land \bigwedge_{j \leq i-1} \WW(\pi_j(m),m), \\
 \WW(w, \pi_i(w)) &= \WW(w,\pi_{i-1}(w)) \lor \bigvee_{j \leq i-1} \MM(\pi_j(w),w).
\end{aligned}
\end{equation}
\begin{lemma} \label{lem:subramanian:termination}
 The matrices $\MM,\WW$ at the end of Subramanian's algorithm satisfy the termination conditions of Algorithm~\ref{alg:interval-logic}, and vice versa. Moreover, these are always matrix representations of intervals.
\end{lemma}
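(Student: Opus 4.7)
The plan is to establish the lemma in two halves: first that at a fixed point both algorithms produce matrices which are matrix representations of intervals, and then that the termination conditions \eqref{eq:interval:termination} and \eqref{eq:subramanian:termination} are equivalent. Equivalence of the termination conditions is in fact the stronger statement: once it is proved, both algorithms have exactly the same fixed-point matrices, so the representation claim need only be checked for one of them.

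For the representation claim, the case of Algorithm~\ref{alg:interval-logic} is immediate from Lemma~\ref{lem:interval-logic}, which identifies $\MM_t,\WW_t$ with the matrix representation of the intervals maintained by Algorithm~\ref{alg:interval-delayed}. For Subramanian's algorithm I would argue directly. First, $\MM_t(m,\pi_1(m)) = 1$ and $\WW_t(w,\pi_1(w)) = 0$ are invariants of the update rule, so they hold at the fixed point. Second, in the spirit of Lemma~\ref{lem:interval:monotone}, I would show that the sequence $\MM(m,\pi_1(m)),\ldots,\MM(m,\pi_n(m))$ at the fixed point has the shape ``some $1$'s, then some $\ast$'s, then some $0$'s'', which is exactly the form of a matrix representation of some interval. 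This is a one-step case analysis on the Subramanian recurrence $\MM(m,\pi_i(m)) = \MM(m,\pi_{i-1}(m)) \wedge \WW(\pi_{i-1}(m),m)$: starting at $1$ the next value can be $1$, $\ast$, or $0$; starting at $\ast$ the next value can only be $\ast$ or $0$; and $0$ is absorbing. The dual argument using $\vee$ handles $\WW$.

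For the equivalence of termination conditions, fix a man $m$ and abbreviate $M_i = \MM(m,\pi_i(m))$ and $f_j = \WW(\pi_j(m),m)$; the $\WW$ case is entirely dual. Note that in the total order $1 > \ast > 0$, the three-valued $\wedge$ is simply the minimum. The direction \eqref{eq:interval:termination} $\Rightarrow$ \eqref{eq:subramanian:termination} is short: assuming \eqref{eq:interval:termination} at index $i-1$ gives $M_{i-1} \leq f_j$ for every $j \leq i-2$, so all those $f_j$ are redundant in $M_{i-1} \wedge \bigwedge_{j \leq i-1} f_j$, which therefore collapses to $M_{i-1} \wedge f_{i-1}$. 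The direction \eqref{eq:subramanian:termination} $\Rightarrow$ \eqref{eq:interval:termination} is the main technical step, and I would prove it by induction on $i$. The base case $i = 2$ is immediate since the two formulas coincide there. For the inductive step, I would combine the Subramanian identity $M_{i+1} = M_i \wedge f_i$ with the inductive hypothesis $M_i \leq \bigwedge_{j \leq i-1} f_j$ (which follows from the Algorithm~\ref{alg:interval-logic} form at index $i$) to conclude $M_{i+1} = M_i \wedge f_i = M_i \wedge \bigwedge_{j \leq i} f_j$. I do not foresee a real obstacle; the only care needed is to recognize the three-valued $\wedge$ and $\vee$ as meet and join in the order $1 > \ast > 0$, so that dominated conjuncts or disjuncts can be discarded, after which everything is routine.
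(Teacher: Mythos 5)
Your proof is correct, and it is in fact more careful than the paper's one-line argument. The paper deduces the identity $\WW(\pi_{i-1}(m),m)=\bigwedge_{j\le i-1}\WW(\pi_j(m),m)$ (and its dual for $\MM$) from the monotonicity established in Lemma~\ref{lem:interval:monotone}, and concludes that the two update rules become literally identical. As written, that step does not go through: the monotonicity in Lemma~\ref{lem:interval:monotone} concerns a single row of $\WW$ (the sequence $\WW(w,\pi_1(w)),\ldots,\WW(w,\pi_n(w))$ for a fixed woman $w$), whereas the conjuncts $\WW(\pi_j(m),m)$ range over \emph{different} women $\pi_j(m)$, and the sequence $j\mapsto\WW(\pi_j(m),m)$ need not be monotone. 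For instance, take preference lists $m_1\colon w_2\succ w_1\succ w_3$, $m_2\colon w_1\succ w_2\succ w_3$, $m_3\colon w_3\succ\cdots$, $w_1\colon m_1\succ m_2\succ m_3$, $w_2\colon m_2\succ m_1\succ m_3$, $w_3\colon m_3\succ\cdots$; at the fixed point one finds $(\WW(\pi_1(m_1),m_1),\WW(\pi_2(m_1),m_1),\WW(\pi_3(m_1),m_1))=(\ast,0,1)$, so the claimed identity fails at $i-1=3$, even though both termination conditions hold. The correct observation, which your induction supplies, is the weaker statement that at a fixed point $\MM(m,\pi_{i-1}(m))\le\WW(\pi_j(m),m)$ for all $j<i-1$, so the extra conjuncts are absorbed only after the $\MM$-factor is included. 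Your two-directional argument (dominated conjuncts in one direction, induction in the other) is exactly the clean way to establish the equivalence, and your justification that Subramanian's fixed points are interval representations, using the endpoint invariants together with $M_i = M_{i-1}\wedge f_{i-1}\le M_{i-1}$, is also fine.
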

\begin{proof}
We observe in both algorithms the update rules guarantee that $\MM(m,\cdot)$ is monotone non-increasing and that $\WW(w,\cdot)$ is monotone non-decreasing, which implies
\begin{align*}
 \MM(\pi_{i-1}(w),w) = \bigvee_{j \leq i-1} \MM(\pi_j(w),w),&&
 \WW(\pi_{i-1}(m),m) = \bigwedge_{j \leq i-1} \WW(\pi_j(m),m). 
\end{align*} 
Thus, these two termination conditions are equivalent. 
\end{proof}

We call a pair of matrices $(\MM,\WW)$ a \emph{feasible pair} if they satisfy the equations in \eref{subramanian:termination} or equivalently in~\eref{interval:termination}o, and furthermore $\MM(m,\pi_1(m)) = 1$ and $\WW(w,\pi_1(w)) = 0$ for all man $m$ and woman $w$. The following lemma shows that, in some sense, Subramanian's algorithm is admissible.
\begin{lemma} \label{lem:subramanian:admissible}
 Let $\MM,\WW$ be the matrices at the end of Subramanian's algorithm. If $\MM(m,w) = c \neq \ast$ for some $m,w$, then $\MM'(m,w) = c$ for any feasible pair $(\MM,\WW)$. Same for $\WW$.
\end{lemma}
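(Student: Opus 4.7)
The plan is to prove by induction on $t$ the following stronger statement: for every $t \geq 0$, every man $m$, every woman $w$, and every feasible pair $(\MM',\WW')$,
\[ \MM_t(m,w) \neq \ast \implies \MM'(m,w) = \MM_t(m,w), \qquad \WW_t(w,m) \neq \ast \implies \WW'(w,m) = \WW_t(w,m). \]
Since Subramanian's algorithm terminates at a fixed point, the final matrices $\MM,\WW$ coincide with $\MM_t,\WW_t$ for all sufficiently large $t$, so this gives the lemma.

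The base case $t=0$ is immediate from the two defining conditions $\MM'(m,\pi_1(m))=1$ and $\WW'(w,\pi_1(w))=0$ of a feasible pair, since $\MM_0,\WW_0$ only commit to non-star values on these entries. For the inductive step, I would use Subramanian's update rule: for $i>1$,
\[ \MM_{t+1}(m,\pi_i(m)) = \MM_t(m,\pi_{i-1}(m)) \land \WW_t(\pi_{i-1}(m),m), \]
and similarly for $\WW$. Observe from the three-valued truth tables that $a \land b = 1$ forces $a=b=1$, and $a \land b = 0$ forces $a=0$ or $b=0$; dually for $\lor$. So if $\MM_{t+1}(m,\pi_i(m)) = 1$, the induction hypothesis promotes both $\MM'(m,\pi_{i-1}(m))=1$ and $\WW'(\pi_{i-1}(m),m)=1$, and then the feasibility equation (Subramanian's termination condition) applied to $(\MM',\WW')$ yields $\MM'(m,\pi_i(m))=1$. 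Symmetrically, if $\MM_{t+1}(m,\pi_i(m))=0$, then one of the two conjuncts is already $0$ in $(\MM_t,\WW_t)$, hence in $(\MM',\WW')$ by induction, and the feasibility equation forces $\MM'(m,\pi_i(m))=0$. The $\WW$ case is dual.

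The main conceptual point, rather than any real obstacle, is to recognize that the feasibility equations of Definition~\eqref{eq:subramanian:termination} are precisely the single-step update rules of Subramanian's algorithm, so a feasible pair behaves like a simultaneous fixed point of those rules, and one can therefore push any committed value from $(\MM_t,\WW_t)$ into $(\MM',\WW')$ one round at a time. One small check along the way is that the feasibility equations force $\MM'(m,\cdot)$ to be non-increasing and $\WW'(w,\cdot)$ to be non-decreasing, which guarantees that the $\ast$-values behave correctly and that the equation-based definition of feasibility is well-posed; this follows directly from the monotonicity of $\land$ and $\lor$ on $\{0,\ast,1\}$, as already observed in Lemma~\ref{lem:subramanian:termination}.
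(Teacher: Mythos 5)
Your proof is correct and essentially the same as the paper's: both proceed by induction on the number of rounds, use the base case directly from the definition of a feasible pair, and in the inductive step invoke the three-valued truth tables to propagate committed values through the Subramanian update rule, then read off the conclusion from the feasibility equations. The only cosmetic difference is that the paper phrases the induction as "on the time $t$ at which $\MM_t(m,w)$ is set to $c$" whereas you induct on all $t$ with the uniform hypothesis, which is slightly cleaner since it does not implicitly lean on Lemma~\ref{lem:interval:fixed}.
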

\begin{proof}
 The proof is by induction on the time $t$ in which $\MM_t(m,w)$ is set to $c$. If $t = 0$, then the claim follows from the definition of feasible pair. Otherwise, for some $i$ we have $\MM_t(m, \pi_i(m)) = \MM_{t-1}(m, \pi_{i-1}(m)) \land \WW_{t-1}(\pi_{i-1}(m), m)$. If $c = 1$ then $\MM_{t-1}(m, \pi_{i-1}(m)) = \WW_{t-1}(\pi_{i-1}(m), m) = 1$, and by induction these entries get the same value in all feasible pairs. The definition of feasible pair then implies that $\MM'(m,\pi_i(m)) = 1$ in any feasible pair $\MM',\WW'$. The case when $c = 0$ can be shown similarly.
\end{proof}

The following lemma shows that we can uniquely extract a stable marriage from each $0/1$-valued feasible pair, i.e. both of the matrices in the pair have $0/1$ values, and vice versa.

\begin{lemma} \label{lem:feasible-pair}
 Suppose $(\MM,\WW)$ is a $0/1$-valued feasible pair. If we marry each man $m$ to $\min_m \{w : \MM(m,w) = 1\}$, and each woman $w$ to $\min_w \{m : \WW(m,w) = 0\}$, then the result is a stable marriage.

 Conversely, every stable marriage $P$ can be encoded as a $0/1$-valued feasible pair, as follows: For each man $m$, we put $\MM(m,w) = 1$ if $w \succeq_m P(m)$, and $\MM(m,w) = 0$ otherwise. For each woman $w$, we put $\WW(w,m) = 0$ if $m \succeq_w P(w)$, and $\WW(w,m) = 1$ otherwise.
 
 The two mappings are inverses of each other.
\end{lemma}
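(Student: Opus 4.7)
The plan is to prove the forward direction (that the decoding of a $0/1$-valued feasible pair is a stable marriage), the reverse direction (that the encoding of a stable marriage is a $0/1$-valued feasible pair), and that the two constructions are mutually inverse.

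For the forward direction, first observe that each row $\MM(m,\cdot)$ consists (by the feasibility recurrence and the base condition $\MM(m,\pi_1(m))=1$) of a block of $1$'s followed by a block of $0$'s in $m$'s preference order, and dually each row $\WW(w,\cdot)$ is a block of $0$'s followed by a block of $1$'s. Consequently $P(m) := \min_m\{w : \MM(m,w)=1\}$ and $Q(w) := \min_w\{m : \WW(w,m)=0\}$ are well-defined and pick out the boundary entries of the corresponding rows. The crucial biconditional is
\[ w = P(m) \iff \MM(m,w)=1 \text{ and } \WW(w,m)=0 \iff Q(w) = m. \]
The implication $w=P(m) \Rightarrow \WW(w,m)=0$ uses the feasibility equation at the rightmost $1$ of the row $\MM(m,\cdot)$: if $w=\pi_i(m)$ and $\MM(m,\pi_{i+1}(m))=0$, then $0 = 1 \land \WW(w,m)$ forces $\WW(w,m)=0$; the corner case when the row $\MM(m,\cdot)$ is identically $1$ is handled by invoking the dual feasibility on $\WW(w,\cdot)$. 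The reverse implication follows from monotonicity: given $\MM(m,w)=1$ and $\WW(w,m)=0$ with $w=\pi_i(m)$, feasibility forces $\MM(m,\pi_{i+1}(m)) = 0$, so $w$ is the last $1$ of the row and equals $P(m)$. This biconditional makes $P$ and $Q$ mutual inverses and hence bijections between $M$ and $W$. Stability then follows quickly: a blocking pair $(m,w^*)$ would satisfy $w^* \succ_m P(m)$ and $m \succ_{w^*} Q(w^*)$, forcing $\MM(m,w^*)=1$ and $\WW(w^*,m)=0$, whence $w^* = P(m)$ by the biconditional, a contradiction.

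For the reverse direction, the base conditions $\MM_P(m,\pi_1(m))=1$ and $\WW_P(w,\pi_1(w))=0$ are immediate from the definition, so it suffices to verify the feasibility recurrences. Consider $\MM_P(m,\pi_i(m)) = \MM_P(m,\pi_{i-1}(m)) \land \WW_P(\pi_{i-1}(m),m)$ for $i>1$. If the left-hand side equals $1$ then $\pi_i(m) \succeq_m P(m)$, so $\pi_{i-1}(m) \succ_m P(m)$ gives $\MM_P(m,\pi_{i-1}(m)) = 1$; additionally $\WW_P(\pi_{i-1}(m),m) = 1$, for otherwise $(m,\pi_{i-1}(m))$ would be a blocking pair of $P$, contradicting stability. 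If the left-hand side equals $0$, monotonicity of preferences leaves two possibilities: either $\MM_P(m,\pi_{i-1}(m)) = 0$ (making the right-hand side $0$), or $\pi_{i-1}(m) = P(m)$, in which case $\WW_P(P(m),m) = 0$. The recurrence for $\WW_P$ is entirely dual.

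Finally, I would verify that the two constructions are mutually inverse. Applying the forward map to $(\MM_P, \WW_P)$ recovers $P$, since $\min_m\{w : w \succeq_m P(m)\} = P(m)$. Conversely, starting from $(\MM, \WW)$ and letting $P$ be the output of the forward map, the prefix/suffix structure of the rows gives $\MM(m,w) = 1 \iff w \succeq_m P(m)$, which matches the definition of $\MM_P$; the dual statement for $\WW$ follows analogously. The main obstacle I anticipate is the corner case in the forward direction where a row of $\MM$ is identically $1$ (so $P(m) = \pi_n(m)$ is $m$'s least preferred woman): showing $\WW(\pi_n(m),m) = 0$ then requires invoking the feasibility of $\WW$ itself rather than that of $\MM$, and it is the interlocking of the two feasibility recurrences that makes the biconditional robust in this and the analogous dual case.
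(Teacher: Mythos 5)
The overall structure of your proposal — prove the decoding is a stable marriage, prove the encoding is feasible, verify the two maps are mutually inverse — matches the paper's, and your reverse direction and stability argument are correct. However, there is a genuine gap in the forward direction, precisely in the corner case you flagged. Your biconditional
\[ w = P(m) \iff \bigl(\MM(m,w)=1 \text{ and } \WW(w,m)=0\bigr) \iff Q(w) = m \]
is true, but your justification for $w = P(m) \Rightarrow \WW(w,m)=0$ when $\MM(m,\cdot)\equiv 1$ is not. ``Invoking the dual feasibility on $\WW(w,\cdot)$'' does not resolve this case. To see why: suppose $\MM(m,\cdot)\equiv 1$ and, for contradiction, $\WW(\pi_n(m),m)=1$. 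Feasibility on $\MM(m,\cdot)$ gives $\WW(\pi_j(m),m)=1$ for $j<n$, so in fact $\WW(w',m)=1$ for every woman $w'$. Now applying feasibility to $\WW(w',\cdot)$ for any fixed $w'$ tells you only that $w'$ is matched (in the sense of the biconditional) to some man $m'=Q(w')$ with $m'\neq m$; it does not, by itself, contradict anything. The contradiction requires a global counting argument: one must additionally show that $Q$ restricted to $W$ is injective (itself another feasibility computation, namely that two women cannot share the same last-$0$ man), and then observe that an injection from $W$ into $M\setminus\{m\}$ violates the pigeonhole principle. The paper's proof explicitly carries out this pigeonhole step (via its analysis of ``desperate'' persons); your proposal replaces it with the remark that ``the interlocking of the two feasibility recurrences'' makes the biconditional robust, which is an expectation rather than an argument. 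Until that counting step is supplied, the forward direction is incomplete.
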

\begin{proof}
 \textbf{Feasible pair implies stable marriage.} Suppose $(\MM,\WW)$ is a $0/1$-valued feasible pair. We start by showing that the mapping $P$ in the statement of the lemma is indeed a marriage.

We call a person \emph{desperate} if he or she is married to the last choice in his or her preference list. 

If a man $m$ is not desperate then for some $1\le i<n$, $\MM(m,\pi_i(m)) = 1$ and $\MM(m,\pi_{i+1}(m)) = 0$. This can only happen if $\WW(\pi_i(m),m) = 0$, and furthermore if $m \succ_{\pi_i(m)} m'$, then $\WW(\pi_i(m),m') = 1$ due to $\MM(m,\pi_i(m)) = 1$. This shows that whenever a man $m$ is not desperate, $m$ is married to someone in the marriage $P$.

 If $m$ is desperate and $\WW(\pi_i(m),m) = 0$, then $m$ is married as before. Otherwise, no woman is desperate, and so similarly to the previous argument, it can be shown that every woman $w$ is married in $P$. However, the fact that $P(w) \neq m$ for all women $w$ contradicts the pigeonhole principle. Thus, we conclude that $P$ is a marriage.

It remains to show that $P$ is stable. Suppose that $(m,w)$ were an unstable pair in $P$. Then $\MM(m,w) = 1$ and $\WW(w,m) = 0$, and moreover $\MM(m,w') = 1$ for some $w' \prec_m w$. Yet~\eqref{eq:interval:termination} shows that $\MM(m,w') \leq \WW(w,m)$, and we reach a contradiction.

 \textbf{Stable marriage implies feasible pair.} Suppose $m$ is matched to $\pi_k(m)$. Consider first the case $i \leq k$. Then $\MM(m,\pi_i(m)) = \MM(m,\pi_{i-1}(m)) = 1$, and we have to show that $\WW(\pi_{i-1}(m),m) = 1$. If the latter weren't true then $(m,\pi_{i-1}(m))$ would be an unstable pair, since $\pi_{i-1}(m) \succ_m \pi_i(m)$ while $\WW(\pi_{i-1}(m),m) = 0$ implies that $\pi_{i-1}(m)$ prefers $m$ to every other man which is matched to her. If $i = k + 1$ then $\MM(m,\pi_i(m)) = 0$ and also $\WW(\pi_{i-1}(m),m) = 0$. If $i > k$ then $\MM(m,\pi_i(m)) = \MM(m,\pi_{i-1}(m)) = 0$.

 \textbf{The mappings are inverses of each other.} It is easy to check directly from the definition that if we start with a stable marriage $P$, convert it to a feasible pair $(\MM,\WW)$, and convert it back into a stable marriage $P'$, then $P = P'$. 
 
For the other direction, Lemma~\ref{lem:subramanian:termination} shows that if $(\MM,\WW)$ is a $0/1$-valued feasible pair then for each man $m$, $\MM(m,\cdot)$ consists of a positive number of $1$s followed by $0$s, and dually for each woman $w$, $\WW(w,\cdot)$ consists of a positive number of $0$s followed by $1$s. Thus, given the fact that our rule of converting $(\MM,\WW)$ to a stable marriage $P$ indeed results in a marriage, it is clear that converting $P$ back to a feasible pair results in $(\MM,\WW)$.
\end{proof}

\begin{lemma} \label{lem:subramanian}
 Subramanian's algorithm returns the man-optimal and woman-optimal stable marriages. Furthermore, the matrices $\MM,\WW$ at the end of Subramanian's algorithm coincide with the matrices $\MM,\WW$ at the end of Algorithm~\ref{alg:interval-logic}.
\end{lemma}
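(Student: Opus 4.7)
The plan is to show that the matrices $\MM,\WW$ at the end of Subramanian's algorithm coincide with those at the end of Algorithm~\ref{alg:interval-logic}; the stable marriage claims then follow immediately from Lemma~\ref{lem:interval-logic}. I prove matrix equality by showing each output has the same $0/1$ entries with the same values, from which the $\ast$ entries necessarily agree as well.

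First I establish two auxiliary properties of Subramanian's algorithm. The \emph{fixed-point property} says that once an entry of $\MM$ or $\WW$ takes a value in $\{0,1\}$, it retains that value in every subsequent iteration. This is a joint induction on $t$ using the form of the update rule: if $\MM_t(m,\pi_i(m)) = 1$ for some $i > 1$, then the defining factors $\MM_{t-1}(m,\pi_{i-1}(m))$ and $\WW_{t-1}(\pi_{i-1}(m),m)$ both equal $1$, by induction they persist to time $t$, and the Subramanian rule then yields $\MM_{t+1}(m,\pi_i(m)) = 1 \land 1 = 1$; the case of the value $0$ is analogous, since only one factor need equal $0$ and it is preserved. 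Consequently the algorithm terminates in at most $2n^2$ iterations. A further induction on $k$ gives the \emph{zero-propagation property}: if $\WW_s(\pi_j(m),m) = 0$ at some time $s$, then $\MM_{s+k}(m,\pi_{j+k}(m)) = 0$ for every $k \geq 1$, and dually for the value $1$ propagating through $\WW$.

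One direction of the matrix equality is easy: any $0/1$ entry at the end of Subramanian's algorithm must appear with the same value at the end of Algorithm~\ref{alg:interval-logic}, since the latter is a feasible pair by Lemma~\ref{lem:subramanian:termination} and Lemma~\ref{lem:subramanian:admissible} forces its $0/1$ entries to agree with Subramanian's.

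The main obstacle is the converse, which I prove by induction on the iteration count $t$ of Algorithm~\ref{alg:interval-logic}: whenever $\MM_t(m,\pi_i(m)) = c \in \{0,1\}$ in that algorithm, Subramanian's algorithm also sets $\MM(m,\pi_i(m)) = c$ at some time (and symmetrically for $\WW$). The base case $t = 0$ is trivial since both algorithms share initial matrices. For a newly set $\MM_{t+1}(m,\pi_i(m)) = c$ with $i > 1$: if $c = 1$, then by the update rule in Algorithm~\ref{alg:interval-logic} we have $\MM_t(m,\pi_{i-1}(m)) = 1$ and $\WW_t(\pi_j(m),m) = 1$ for all $j \leq i-1$; by induction each of these appears in Subramanian at some time, the fixed-point property gives a common later time $s$ at which they all hold simultaneously, and Subramanian's rule then yields $\MM_{s+1}(m,\pi_i(m)) = 1$. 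If $c = 0$, then some factor on the right-hand side of Algorithm~\ref{alg:interval-logic}'s rule is $0$: the sub-case $\MM_t(m,\pi_{i-1}(m)) = 0$ propagates through Subramanian's single-step rule directly, while the sub-case $\WW_t(\pi_j(m),m) = 0$ for some $j \leq i-1$ is handled by the zero-propagation property, giving $\MM(m,\pi_i(m)) = 0$ in Subramanian at time $s + (i-j)$, where $s$ is the first time Subramanian sets this $\WW$ entry. Combining both inclusions, the matrices are identical, and Lemma~\ref{lem:interval-logic} closes the proof.
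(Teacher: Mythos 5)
Your proof is correct, and it takes a genuinely different route from the paper's. The paper's proof of this lemma proceeds in the opposite direction: it first establishes the \emph{optimality} claims using the feasible-pair machinery (Lemma~\ref{lem:subramanian:admissible} plus the stable-marriage/feasible-pair bijection of Lemma~\ref{lem:feasible-pair}, together with the observation that replacing all $\ast$'s by $0$'s or all by $1$'s yields feasible pairs), and only then concludes matrix equality, arguing tersely that since both outputs are matrix representations of intervals whose endpoints are determined by the (now known to be identical) man-optimal and woman-optimal marriages, the matrices must coincide. You instead prove \emph{matrix equality} first by a direct iteration-by-iteration simulation: one inclusion via Lemma~\ref{lem:subramanian:admissible} applied to the feasible pair produced by Algorithm~\ref{alg:interval-logic}, and the other via an induction over the dynamics, using a fixed-point property and a zero-propagation property of Subramanian's algorithm — both of which you prove cleanly. (The paper gestures at this zero-propagation behavior informally in the prose between Algorithm~\ref{alg:subramanian} and Lemma~\ref{lem:subramanian:termination} but does not formalize it.) You then read off optimality from Lemma~\ref{lem:interval-logic}. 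Your version is longer but more self-contained and constructive — it does not invoke Lemma~\ref{lem:feasible-pair} at all — while the paper's is shorter but leans on the order-theoretic structure of feasible pairs and leaves the final ``hence they must coincide'' step for the reader to unpack. Both are sound.
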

\begin{proof}
 The monotonicity of $\land$ and $\lor$ shows that if we replace every $\ast$ in $\MM,\WW$ with $0$, then the resulting $(\MM,\WW)$ is still a feasible pair; the same holds if we replace every $\ast$  with $1$.

 Lemma~\ref{lem:subramanian:admissible} and Lemma~\ref{lem:feasible-pair} together imply that the first output is the man-optimal stable matching, and the second output is the woman-optimal stable matching. Lemma~\ref{lem:subramanian:termination} shows that at termination, the matrices $\MM,\WW$ are matrix representations of intervals, hence they must coincide with the matrices at the end of Algorithm~\ref{alg:interval-logic}.
\end{proof}

\begin{lemma} \label{lem:subramanian:runtime}
 Subramanian's algorithm terminates after at most $2n^2$ iterations.
\end{lemma}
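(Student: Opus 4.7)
The plan is a potential-function argument. Let $\phi_t$ denote the total number of entries of $\MM_t$ and $\WW_t$ whose value lies in $\{0,1\}$; clearly $\phi_t \leq 2n^2$. The plan is to show that $\phi$ is non-decreasing along iterations and that it strictly increases whenever the termination condition fails, from which the stated bound is immediate.

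The crux is the invariant that once an entry becomes definite it stays definite with the same value: if $\MM_t(m,w) \in \{0,1\}$ then $\MM_{t+1}(m,w) = \MM_t(m,w)$, and symmetrically for $\WW$. I would prove this by simultaneous induction on $t$. The base case $t=0$ is immediate, since the only definite entries of $\MM_0,\WW_0$ sit in the top-choice columns, which are preserved verbatim by the hard-coded rules $\MM_{t+1}(m,\pi_1(m)) = 1$ and $\WW_{t+1}(w,\pi_1(w)) = 0$. For the inductive step with $i > 1$, the value $\MM_t(m,\pi_i(m))$ was produced by $\MM_{t-1}(m,\pi_{i-1}(m)) \land \WW_{t-1}(\pi_{i-1}(m),m)$; if this conjunction equals $1$ both operands must be $1$, and if it equals $0$ at least one operand must be $0$. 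In either case the induction hypothesis applied at $t-1$ pins the same operand value at time $t$, so the update rule at $t+1$ reproduces the same definite result. The $\WW$ case is handled dually.

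Once the invariant is in hand, the counting argument is routine. Any change between consecutive rounds must be an $\ast \to \{0,1\}$ transition, so $\phi_{t+1} \geq \phi_t$; and failure of the termination condition $\MM_t = \MM_{t-1}$, $\WW_t = \WW_{t-1}$ forces some entry to change, hence $\phi_{t+1} > \phi_t$. Since $\phi$ can strictly increase at most $2n^2$ times, the loop runs at most $2n^2$ iterations, establishing the lemma.

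The step that will require the most care is the inductive verification above. Subramanian's update rule for $\MM_{t+1}(m,\pi_i(m))$ does not refer to $\MM_t(m,\pi_i(m))$ at all, so the persistence of a definite entry is not visible from a one-round inspection; it has to be traced back through the operands $\MM_{t-1}(m,\pi_{i-1}(m))$ and $\WW_{t-1}(\pi_{i-1}(m),m)$ using the induction hypothesis at the previous time step. This is also why the induction must be joint on $\MM$ and $\WW$ rather than one-sided.
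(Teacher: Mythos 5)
Your proof is correct and takes essentially the same approach as the paper: the paper's one-line argument is that "at each iteration at least one entry changes from $\ast$ to $0$ or $1$," which tacitly relies on the persistence invariant you prove explicitly by induction. Your fleshed-out version of the persistence argument (inducting jointly on $\MM$ and $\WW$, tracing each definite value at time $t$ back to definite operands at time $t-1$) supplies the justification the paper leaves implicit, and the counting step then matches the paper's exactly.
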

\begin{proof}
Since there are $2n^2$ entries in both matrices, and at each iteration at least one entry changes from $\ast$ to $0$ or $1$, the algorithm terminates after at most $2n^2$ iterations.
\end{proof}

A formal correctness proof of Subramanian's algorithm can be found in~\cite{LCY11}.

\subsubsection{$\MOSM$ and $\WOSM$  are $\AC^{0}$ many-one reducible to $\CCV$}

In the remaining section, we will show that Subramanian's algorithm can be implemented as a three-valued comparator circuit. 

First, since for each man $m$, the pair of values $\MM_t(m,\pi_{i-1}(m))$ and  $\WW_t(\pi_{i-1}(m),m)$ is only used once to compute the two outputs $\MM_t(m,\pi_{i-1}(m)) \land \WW_t(\pi_{i-1}(m),m)$ and $\MM_t(m,\pi_{i-1}(m)) \lor \WW_t(\pi_{i-1}(m),m)$, and then each output is used at most once  when updating $\MM_{t+1}(m,\pi_{i}(m))$ and $\WW_{t+1}(m,\pi_{i}(m))$. Thus the whole update rule can be easily implemented using comparator gates.   

Second, we know that the algorithm converges within $2n^2$ iterations to a fixed point. Therefore, if we run the loop for exactly $2n^2$ iterations, the result would be the same. Hence, we can build a comparator circuit to simulate exactly $2n^2$ iterations of Subramanian's algorithm. 

Finally, we can extract the man-optimal stable matching using a simple comparator circuit with negation gates. Recall that the logical values $0,\ast,1$ are represented in reality by pairs of wires with values $(0,0),(0,1),(1,1)$. In the man-optimal stable matching, a man $m$ is matched to $\pi_i(m)$ if $\MM(m,\pi_i(m)) = 1$ and either $i = n$ or $\MM(m,\pi_{i+1}(m)) \in \{0,\ast\}$. In the latter case, if the corresponding wires are $(\alpha,\beta)$ and $(\gamma,\delta)$, then the required information can be extracted as $\alpha \land \beta \land \lnot \gamma$.

\begin{theorem}\label{theo:mosm}
   $\MOSM$ and $\WOSM$ are $\AC^{0}$ many-one reducible to $\CCVN$.
\end{theorem}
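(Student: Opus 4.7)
The plan is to build an $\AC^0$-uniform family of comparator circuits with negation gates that runs Subramanian's algorithm (Algorithm~\ref{alg:subramanian}) for exactly $2n^2$ iterations and then extracts the required bit. By Lemma~\ref{lem:subramanian} and Lemma~\ref{lem:subramanian:runtime}, the matrices $\MM,\WW$ reach their fixed-point values within $2n^2$ iterations, and the update rules act as the identity on a fixed point, so running for exactly $2n^2$ rounds is safe regardless of the instance.

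The key structural observation is that a single iteration can be implemented by $n^2$ three-valued comparator gates. For each ordered pair $(m,w)$, the values $\MM_t(m,w)$ and $\WW_t(w,m)$ feed into exactly two expressions of the update rule: the conjunction $\MM_t(m,w) \land \WW_t(w,m)$, which becomes $\MM_{t+1}(m,\pref_i(m))$ when $w = \pref_{i-1}(m)$, and the disjunction $\MM_t(m,w) \lor \WW_t(w,m)$, which becomes $\WW_{t+1}(w,\pref_j(w))$ when $m = \pref_{j-1}(w)$. These are precisely the two outputs of one three-valued comparator gate, so placing one such gate per pair $(m,w)$ suffices; if $w = \pref_n(m)$ or $m = \pref_n(w)$, the correspondingly unused output is discarded onto a dummy wire. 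The values $\MM_{t+1}(m,\pref_1(m)) = 1$ and $\WW_{t+1}(w,\pref_1(w)) = 0$ are hard-wired as constant annotations between levels. Stacking $2n^2$ one-iteration gadgets in sequence yields an $O(n^4)$-gate three-valued comparator circuit, which by Theorem~\ref{theo:tcv} translates into a standard comparator circuit of the same asymptotic size.

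To extract the answer, I would apply Lemma~\ref{lem:feasible-pair}: the pair $(m,\pref_i(m))$ lies in the man-optimal matching iff $\MM(m,\pref_i(m)) = 1$ and either $i = n$ or $\MM(m,\pref_{i+1}(m)) \in \{0,\ast\}$. Under the pair-of-wires encoding $0 \mapsto (0,0),\ast \mapsto (0,1),1 \mapsto (1,1)$ from Theorem~\ref{theo:tcv}, if $(\alpha,\beta)$ and $(\gamma,\delta)$ encode the two relevant $\MM$-entries, this condition is $\alpha \land \beta \land \lnot \gamma$, computable with one negation gate and two comparator gates, and the result can be directed to a designated wire of the $\CCVN$ instance; the woman-optimal case is dual.

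The main obstacle will be verifying $\AC^0$-uniformity of the wiring: for a given pair $(m,w)$ one must locate the index $i$ with $\pref_{i-1}(m) = w$ (and analogously $j$ with $\pref_{j-1}(w) = m$) in order to route the two outputs of the corresponding comparator gate to the correct input slots of the next level. These amount to inverse look-ups in the input preference lists and are first-order definable from the instance, so they are computable in $\AC^0$ and the overall construction is indeed an $\AC^0$ many-one reduction.
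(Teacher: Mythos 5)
Your proposal is correct and follows essentially the same approach as the paper's proof: implement one iteration of Subramanian's algorithm as a layer of $n^2$ three-valued comparator gates (exploiting the fanout-one structure of the update rules), stack $2n^2$ such layers to guarantee the fixed point is reached, encode three-valued wires as pairs of Boolean wires via the $\TCV$-to-$\CCV$ gadget, and extract the man-optimal match bit as $\alpha \land \beta \land \lnot \gamma$ using a single negation gate. Your explicit accounting of where each comparator output is routed (and the discarded outputs when $w = \pref_n(m)$ or $m = \pref_n(w)$) and your observation that the wiring only requires inverse look-ups into the preference lists, hence is $\AC^0$-uniform, spell out details the paper leaves implicit, but the argument is the same.
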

\begin{proof} We will show only the reduction from $\MOSM$ to $\CCVN$ since the reduction from $\WOSM$ to $\CCVN$ works similarly.

Following the above construction, we can define an $\AC^{0}$ function that takes as input an instance of $\MOSM$ with preference lists for all the men and women, and produces a three-valued comparator circuit that implements Subramanian's algorithm, and then extracts the man-optimal stable matching.
\end{proof}

\corref{lfm} and Theorems \ref{theo:tcv}, \ref{theo:me} and \ref{theo:mosm} give us the following corollary.

\begin{corollary} 
The ten problems $\MOSM$, $\WOSM$, $\SMP$, $\CCV$, $\CCVN$, $\TCV$, $\TLFMM$,  $\LFMM$,  $\TVLFMM$ and  $\VLFMM$ are all equivalent under $\AC^{0}$ many-one reductions.
\end{corollary}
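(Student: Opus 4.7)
The plan is to assemble the ten-way equivalence by chaining together the results already proved in the preceding sections. By $\corref{lfm}$, the six problems $\CCV$, $\CCVN$, $\TVLFMM$, $\VLFMM$, $\TLFMM$, and $\LFMM$ already form a single $\AC^0$ many-one equivalence class, so only the four problems $\TCV$, $\MOSM$, $\WOSM$, and $\SMP$ still need to be folded in.

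First, $\theoref{tcv}$ directly yields $\TCV \equiv \CCV$, placing $\TCV$ in the class. For the two decision problems $\MOSM$ and $\WOSM$, $\theoref{me}$ supplies the hardness reductions $\TLFMM \mred^{\AC^0} \MOSM$ and $\TLFMM \mred^{\AC^0} \WOSM$, while $\theoref{mosm}$ supplies the membership reductions $\MOSM \mred^{\AC^0} \CCVN$ and $\WOSM \mred^{\AC^0} \CCVN$. Combined with $\CCVN \equiv \CCV$ from $\corref{lfm}$, these close the cycle for $\MOSM$ and $\WOSM$.

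The only subtle case is $\SMP$, which is a total search problem, so we must invoke the search-problem reduction definition from Section~\ref{sec:reductions}. The forward direction is immediate: $\theoref{me}$ already gives $\TLFMM \mred^{\AC^0} \SMP$. For the reverse direction, I would revisit the construction underlying $\theoref{mosm}$ and observe that the three-valued comparator circuit produced there in fact computes the entire matrix pair $(\MM,\WW)$, not just the single bit corresponding to a designated pair. From these matrices the man-optimal stable marriage can be read off in $\AC^0$ by matching each man $m$ with the unique woman $\pi_i(m)$ such that $\MM(m,\pi_i(m)) = 1$ and either $i = n$ or $\MM(m,\pi_{i+1}(m)) \neq 1$. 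Hence the $\AC^0$ reduction function $F$ produces a $\CCV$-style instance whose many designated output wires collectively encode a stable marriage, and the post-processing function $G$ extracts the marriage, giving $\SMP \mred^{\AC^0} \CCV$ in the search-problem sense.

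The main obstacle, therefore, is the $\SMP$ direction: since the strict many-one reduction definition for decision $\CCV$ offers only one output bit, one must interpret the reduction at the search-problem level and exploit the fact that a single run of Subramanian's comparator-circuit implementation simultaneously produces all the bits of a valid solution. Once that observation is in hand, the ten problems collapse into a single $\AC^0$ many-one equivalence class, which is exactly the statement of the corollary.
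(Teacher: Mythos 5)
Your decomposition mirrors the paper's exactly: Corollary~\ref{cor:lfm} and Theorem~\ref{theo:tcv} collapse $\CCV$, $\CCVN$, $\TCV$, $\TLFMM$, $\LFMM$, $\TVLFMM$, $\VLFMM$ into one equivalence class; Theorem~\ref{theo:mosm} supplies $\MOSM \mred^{\AC^0} \CCVN$ and $\WOSM \mred^{\AC^0} \CCVN$; and Theorem~\ref{theo:me} supplies $\TLFMM \mred^{\AC^0} \MOSM$, $\WOSM$, $\SMP$. You are more explicit than the paper in isolating the $\SMP$ direction as the delicate step, which is to your credit: the paper's own proof cites only the arrow $\TLFMM \mred^{\AC^0} \SMP$ and leaves the converse reduction implicit.

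However, the fix you sketch does not quite land as stated. You describe producing a ``$\CCV$-style instance whose many designated output wires collectively encode a stable marriage,'' but $\CCV$ as formalized in Definition~\ref{d:CC} has a \emph{single} designated output wire, so such an object is not a $\CCV$ instance. Under the search-problem reduction notion of Section~\ref{sec:reductions}, a many-one reduction of $\SMP$ to the decision problem $\CCV$ would require the post-processing function $G$ to reconstruct a full stable marriage from the original input together with the \emph{one} bit $\CCV(F(X))$, which would place $\SMP$ essentially in $\FAC^0$ — far stronger than we should expect. What your construction actually establishes is that the man-optimal stable marriage lies in $\FCC$ (equivalently, an $\AC^0$-Turing reduction of $\SMP$ to $\CCV$, one query per output bit). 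That is the appropriate sense in which a \emph{search} problem is ``complete for $\CC$,'' but it is not literally the advertised $\AC^0$ many-one reduction. Since the paper's own proof shares this looseness — it never explicitly gives the $\SMP$-to-$\CCV$ arrow — your attempt to address it is welcome, but the precise mechanism needs tightening rather than the appeal to a multi-output $\CCV$ instance.
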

\begin{proof}  \corref{lfm} and \theoref{tcv} show that $\CCV$, $\CCVN$, $\TCV$, $\TLFMM$,  $\LFMM$,  $\TVLFMM$ and  $\VLFMM$  are all equivalent under $\AC^{0}$ many-one reductions.

\theoref{mosm} shows that  $\MOSM$ and $\WOSM$ are
$\AC^{0}$ many-one reducible to $\TCV$. \theoref{me} also shows that $\TLFMM$ is $\AC^0$ many-one reducible to $\MOSM$, $\WOSM$, and $\SMP$. Hence,   $\MOSM$, $\WOSM$, and $\SMP$  are equivalent to the above problems under $\AC^0$ many-one reductions. 
\end{proof}

\section{Conclusion \label{conclusion}}
Although we have shown that there are problems in relativized $\NC$ but
not in relativized $\CC$ (uniform or nonuniform), it is quite possible
that some of the standard problems in $\NC^2$ that are not known to be in
$\NL$ might be in (nonuniform) $\CC$ .
Examples are integer matrix powering and context free languages (or
more generally problems in $\LogCFL$).
Of particular interest is matrix powering over the field
$GF(2)$.  We cannot show this is in $\CC$, even though we know that
Boolean matrix powering is in $\CC$ because $\NL \subseteq \CC$.
Another example is the matching problem for bipartite graphs or
general undirected graphs, which is in $\RNC^2$
 \cite{KUW86,MVV87} and hence in nonuniform $\NC^2$.
It would be interesting to show that some (relativized) version of any of
these problems is, or is not, in (relativized) (nonuniform) $\CC$.

Let $\SucCC$ be the class of problems $p$-reducible
to succinct $\CC$ (where a description of an
exponential size comparator circuit is given using linear size
Boolean circuits).  It is easy to show that $\SucCC$ lies between
$\PSPACE$ and $\EXPTIME$, but we
we are unable to show that it is equal to either
class\footnote{Thanks to Scott Aaronson for pointing this out.}.
We are not aware of other complexity classes which appear to lie properly
between $\PSPACE$ and $\EXPTIME$.

We have defined $\CC$ in terms of uniform families comparator circuits, analogously to the way that complexity classes such as uniform $\AC^i$ and $\NC^i$ are defined. The latter also have machine characterizations: $\NC^i$ is the class of relations computable by alternating Turing machines in space $O(\log n)$ and $O(\log^i n)$ alternations, and also the class of relations computable by polynomially many processors in time $O(\log^i n)$. Similarly, $\P$ can be defined either as those relations computable in polynomial time, or as those relations computable by $\AC^0$-uniform polynomial size circuits. An important open question, appearing already in Subramanian's thesis~\cite{Sub90}, is to come up with a similar machine model for $\CC$. While we do outline some machine characterization in Theorem~\ref{t:cctm}, the machine model we use is not as natural as the ones for $\P$ and $\NC$, and so it is still an open question to come up with a natural machine model for $\CC$.

We believe that $\CC$ deserves more attention, since on the one hand it
contains interesting complete problems, on the other hand  the limitation of 
fanout restriction of comparator gates has not yet been studied outside this paper.   
Furthermore, $\CC$ provides us another research direction for separating $\NL$ from $\P$ 
by analyzing the limitation of the fanout restriction.

\appendix
\section{Proof of Proposition \ref{prop:p1}}\label{a:pointDown}

\begin{proposition}
$\CCV$ is $\AC^0$ many-one reducible to the special case in which
all comparator gates point down (or all point up).
\end{proposition}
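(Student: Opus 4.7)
The plan is to use a double-rail encoding: each wire $w_i$ of the input circuit $C$ (with $m$ wires) is represented in the constructed circuit $C'$ by two wires $w_i^+$ and $w_i^-$, initialized so that $w_i^- = \lnot w_i^+$, which is permitted because $\CCV$ allows inputs to be annotated by constants and by variables or their negations. I would lay out the positives in natural order at the top and the negatives in reverse order at the bottom, so that $w_i^+$ sits at position $i$ and $w_i^-$ at position $2m{+}1{-}i$. The invariant $w_i^- = \lnot w_i^+$ is to be maintained throughout the simulation.

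First I would verify that a down gate on $(i,j)$ with $i<j$ in $C$ can be simulated by two down gates in $C'$: one on $(w_i^+, w_j^+)$ at positions $(i,j)$, placing the maximum at $w_j^+$ below; and one on $(w_j^-, w_i^-)$ at positions $(2m{+}1{-}j,\, 2m{+}1{-}i)$, placing the maximum at $w_i^-$ below. A short de Morgan calculation using $\lnot\max(a,b)=\min(\lnot a,\lnot b)$ confirms that this preserves the invariant and faithfully simulates the gate.

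The main obstacle will be the up-gate case: in the layout above, an up gate on $(i,j)$ with $i<j$ would want the maximum at $w_i^+$ (above $w_j^+$) and at $w_j^-$ (above $w_i^-$, because of the reversal), both of which would require up-pointing comparator gates. My plan is to prepend each up gate with a small gadget of down gates and auxiliary wires initialized to $0$ and $1$ that locally interchanges $w_i^+$ with $w_i^-$ and $w_j^+$ with $w_j^-$. After this polarity flip the up gate is equivalent to a down gate in the flipped coordinates and can be simulated as above; the flip will then be undone or absorbed into the layout of the following gates. The invariant $w_i^- = \lnot w_i^+$ forces each pair $(w_i^+, w_i^-)$ to be one of $(0,1)$ or $(1,0)$, and it is precisely this restriction that should make a polarity-flip gadget realizable with only down gates, even though an unconditional swap of two arbitrary wires cannot be.

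Every original gate is thereby replaced by a constant-size block of down gates whose positions depend only on simple arithmetic in the wire indices, so $C'$ has linear size and the reduction should be computable uniformly in $\AC^0$. The dual claim (all gates point up) then follows by a symmetric construction. The step I expect to take the most care is the explicit design and verification of the polarity-flip gadget, together with the bookkeeping needed to ensure that the local flips for $w_i$ and $w_j$ do not invalidate the invariant on the other wires or on later gates involving $w_i$ or $w_j$.
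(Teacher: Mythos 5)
Your overall strategy can be made to work, but the one step you defer---the polarity-flip gadget---is essentially the whole content of the proposition, and your stated reason for believing it must exist is wrong in an instructive way. An \emph{unconditional} swap of two arbitrary wires $A$ (upper) and $B$ (lower) \emph{is} realizable with four down gates, using one auxiliary wire $t_1$ initialized to $1$ and placed above $A$ and one auxiliary wire $t_2$ initialized to $0$ and placed below $B$. Writing $a,b$ for the current values of $A,B$ and applying the down gates $(t_1,B)$, $(A,t_2)$, $(t_1,A)$, $(B,t_2)$ in that order: the first gives $t_1=b$, $B=1$; the second gives $A=0$, $t_2=a$; the third gives $A=b$; the fourth gives $B=a$. (The underlying primitive is that a down gate moves a value downward into a wire currently holding $0$, or upward into a wire currently holding $1$.) No complementarity of $a$ and $b$ is needed, so the double-rail encoding is superfluous: an up gate on $(i,j)$ can be simulated directly by a down gate on $(i,j)$ followed by a swap of wires $i$ and $j$, drawing fresh auxiliaries from a pool of $1$-wires at the top and $0$-wires at the bottom (each gadget leaves its auxiliaries holding $0$ and $1$ respectively, so they are not reusable, but $2n$ of them suffice). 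One further caution: your ``absorbed into the layout'' option is not $\AC^0$-implementable, since deciding which physical wire represents $w_i$ at a given gate would require the parity of the number of earlier up gates touching wire $i$; you must undo each flip locally.

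The paper's proof takes a different, more uniform route: it adds, for each gate of the original circuit, a complete fresh copy of all $m$ wires initialized to $0$ and placed below the previous copy, forwards every value down onto the fresh copy (a down gate into a $0$-wire), and realizes an up gate on $(x,y)$ by the three down gates $(x_0,x_1)$, $(y_0,x_1)$, $(y_0,y_1)$, which deposit $x\vee y$ on $x_1$ and $x\wedge y$ on $y_1$. That costs $O(mn)$ wires against the $O(m+n)$ of the swap-based construction, but every gate is handled by the same local pattern, the positions are trivially $\AC^0$-computable, and no swap gadget or polarity bookkeeping is required.
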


 \begin{proof}Suppose we have a gate on the left of \figref{samedirection} with the arrow pointing upward. We can construct a circuit that outputs the same values as those of $x$ and $y$, but all the gates will now point downward as shown on the right of \figref{samedirection}.
 \ifslow
 \begin{figure}[!h]
 \begin{minipage}{0.35\textwidth}
 \centering
 \small 
 $\Qcircuit @C=2em @R=1em {
 \push{x\,} & \ctrlu{1}        & \qw\\
 \push{y\,} & \ctrl{-1} & \qw  
 }$
 \end{minipage}
 \begin{minipage}{0.2\textwidth}
 \centering
 \begin{tikzpicture}
 \draw[-to,line width=1.5pt,snake=snake,segment amplitude=.5mm,
          segment length=2mm,line after snake=1mm]  (0,0) -- (2,0);
 \end{tikzpicture}
 \end{minipage}
 \begin{minipage}{0.4\textwidth}
 \centering
 \small
 $\Qcircuit @C=2em @R=0.7em {
               &\push{x_{0}} & \ctrl{2}                & \qw   & \qw           & \qw\\
               &\push{y_{0}} &\qw              & \ctrl{1}      & \ctrl{2}              & \qw\\
 \push{0}      &\push{x_{1}} & \ctrld{-1}      &\ctrld{-1}     &\qw                    & \qw\\
 \push{0}      &\push{y_{1}} & \qw             & \qw   &\ctrld{-1}             & \qw \\
 }$
 \end{minipage}
 \caption{}
 \label{fig:samedirection}
 \end{figure}
\vspace{-3mm}
 \fi

It is not hard to see that the wires $x_{1}$ and $y_{1}$ in this new comparator circuit will output the same values as the wires $x$ and $y$ respectively in the original circuit. For the general case, we can simply make copies of all wires for each layer of the comparator circuit, where each copy of a wire will be used to carry the value of a wire at a single layer of the circuit. Then apply the above construction to simulate the effect of each gate. Note that additional comparator gates are also needed to forward the values of the wires from one layer to another,
 in the same way that
 we use the gate $\seq{y_{0},y_{1}}$ to forward the value carried in wire $y_{0}$ to wire $y_{1}$ in the above construction. 
 
 To carry this out in $\AC^{0}$, one way would be to add a complete copy of all wires for every
 comparator gate in the original circuit.  Each new wire has input $0$.  For
 each original gate $g$, first put in gates copying the values to the
 new wires.  If $g$ points down, put in a copy of $g$ connecting the new
 wires, and if $g$ points up, put in the construction in
 \figref{samedirection}.
 \end{proof}

\section{Simplified proof of \texorpdfstring{$\NL\subseteq\CC$}{CC contains NL}} \label{NLinCC}

Each instance of the \textsc{Reachability} problem consists of a directed acyclic  graph $G=(V,E)$, where $V=\set{u_{0},\ldots,u_{n-1}}$, and  we want to decide if there is a path from $u_{0}$ to $u_{n-1}$. It is well-known that \textsc{Reachability} is $\NL$-complete. 
Since a directed graph can be converted by an $\AC^0$ function into a
layered graph with an equivalent reachability problem, it
suffices to give a comparator circuit construction  that solves instances
of   \textsc{Reachability} satisfying the following assumption: 
\begin{align}
\text{The graph $G$ only has directed edges of the form $(u_{i},u_{j})$, where $i<j$.} \label{eq:a1}
\end{align}

The following construction from \cite{LCY11} for showing that
$\NL\subseteq \CC$ is simpler than the one in \cite{Sub90,MS92}. Moreover
it reduces \textsc{Reachability} to $\CCV$ directly without going through some intermediate complete problem, and this was stated as an open problem in \cite[Chapter 7.8.1]{Sub90}.
The idea is to perform a depth-first search of the nodes reachable from
the source node by successively introducing $n$ pebbles into the source,
and sending each pebble along the lexicographically last pebbled path
until it reaches an unpebbled node, where it remains.
After $n$ iterations, all nodes reachable from the source are pebbled, and we can check whether the target is one of them.

\ifslow
\begin{wrapfigure}{r}{0.3\textwidth}
\vspace{-3mm}
\begin{center}
\tikzstyle{vertex}=[circle,fill=black!30,minimum size=15pt,inner sep=0pt]
\tikzstyle{edge} = [draw,line width=1pt,-to]
\begin{tikzpicture}[scale=1.2, auto,swap]
    \foreach \pos/\name/\label in {{(0,0)/v0/u_{0}}, {(1,0.5)/v1/u_{2}}, {(1,-0.5)/v2/u_{1}},{(2,0)/v3/u_{3}},						{(2,1)/v4/u_{4}}}
        \node[vertex] (\name) at \pos {$\label$};
        
    \foreach \source/ \dest  in {v0/v1,v0/v2,v1/v3,v1/v4}
        \path[edge] (\source) -- (\dest);
\end{tikzpicture}\end{center}
\vspace{-4mm}
\caption{}
\label{fig:g}
\end{wrapfigure}
\fi

We will demonstrate the construction through a simple example, where we have the directed graph in \figref{g}  satisfying the assumption \eref{a1}. We will build a comparator circuit as in \figref{NL2CC}, where the wires $\nu_0,\ldots,\nu_4$ represent the vertices $u_{0},\ldots,u_{4}$ of the preceding graph and the wires $\iota_0,\ldots,\iota_4$ are used to feed $1$-bits into the wire $\nu_0$, and from there to the other wires $\nu_{i}$ reachable from $\nu_{0}$. We let every wire $\iota_{i}$ take input 1 and every wire $\nu_{i}$ take input 0.

We next show how to construct the gadget contained in each box. For a graph with
$n$ vertices ($n=5$ in our example), the $k^{\rm th}$ gadget is
constructed as follows:
\begin{algorithmic}[1]
\STATE Introduce a comparator gate from wire $\iota_{k}$ to wire $\nu_{0}$
\FOR{$i=0,\ldots,n-1$}
\FOR{$j=i+1,\ldots,n-1$}
\STATE Introduce a comparator gate from $\nu_i$ to $\nu_j$ if  $(u_i,u_j)\in E$, or a dummy gate on $\nu_i$ otherwise.
\ENDFOR
\ENDFOR
\end{algorithmic}
Note that the gadgets are identical except for the first comparator gate.

We only use the loop structure to clarify the order the gates are added. The construction can easily be done in $\AC^{0}$ since the position of each gate can be calculated exactly, and thus all gates can be added independently from one another. Note that for a graph with $n$ vertices, we have at most $n$ vertices reachable from a single vertex, and thus we need $n$ gadgets as described above. In our example, there are at most 5 wires reachable from wire $\nu_0$, and thus we utilize the gadget 5 times. 
\ifslow
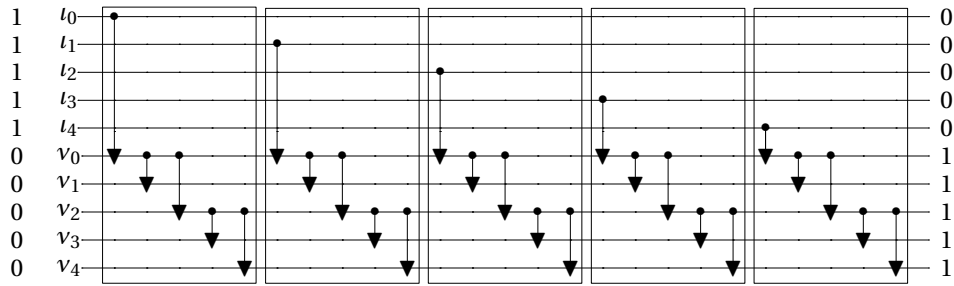
\begin{figure}[!h]
\centering
{
$\Qcircuit @C=1.4em @R=.5em {
\push{\bit{1}}&\push{\iota_0} 
	&\ctrl{4}	& \qw 	& \qw	& \qw	&\qw
	&\qw	& \qw 	& \qw	& \qw	&\qw	
	&\qw	& \qw 	& \qw	& \qw	&\qw	
	&\qw	& \qw 	& \qw	& \qw	&\qw	
	&\qw	& \qw 	& \qw	& \qw	&\qw		
	&\rstick{\bit{0}} \qw \\
\push{\bit{1}}&\push{\iota_1} 
	&	\qw	&\qw  	&\qw  	& \qw	&\qw	
	&\ctrl{3}& \qw 	& \qw	& \qw	&\qw
	&\qw	& \qw 	& \qw	& \qw	&\qw	
	&\qw	& \qw 	& \qw	& \qw	&\qw	
	&\qw	& \qw 	& \qw	& \qw	&\qw		
	&\rstick{\bit{0}} \qw\\
\push{\bit{1}}&\push{\iota_2} 
	&	\qw	&\qw  	&\qw  	& \qw	&\qw	
	&\qw	& \qw 	& \qw	& \qw	&\qw
	&\ctrl{2}& \qw 	& \qw	& \qw	&\qw
	&\qw	& \qw 	& \qw	& \qw	&\qw
	&\qw	& \qw 	& \qw	& \qw	&\qw		
	&\rstick{\bit{0}} \qw\\
\push{\bit{1}}&\push{\iota_3} 	
	& \qw	& \qw 	& \qw	& \qw	&\qw	
	&\qw	& \qw 	& \qw	& \qw	&\qw
	&\qw	& \qw 	& \qw	& \qw	&\qw
	&\ctrl{1}& \qw 	& \qw	& \qw	&\qw
	&\qw	& \qw 	& \qw	& \qw	&\qw	
	&\rstick{\bit{0}} \qw \\
\push{\bit{1}}&\push{\iota_4} 
	& \qw	& \qw 	& \qw	& \qw	&\qw	
	&\qw	& \qw 	& \qw	& \qw	&\qw
	&\qw	& \qw 	& \qw	& \qw	&\qw
	&\qw	& \qw 	& \qw	& \qw	&\qw	
	&\ctrl{1}& \qw 	& \qw	& \qw	&\qw
	&\rstick{\bit{0}} \qw \\	
\push{\bit{0}}&\push{\nu_0} 
	&\ctrld{-1}&\ctrl{0}&\ctrl{0}& \qw	
	&\qw	&\ctrld{-1}&\ctrl{0}&\ctrl{0}& \qw
	&\qw	&\ctrld{-1}&\ctrl{0}&\ctrl{0}& \qw
	&\qw	&\ctrld{-1}&\ctrl{0}&\ctrl{0}& \qw
	&\qw	&\ctrld{-1}&\ctrl{0}&\ctrl{0}& \qw
	&\qw	&\rstick{\bit{1}} \qw \\
\push{\bit{0}}&\push{\nu_1} 
	& \qw 	&\ctrld{-1}& \qw& \qw	&\qw
	& \qw 	&\ctrld{-1}& \qw& \qw	&\qw	
	& \qw 	&\ctrld{-1}& \qw& \qw	&\qw	
	& \qw 	&\ctrld{-1}& \qw& \qw	&\qw	
	& \qw 	&\ctrld{-1}& \qw& \qw	&\qw		
	&\rstick{\bit{1}} \qw \\
\push{\bit{0}}&\push{\nu_2} 
	& \qw 	&\qw 	&\ctrld{-2}&\ctrl{0}&\ctrl{0}
	& \qw 	&\qw 	&\ctrld{-2}&\ctrl{0}&\ctrl{0}
	& \qw 	&\qw 	&\ctrld{-2}&\ctrl{0}&\ctrl{0}
	& \qw 	&\qw 	&\ctrld{-2}&\ctrl{0}&\ctrl{0}
	& \qw 	&\qw 	&\ctrld{-2}&\ctrl{0}&\ctrl{0}
	&\rstick{\bit{1}} \qw \\
\push{\bit{0}}&\push{\nu_3} 
	&	\qw	&\qw  	&\qw  	&\ctrld{-1}&\qw
	&	\qw	&\qw  	&\qw  	&\ctrld{-1}&\qw  
	&	\qw	&\qw  	&\qw  	&\ctrld{-1}&\qw  
	&	\qw	&\qw  	&\qw  	&\ctrld{-1}&\qw  
	&	\qw	&\qw  	&\qw  	&\ctrld{-1}&\qw  	
	&\rstick{\bit{1}} \qw\\
\push{\bit{0}}&\push{\nu_4} 
	& \qw 	&\qw 	&\qw 	&\qw	&\ctrld{-2}
	& \qw 	&\qw 	&\qw 	&\qw	&\ctrld{-2}
	& \qw 	&\qw 	&\qw 	&\qw	&\ctrld{-2}
	& \qw 	&\qw 	&\qw 	&\qw	&\ctrld{-2}	
	& \qw 	&\qw 	&\qw 	&\qw	&\ctrld{-2}
	&\rstick{\bit{1}} \qw 
\gategroup{1}{3}{10}{7}{1em}{-}
\gategroup{1}{8}{10}{12}{1em}{-}
\gategroup{1}{13}{10}{17}{1em}{-}
\gategroup{1}{18}{10}{22}{1em}{-}
\gategroup{1}{23}{10}{27}{1em}{-}
}$}
\caption{A comparator circuit that solves \textsc{Reachability}. (The dummy gates are omitted.)}
\label{fig:NL2CC}
\end{figure}
\fi

The successive gadgets in the boxes each introduce a `pebble' (i.e. a
1 bit) which ends up at the next node in the depth-first search
(i.e. its wire will now carry 1) and is thus excluded from the search of
the gadgets that follow. For example, the gadget from the left-most dashed box in  \figref{NL2CC} will move a value 1 from wire $\iota_0$ to wire $\nu_0$ and from wire $\nu_0$ to wire $\nu_1$. This essentially ``marks'' the wire $\nu_1$ since we cannot move this value 1 away from $\nu_1$, and thus $\nu_{1}$ can no longer receive any new incoming 1. Hence, the gadget from the second box in \figref{NL2CC} will repeat the process of finding the lex-first maximal path  from $v_0$ to the remaining (unmarked) vertices. These searches end when all vertices reachable from $v_0$ are marked.
\begin{theorem}[Feder \cite{Sub90}] \label{theo:NL2CCV}
$\NL\subseteq\CC$. 
\end{theorem}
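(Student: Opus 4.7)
The plan is to exhibit an explicit $\AC^0$ many-one reduction from the $\NL$-complete \textsc{Reachability} problem to $\CCV$. As a preprocessing step I would first reduce \textsc{Reachability} on arbitrary directed graphs to \textsc{Reachability} on layered DAGs; this is a standard $\AC^0$ transformation, so I may assume the instance satisfies assumption~\eqref{eq:a1}, i.e.\ every edge $(u_i,u_j)$ has $i<j$. Under this assumption any path from $u_0$ has length at most $n-1$ and visits each vertex at most once, which will bound the number of ``rounds'' the comparator circuit needs.

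The core construction is a kind of depth-first-search-by-pebbling, implemented as a comparator circuit on $2n$ wires: ``input'' wires $\iota_0,\dots,\iota_{n-1}$ initialized to $1$, and ``vertex'' wires $\nu_0,\dots,\nu_{n-1}$ initialized to $0$, with designated output wire $\nu_{n-1}$. The circuit consists of $n$ structurally identical gadgets; the $k$-th gadget begins with a comparator $\iota_k \to \nu_0$ (injecting a fresh pebble at the source) and then scans pairs $(i,j)$ with $i<j$ in lexicographic order, placing a comparator $\nu_i \to \nu_j$ whenever $(u_i,u_j)\in E$ and a dummy gate otherwise. The dummy gates play no logical role but they make the position of every comparator a simple arithmetic function of $k,i,j$, so the circuit description is computable in $\AC^0$.

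Correctness I would prove by induction on $k$, with invariant: after the $k$-th gadget has been executed, the set of wires $\nu_\ell$ currently carrying $1$ is exactly the first $k+1$ vertices visited by the canonical DFS from $u_0$ that, at every branching node, takes the largest-indexed unpebbled out-neighbour. The key observation is that inside a single gadget, a pebble sitting on $\nu_i$ is pushed to $\nu_j$ by the gate $\nu_i\to\nu_j$ exactly when $\nu_j$ is still $0$; once all out-neighbours of the pebble's current location are already pebbled, the pebble can no longer move, and it sits where DFS would next place it. Because the monotonicity assumption~\eqref{eq:a1} forces every comparator to move the pebble strictly forward, the new pebble never disturbs any pebble placed by an earlier gadget.

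The main technical obstacle will be the gate-by-gate bookkeeping inside one gadget, particularly arguing that the lexicographic sweep over $(i,j)$ faithfully models one step of DFS rather than some other traversal, and that previously frozen pebbles really are untouched. Once this local analysis is in hand the global claim is immediate: after all $n$ gadgets the wire $\nu_{n-1}$ carries $1$ iff $u_{n-1}$ is reachable from $u_0$, so evaluating the constructed circuit solves the original instance. Since the whole construction has $O(n^3)$ gates whose positions are trivially computable, the reduction lies in $\AC^0$, giving $\NL \subseteq \CC$ by Lemma~\ref{l:ccacmo}.
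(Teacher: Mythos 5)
Your construction is exactly the paper's (Appendix~\ref{NLinCC}): layer the graph so that every edge goes to a strictly higher index, use wires $\iota_0,\dots,\iota_{n-1}$ initialized to~$1$ and $\nu_0,\dots,\nu_{n-1}$ initialized to~$0$, and run $n$ identical gadgets, the $k$th starting with $\iota_k\to\nu_0$ followed by the lexicographic sweep of gates $\nu_i\to\nu_j$ (with dummies padding the non-edges so positions are computable in $\AC^0$). The size and uniformity analysis is also fine.

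The correctness invariant you propose is, however, wrong on two counts. First, the pebble on $\nu_i$ is pushed to the \emph{smallest} $j>i$ with $(u_i,u_j)\in E$ and $\nu_j=0$, not the largest: $j$ is scanned upward from $i+1$, and the pebble moves at the first eligible gate, after which $\nu_i=0$ so the later gates $(i,j')$ are inert. Second, the pebbles do not settle in DFS preorder, so the claim ``after the $k$-th gadget the pebbled set is the first $k+1$ vertices visited by DFS'' fails already at $k=0$: the first pebble does not stop at $u_0$, it slides all the way down the leftmost branch to a sink (more precisely, to the node that would \emph{finish} first in a DFS that prefers the smallest out-neighbour; the pebbling order tracks the DFS finishing/post-order, not the visiting order). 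For example, with edges $u_0\to u_1$, $u_0\to u_2$, $u_1\to u_3$, $u_2\to u_3$, gadget~$0$ pebbles $\nu_3$, then gadget~$1$ pebbles $\nu_1$, gadget~$2$ pebbles $\nu_2$, and gadget~$3$ pebbles $\nu_0$.

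A cleaner way to close the argument, which avoids having to characterize the exact settling order, is: (a) within a gadget a pebble can only enter $\nu_i$ via gates $(i',i)$ with $i'<i$, all of which precede every gate $(i,j)$, so a $1$ that leaves $\nu_i$ never returns within that gadget; hence each gadget either leaves the $\nu$-configuration unchanged or strictly increases the set of pebbled $\nu$-wires by exactly one. (b) After at most $n$ gadgets we therefore reach a fixed point. (c) At a fixed point, the injection $\iota_k\to\nu_0$ must be inert, so $\nu_0=1$; and for every edge $(u_i,u_j)$ the gate $\nu_i\to\nu_j$ must be inert, so $\nu_i=1\Rightarrow\nu_j=1$. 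Hence the pebbled set is closed under out-edges and contains $u_0$, and since pebbles only ever travel along edges of~$G$, the pebbled set is exactly the set of vertices reachable from $u_0$. This is the argument I would substitute for your induction on the DFS order.
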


\bibliographystyle{plain}
\bibliography{wphp}

\paragraph*{Acknowledgment} We would like to thank Yuli Ye for his contribution in the early parts of this research. A portion of this work was done when the second and third authors received funding from the [European Community's] Seventh Framework Programme [FP7/2007-2013] under grant agreement n$^{\rm o}$ 238381. This work was also supported by the Natural Sciences and Engineering Research Council of Canada.

\end{document}